\newtheorem{theorem}{Theorem}[section]
\newtheorem{lemma}[theorem]{Lemma}
\theoremstyle{definition}
\newtheorem{remark}{Remark}
\title[DSM for Inverse Scattering Using Far-Field Data]
      {A Direct Sampling Method \\
 for Inverse Scattering Using Far-Field Data}
\author[Jingzhi Li and Jun Zou]{}
\subjclass{Primary: 35R30, 41A27, 78A46}
\keywords{inverse acoustic scattering, direct sampling method,
indicator function, far-field data}
\thanks{The first author was supported by the NSF of China. (No.
11201453 and 91130022). The second author was substantially
supported by Hong Kong RGC grants (Projects 405110 and 404611).}
\begin{document}
\maketitle

\centerline{\scshape Jingzhi Li}
\medskip
{\footnotesize
 \centerline{Faculty of Science,
South University of Science and Technology of China}
   \centerline{ Shenzhen, 518055, P. R.
China}
} 

\medskip

\centerline{\scshape Jun Zou}
\medskip
{\footnotesize
 \centerline{Department of Mathematics, The
Chinese University of Hong Kong}
   \centerline{Shatin, Hong Kong}
} 

\bigskip

 \centerline{(Communicated by the associate editor name)}

\begin{abstract}
This work is concerned with a direct sampling method (DSM)
for inverse acoustic scattering problems using far-field data.
Using one or few incident waves, the DSM provides quite
reasonable profiles of scatterers in time-harmonic inverse acoustic
scattering without a priori knowledge of either the physical properties
or the number of disconnected components of the scatterer. We shall
first present a novel derivation of the DSM using far-field data,
then carry out a systematic evaluation of the performances and distinctions
of the DSM using both near-field and far-field data. A new
interpretation from the physical perspective is provided based on some
numerical observations.
It is shown from a variety of numerical experiments that the method
has several interesting and promising potentials:
a) ability to identify not only medium scatterers, but also obstacles,
and even cracks, using measurement data from one or few incident directions,
b) robustness with respect to large noise, and c) computational efficiency
with only inner products involved.

\smallskip
\noindent {\bf MSC (AMS 2010)}: 35R30, 41A27, 78A46
\end{abstract}


\section{Introduction}

The capability of effectively retrieving the location and/or
geometrical features of unknown scatterers from the knowledge of the
scattered wave field (near-field or far-field data) is of paramount
importance in many practical applications such as underground mine
detection, geophysical exploration in oil industry, detection of
defects or cracks in nondestructive testing, target detection using
radar or sonar systems and ultrasound imaging in biomedical
equipments \cite{xu_siam,CoK98}. Both locations and geometrical
shapes of scatterers are often the final target in the applications.
Many quantitative methods have been developed for such purposes
\cite{AmK04,BaL05,CoK98,Hoh01,IJZ12r,Som06,BBA99}. But there is a
crucial step for these imaging processes, to find some reasonable
estimated locations and shapes as their initial computational sampling 
domains, otherwise the imaging processes may not work or work with huge computational
efforts.

A promising direct sampling method (DSM) was proposed 
recently in \cite{IJZ12} to recover the inhomogeneous medium scatterers
from near-field measurement data for inverse acoustic medium scattering problems.
Compared with other sampling-type schemes (see, e.g., \cite{Che01,KiG10,Pot06}
for the detailed surveys) such as multiple signal classification
(MUSIC) \cite{ChZ09,Dev99,GrD04,HSZ06,Sch86} and linear sampling
method (LSM) \cite{CCM11,CoK96,Kir98}, a distinct feature of the
DSM is its capability of depicting the profile of medium scatterers
using one or very few incident wave fields, through the computationally
very cheap inner products of the measured data and the fundamental
solution in the homogeneous background medium.

In this work, we will further explore in this direction by
generalizing the DSM from the knowledge of near-field data to
far-field data of the scattered waves for inverse medium and obstacle 
scattering problems.
Extensive numerical experiments are done to investigate
systematically the behavior of the DSM. An important new observation
is that the DSM is able to recover not only medium scatterers, but
also obstacles, and even cracks using one or two incident waves.

The full inverse scattering problem (ISP) is to determine
both the shapes and the physical profiles of the scatterers, hence
may require the measured data at many frequencies, not the one from
a single frequency as considered in this work.
It is well known that the ISP is highly nonlinear, and the first key step
for the numerical treatment of a nonlinear problem is to work out
a good initial guess for its approximate solution.
As DSM is computationally very cheap and works with the
data from only one or few incident fields, it can naturally serve as a fast, simple
and effective alternative to existing numerical tools for locating
a reliable approximate position of the unknown scatterers, which can
then serve as a good initial guess in any existing method (see, e.g., \cite{BaL05,CoK98,Hoh01,IJZ12r})
for achieving a more accurate estimate of the scatterer support and the inhomogeneity
distribution. 
As we shall see numerically, the DSM can provide a very
reliable location of each individual scatterer component so that one
may start with a much smaller sampling region in a more accurate but
computationally more demanding method.
The reduction of the sizes of the initial sampling regions for unknown
scatterers may save us an essential fraction of the computational
efforts in the entire reconstruction process by most existing methods.
Other approaches are available in the literature, 
which do not require an a priori good initial guess of the scatterers, e.g., 
the globally convergent numerical method. 
We refer to the monograph
\cite{BeK12} and many references therein for the detailed description of the method and 
its theory for ISPs 
corresponding to some hyperbolic equations. 
Because of the time dependence, the ISPs of \cite{BeK12} can be
considered as the ones with the data given at many frequencies.

Inverse scattering reconstruction methods
using one single incident wave, or simply \emph{one-shot} methods, date back to a long-standing open
problem in the inverse scattering  community, namely whether one can determine
the unknown scatterer by using the near-field/far-field data from only one or
several incident waves. Theoretically, the results on
unique identifiability have been understood only partially, e.g.,  for some
special classes of scatterers \cite{CoK98,CoS83}, or for the class of
polygonal or polyhedral scatterers
\cite{AlR05,ElY08,LYZ07,LZZ09,LiZ06}.
Armed with the aforementioned uniqueness results,
{one-shot} methods have been studied widely in the past few years,
e.g., in \cite{Pot10,Gri11,IJZ12,Han12}.
This work extends the one in \cite{IJZ12}  to make a more systematical
investigation of such a
promising one-shot method applied to inverse scattering problems using near-field or far-field data
for many possible cases of scatterers: obstacles, inhomogeneous media and cracks  or their
combinations. This study reveals  more potentials of the one-shot method
and provides some physical hints to answer the pending open problem from the numerical perspective.
We emphasize that the direct sampling method in the current work is based on
the same indicator function as that in the orthogonality sampling method developed and studied
in \cite{Pot10,Gri11}, but we shall present a different approach and motivation
to derive this method and justify its effectiveness, in particular
removing the ``smallness'' assumption that is crucial
to the derivation of the method in \cite{Pot10,Gri11}.


The paper is organized as follows. In Section\ \ref{sec:review:isp}
a brief review of inverse scattering problems is presented, along
with some useful notations and identities. Section\ \ref{sec:dsm}
describes the mathematical motivation of the DSM using far-field data
and proposes a new indicator function.  Section\ \ref{sec:numerics}
provides extensive numerical simulations to evaluate the performance
of the DSM using near-field or far-field data from obstacles, media
and cracks. In addition, we shall provide
a physical interpretation of the DSM, and compare the major features
of the DSMs using near-field and far-field data.
Some concluding remarks will be given in Section\ \ref{sec:Conclusion}.

\section{A brief review of inverse scattering problems \label{sec:review:isp}}

In this section we shall briefly describe the time harmonic inverse
obstacle and medium scattering problem using near-field or far-field
measurements \cite{CoK83,CoK98}. Consider a homogeneous background
space $\mathbb{R}^{N}$ ($N=2,3$) that contains some scatterer components
such as obstacles or inhomogeneous media, or both, occupying a bounded
domain $D$. Let $u^{\mathrm{inc}}=\exp(\mathrm{i}kx\cdot d)$ be
an incident plane wave, with the incident direction $d\in\mathbb{S}^{N-1}$
and the wave number $k$, and $u=u^{\mathrm{inc}}+u^{s}$ be the total
field formed by the incident and scattered fields. Then the total
field $u$ induced by the obstacles satisfies the Helmholtz equation
\begin{equation}
\Delta u+k^{2}u=0\quad\mathrm{in}\quad\mathbb{R}^{N}\setminus D,\label{eq:iosp}
\end{equation}
 or induced by the inhomogeneous medium scatterers satisfies
\begin{equation}
\Delta u+k^{2}n^{2}(x)u=0\quad\mathrm{in}\quad\mathbb{R}^{N},\label{eq:imsp}
\end{equation}
 where $n(x)$ is the refractive index. To account for the absorbing medium, the
refractive index can be modeled by the complex form
\begin{equation}
n^{2}(x)=n_{1}(x)+\mathrm{i}n_{2}(x).\label{eq:absorbing:n}
\end{equation}
The models above describe not only time-harmonic acoustic wave propagation,
but also electromagnetic wave propagation in either the transverse
magnetic or transverse electric models \cite{CoK98,IJZ12}.

\begin{remark}
Note that the obstacle scatterer can be viewed as the limiting case
of the medium scatterer with vanishing or singular material
properties. For example, an acoustic sound-soft obstacle is a
limiting case of \eqref{eq:imsp} as $n_{2}\to+\infty$; see, e.g.,
\cite[Eq. (4.4)]{KOVW10} or \cite[Sect. 4]{LLS12}. One may also
refer to \cite{LS12} for the sound-hard limiting case for general
dimensions. Therefore, the indicator functions derived in the sequel
are applicable to both inverse obstacle and medium scattering
problems, although our derivations are based only on the medium
scattering case.
\end{remark}

In the rest of this section we introduce some basic
notation and fundamental functions that will be needed in the
subsequent discussions.
First, we define a coefficient function,
$\eta=\left(n^{2}-1\right)k^{2}$, which characterizes the inhomogeneity
of the concerned media and is supported in the scatterer
$D\subset\mathbb{R}^{d}$. Then we define function $I=\eta u$, which is called
the induced current by the inhomogeneous media. Let $G(x,y)$ be the fundamental solution
to the Helmholtz equation in the homogeneous background, that can be
represented by (cf.~\cite{CoK83,CoK98})
\begin{equation}
G(x,y)=\begin{cases}
{\displaystyle \text{\ensuremath{\frac{\mathrm{i}}{4}}}H_{0}^{(1)}(k|x-y|)} & \mbox{for} ~~N=2;\\
{\displaystyle \frac{1}{4\pi}\frac{\exp(\mathrm{i}k|x-y|)}{k|x-y|}} & \mbox{for} ~~N=3
\end{cases}\label{eq:G}
\end{equation}
where $H_{0}^{(1)}$ refers to the Hankel function
of the first kind and zeroth-order. With the help of the asymptotic
properties of the fundamental solution $G$ \cite[Eqs.~(2.14) and (3.63)]{CoK98},
we have for $N=2,3$ that
\begin{equation}
u^{s}(x)=\frac{\exp(\mathrm{i}k|x|)}{|x|^{(N-1)/2}}\left\{ u^{\infty}(\hat{x})+O(\frac{1}{|x|})\right\} \qquad\mbox{as}\quad\,|x|\to\infty\,,\label{eq:us:uinf:relation}
\end{equation}
with $\hat{x}=x/|x|\in\mathbb{S}^{N-1}$. The scattered
near-field $u^{s}$ and the far-field $u^{\infty}$ in (\ref{eq:us:uinf:relation})
has the following very convenient representations (cf.~\cite[Eqs.,(8.12)  and (8.27)]{CoK98}):
\begin{eqnarray}
u^{s}(x) & = & \int_{D}G(x,y)I(y)\,\mathrm{d}y\,,\label{eq:us}\\
u^{\infty}(\hat{x}) & = & \int_{D}G^{\infty}(\hat{x},y)I(y)\,\mathrm{d}y\,,\label{eq:uinf}
\end{eqnarray}
 where the far-field pattern associated with the fundamental solution
$G$ is given by
\begin{equation}
G^{\infty}(\hat{x},y)=\begin{cases}
{\displaystyle \frac{\exp(\mathrm{i}\pi/4)}{\sqrt{8k\pi}}\exp(-\mathrm{i}k\hat{x}\cdot y)}, & N=2;\\
\\
{\displaystyle \frac{1}{4\pi}\exp(-\mathrm{i}k\hat{x}\cdot y)}, & N=3.
\end{cases}\label{eq:Ginf}
\end{equation}

\section{Direct sampling method using far-field data\label{sec:dsm}}

In this section we shall derive the DSM using far-field data.
Like many other sampling-type methods such as LSM, MUSIC and factorization
methods \cite{CoK96,KiG10,Sch86}, the essence of the DSM is to construct
an indicator function which has significantly different behaviors
inside and outside the scatterers, e.g., the indicator function of
the LSM blows up outside the scatterer but remains finite within the
scatterer. Let $\Gamma$ be the surface where the near-field data
is measured, $\Omega$ be a domain contained inside $\Gamma$ such
that the scatterer $D$ lies in $\Omega$. For any sampling point
$x_{p}\in\Omega$, the indicator function of the DSM using near-field
data is given by (cf.\,\cite{IJZ12})
\begin{equation}
\Phi(x_{p})=\frac{\left|\left\langle u^{s},\, G(\cdot,x_{p})\right\rangle _{L^{2}(\Gamma)}\right|}{\left\Vert u^{s}\right\Vert _{L^{2}(\Gamma)}\left\Vert G(\cdot,x_{p})\right\Vert _{L^{2}(\Gamma)}}.\label{eq:indicator:near}
\end{equation}
 For the motivation and derivations of \eqref{eq:indicator:near}
 using near-field data we refer the reader to \cite{IJZ12}.

We are now going to derive an indicator function for the far-field data,
a counterpart of \eqref{eq:indicator:near} for the near-field data. To do so, we first derive
 the following key lemma about the $L^{2}$-correlation
 measure of the far-field data of two monopoles on the unit sphere.
\begin{lemma}
\label{lem:Ginf} For the far-field patterns $G^{\infty}(\hat{x},y)$
associated with the fundamental solutions $G(x,y)$ (see \eqref{eq:Ginf}
and \eqref{eq:G}), the following correlation holds for any two points
$x_{j}$ and $x_{p}$ in the sampling domain $\Omega$:
\begin{equation}
\int_{\mathbb{S}^{N-1}}G^{\infty}(\hat{x},x_{j})\overline{G^{\infty}}(\hat{x},x_{p})\,\mathrm{d}s(\hat{x})=C\,\Im\left(G(x_{p},x_{j})\right)\label{eq:Ginf:identity}
\end{equation}
 where $C$ is a constant depending only on the wave number $k$ and
the dimension $N$. \end{lemma}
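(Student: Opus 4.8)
The plan is to evaluate the spherical integral on the left-hand side in closed form and recognize it as a classical integral representation of a Bessel function, which coincides with $\Im(G(x_p,x_j))$ up to a constant depending only on $k$ and $N$. First I would form the pointwise product of the two far-field patterns using the explicit expressions in \eqref{eq:Ginf}. The scalar prefactors multiply to the squared modulus of the common prefactor (so the phase $\exp(\mathrm{i}\pi/4)$ present for $N=2$ cancels against its conjugate), while the two plane-wave exponentials combine into a single one; concretely,
\[
G^{\infty}(\hat{x},x_{j})\,\overline{G^{\infty}}(\hat{x},x_{p})
= c_{N}\,\exp\bigl(\mathrm{i}k\,\hat{x}\cdot(x_{p}-x_{j})\bigr),
\]
with $c_{2}=1/(8k\pi)$ and $c_{3}=1/(16\pi^{2})$. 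The problem therefore reduces to computing $\int_{\mathbb{S}^{N-1}}\exp(\mathrm{i}k\,\hat{x}\cdot z)\,\mathrm{d}s(\hat{x})$ with $z=x_{p}-x_{j}$.

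Second, I would invoke the standard rotationally invariant integral representations of Bessel functions, noting that the integral depends on $z$ only through $|z|$. For $N=2$ this is the classical formula $\int_{\mathbb{S}^{1}}\exp(\mathrm{i}k\,\hat{x}\cdot z)\,\mathrm{d}s(\hat{x})=2\pi J_{0}(k|z|)$, and for $N=3$ the spherical-Bessel identity $\int_{\mathbb{S}^{2}}\exp(\mathrm{i}k\,\hat{x}\cdot z)\,\mathrm{d}s(\hat{x})=4\pi\,\sin(k|z|)/(k|z|)$. Both can be obtained from the Jacobi--Anger expansion (equivalently the Funk--Hecke formula) or from a direct computation in polar (respectively spherical) coordinates about the axis $z/|z|$.

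Third, I would match these against $\Im(G)$ read off from \eqref{eq:G}. For $N=3$ one has $\Im(G(x_{p},x_{j}))=\sin(k|z|)/(4\pi k|z|)$ immediately, so the left-hand side already equals $\Im(G)$ and $C=1$. For $N=2$ I would write $H_{0}^{(1)}=J_{0}+\mathrm{i}Y_{0}$, whence $\tfrac{\mathrm{i}}{4}H_{0}^{(1)}$ has imaginary part $\tfrac{1}{4}J_{0}(k|z|)$; combining this with $c_{2}\cdot 2\pi=1/(4k)$ gives the identity with $C=1/k$. In both cases $C$ depends only on $k$ and $N$, as asserted.

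There is no deep obstacle here; the only real care needed is in recalling the two sphere-integral formulas accurately and in bookkeeping the normalization constants hidden in the definitions of $G^{\infty}$ and $G$. In particular I would track the extra factor of $k$ carried by the two-dimensional far-field prefactor, since it is precisely this factor that produces the value $C=1/k$ for $N=2$ as opposed to $C=1$ for $N=3$.
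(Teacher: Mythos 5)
Your proposal is correct and follows essentially the same route as the paper: both reduce the left-hand side to a constant times $\int_{\mathbb{S}^{N-1}}\exp(\mathrm{i}k\hat{x}\cdot(x_{p}-x_{j}))\,\mathrm{d}s(\hat{x})$ and then evaluate this via the Jacobi--Anger expansion for $N=2$ and the Funk--Hecke formula for $N=3$, matching the result against $\Im(G)$. Your only addition is that you pin down the constant explicitly ($C=1/k$ for $N=2$, $C=1$ for $N=3$ with the paper's normalization of $G$), which the paper leaves implicit; the bookkeeping checks out.
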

\begin{proof}
By the definition of $G^{\infty}$, it is easy to verify that
\begin{equation}
\int_{\mathbb{S}^{N-1}}G^{\infty}(\hat{x},x_{j})\overline{G^{\infty}}(\hat{x},x_{p})\,\mathrm{d}s(\hat{x})=C_{1}\int_{\mathbb{S}^{N-1}}\exp(\mathrm{i}k\hat{x}\cdot(x_{p}-x_{j}))\,\mathrm{d}s(\hat{x}),\label{eq:tmp1}
\end{equation}
 where $C_{1}$ is a constant depending only on the wave number $k$
and the dimension $N$.

We carry out the proof in two steps, one for $N=2$, and the other
for $N=3$.

\textbf{Step 1}.  For $N=2$, \eqref{eq:Ginf:identity} is a special
case of Graf's addition theorem \cite[Eq.~(9.1.79)]{AS65} by showing
an integral representation of $J_{n}(kr)\exp(\mathrm{i}n\theta)$ for
an integer $n$. To see this, we can write the right-hand side of
\eqref{eq:tmp1} in polar coordinates
\begin{equation}
\int_{\mathbb{S}^{1}}\exp(\mathrm{i}k\hat{x}\cdot(x_{p}-x_{j}))\,\mathrm{d}s(\hat{x})=\int_{-\pi}^{\pi}\exp\left(\mathrm{i}k(x_{p}-x_{j})\cdot\left(\begin{array}{c}
\cos\theta\\
\sin\theta
\end{array}\right)\right)\,\mathrm{d}\theta.\label{eq:FH2d}
\end{equation}
 It suffices to show that 
\begin{equation}
J_{n}(kr)\exp(\mathrm{i}n\theta)=\frac{(-\mathrm{i})^{n}}{2\pi}\int_{-\pi}^{\pi}\exp\left(\mathrm{i}kx\cdot\left(\begin{array}{c}
\cos\phi\\
\sin\phi
\end{array}\right)\right)\exp(\mathrm{i}n\phi)\,\mathrm{d}\phi,\label{eq:Jnkr}
\end{equation}
with $r=|x|$, and $x=(r\cos\theta,\, r\sin\theta)$. Therefore we can arrive at \eqref{eq:Ginf:identity}
in two dimensions, by setting $n=0$,
 replacing $x$ by $x_p - x_j$  in
 \eqref{eq:Jnkr} and using the fact that 
 $$
 C \Im \left( G(x_{p},x_{j}) \right) =\Im \left({\mathrm{i}} H_{0}^{(1)}(k|x_{p}-x_{j}|) \right) =J_0(k|x_{p}-x_{j}|)\,.
 $$
The formula \eqref{eq:Jnkr} expresses regular cylindrical wave
functions as a superposition of plane waves.
To see \eqref{eq:Jnkr}, we first recall the Jacobi expansion \cite[Eq.~(22)]{Wat44}
\begin{equation}
\exp(\mathrm{i}kr\cos\psi)=\sum_{m=-\infty}^{\infty}\mathrm{i}^{m}J_{m}(kr)\exp(\mathrm{i}m\psi).\label{eq:Jacobi:expansion}
\end{equation}
 By replacing $\psi$ by $\theta-\phi$ in \eqref{eq:Jacobi:expansion},
then multiplying it by $\exp\left(-\mathrm{i}n(\theta-\phi)\right)$
and integrating over $\left(-\pi,\,\pi\right)$ with respect to
$\phi$ we obtain
\begin{align}
\mathrm{i}^{n}J_{n}(kr) &
=\frac{1}{2\pi}\int_{-\pi}^{\pi}\exp\left(\mathrm{i}kr\cos(\theta-\phi)\right)\exp\left(-\mathrm{i}n(\theta-\phi)\right)\,\mathrm{d}\phi,\label{eq:imJm}
\end{align}
 where we have employed the simple fact that
 $\int_{-\pi}^{\pi}\exp(\mathrm{i}n\psi)\,\mathrm{d}\psi=2\pi$ for $n=0$ and $0$
 for $n\ne 0$.

Now we can rewrite \eqref{eq:imJm} as
\begin{equation}
\mathrm{i}^{n}J_{n}(kr)=\frac{\exp\left(-\mathrm{i}n\theta\right)}{2\pi}\int_{-\pi}^{\pi}\exp\left(\mathrm{i}kr\cos\phi\cos\theta+r\sin\phi\sin\theta)\right)\exp(\mathrm{i}n\phi)\,\mathrm{d}\phi,
\end{equation}
 which reproduces \eqref{eq:Jnkr} by a simple rearrangement.

\textbf{Step 2}. For $N=3$, we use the following special
variant of the Funk-Hecke formula (cf.\ \cite[Eq. 2.44]{CoK98} and
\cite[p. 29]{Mul97})
\begin{equation}
\int_{\mathbb{S}^{2}}\exp(-\mathrm{i}kx\cdot\hat{z})Y_{n}(\hat{z})\,\mathrm{d}s(\hat{z})=\frac{4\pi}{\mathrm{i}^{n}}j_{n}(k|x|)Y_{n}(\hat{x}),\quad x\in\mathbb{R}^{3},\, r>0\label{eq:FH:3d}
\end{equation}
for spherical harmonics $Y_{n}(\cdot)$ of order $n$ and spherical
Bessel functions $j_{n}(\cdot)$ of order $n$.
Now we set $n=0$  and replace $x$ by
$x_{p}-x_{j}$ in \eqref{eq:FH:3d}.  Being aware of the lowest order
spherical harmonics $Y_{0}(\hat{x})\equiv1/\sqrt{4\pi}$, which is a
constant and thus can be dropped from both sides of the equation.
Since $j_{0}(t)=\sin(t)/ t$, we obtain that
$$
C \Im \left( G(x_{p},x_{j}) \right) =  \Im \left(
\frac{\exp(\mathrm{i}k|x_{p}-x_{j}|)}{k|x_{p}-x_{j}|} \right) =
j_0(k|x_{p}-x_{j}|)
$$
in three dimensions.  This completes the proof.
\end{proof}

Next, we divide the sampling
domain $\Omega$ enclosing the scatterer $D$ into a set of small
elements $\left\{ \tau_{j}\right\} $, i.e., squares in 2D or cubes
in 3D. Then we approximate the integral relation \eqref{eq:uinf}
by the rectangular quadrature rule on each element:
\begin{equation}
u^{\infty}(\hat{x})=\int_{\Omega}G^{\infty}(\hat{x},y)I(y)\,\mathrm{d}y\,\approx\sum_{j}w_{j}G^{\infty}(\hat{x},y_{j}),\label{eq:uinf:approx}
\end{equation}
 where the summation is over all elements $\tau_j$ which intersect the sampling
 domain $\Omega$, the weight $w_j$ is given by $|\tau_{j}|I_{j}$,  with $|\tau_{j}|$
being the area/volume of the $j$-th element $\tau_{j}$ ($N=2$,
$3$) and $I_{j}$ the evaluation of $I(x)$ at the center of $\tau_{j}$.

%
Now multiplying \eqref{eq:uinf:approx} by $\overline{G^{\infty}}(\hat{x},x_{p})$
for any sampling point $x_{p}\in\Omega$, then integrating over the
unit sphere $\mathbb{S}^{N-1}$ and using (\ref{eq:Ginf:identity}),
we obtain
\begin{equation}
\int_{\mathbb{S}^{N-1}}u^{\infty}(\hat{x})\overline{G^{\infty}}(\hat{x},x_{p})\,\mathrm{d}s(\hat{x})\approx C\sum_{j}w_{j}\,\Im\left(G(x_{p},x_{j})\right).\label{eq:uinf:reason}
\end{equation}
 As we illustrate later (see Fig.~\ref{fig:decay}), the right-hand
side term $\Im\left(G(x_{p},x_{j})\right)$ above approaches a constant
when $x_{p}$ tends close to some point scatterer $x_{j}$ and decays
quickly as $x_{p}$ moves away from $x_{j}$. This behavior motivates us with
the following important indicator function for any sampling point
$x_{p}\in\Omega$:
\begin{equation}
\Phi^{\infty}(x_{p})=\frac{\left|\left\langle u^{\infty},\, G^{\infty}(\cdot,x_{p})\right\rangle _{L^{2}(\mathbb{S}^{N-1})}\right|}{\left\Vert u^{\infty}\right\Vert _{L^{2}(\mathbb{S}^{N-1})}\left\Vert G^{\infty}(\cdot,x_{p})\right\Vert _{L^{2}(\mathbb{S}^{N-1})}}\,,\label{eq:indicator:far}
\end{equation}
where the $L^{2}$-inner product occurring in the numerator is the
integral given by the left-hand side of (\ref{eq:uinf:reason}). 
In view of the singular behaviors of the fundamental solution $G(x, y)$ in
(\ref{eq:uinf:reason}), the indicator $\Phi^{\infty}(x_{p})$ approaches the unity when $x_{p}$
tends close to or lies within the medium scatterer and decays
quickly as $x_{p}$ moves away from the scatterer; see
Fig.~\ref{fig:decay}.
This indicating behavior will serve as the key
ingredient in our new DSM using far-field data.

We emphasize that the indicator function $\Phi^{\infty}$  in (\ref{eq:indicator:far})
was studied earlier in the orthogonality sampling method
\cite{Pot10,Gri11}, but derived from a different approach and motivation,
and under a ``smallness'' assumption.
As it has been seen,
our theoretical foundation does not require such \emph{smallness} restriction,
thus makes it possible to explain partially why the DSM works also well for
thin or long scatterers like rings and cracks. This is confirmed
by numerical tests (see Examples 6 and 7 in
Section~\ref{sec:numerics}).

Finally we make a short remark about computing the indicator function
$\Phi^{\infty}(x_{p})$ in \eqref{eq:indicator:far}. Clearly the
computing is purely explicit and direct, involving only a cheap scalar
product and a normalization operation, and it does not involve any
matrix inversion and any solution process. Moreover, normalizations
are quite cheap too. In fact we can easily see that the denominator in
\eqref{eq:indicator:far} is actually a constant and does not depend
on the sampling point $x_{p}$, so it can be computed once for all
sampling points. This is quite different from the indicator function
(\ref{eq:indicator:near}) using near-field data. Therefore, the DSM
using far-field data is computationally much cheaper than its counterpart
using near-field data.

\section{Numerical examples\label{sec:numerics}}

In this section, we will carry out a systematic evaluation of the
performance of the DSM using the far-field data for some inverse scattering
benchmark problems, including scatterers of obstacle, medium and crack
types. In particular, we shall also present the numerical reconstructions
by the DSM using the near-field data for most examples so that we
may clearly see the comparisons and distinctions between these two
different types of measurement data.

Let us first introduce our experiment settings. We shall present several
examples to illustrate the applicability and robustness of the proposed
method for determining the scatterers from both exact and noisy data.
We shall take the unitary wavelength $\lambda=1$, and the wave number
$k=2\pi$. When it is not mentioned, only one incident direction $d=(1,\,1)^{T}/\sqrt{2}$
is employed in the examples; otherwise two incidents, $d_{1}=(1,\,1)^{T}/\sqrt{2}$
and $d_{2}=(1,\,-1)^{T}/\sqrt{2}$, are used and mentioned explicitly.
In all the examples, the scattered near-field data $u^{s}$ is measured
at 50 points uniformly distributed on a circle of radius $4\lambda$
centered at the origin, and the far-field pattern $u^{\infty}$ is
observed from 50 uniform distributed angles on the unit circle $\mathbb{S}^{1}$.
The noisy data $u_{\delta}^{s}$ and $u_{\delta}^{\infty}$ are generated point-wise
by the formulae
\begin{equation}
u_{\delta}^{s}=u^{s}(x)+\epsilon\,\zeta\underset{x}{\max}|u^{s}(x)|
\quad \mbox{and} \quad
u_{\delta}^{\infty}=u^{\infty}(x)+\epsilon\,\zeta\underset{x}{\max}|u^{\infty}(x)|
\end{equation}
for both near-field and far-field  data respectively,
where $\epsilon$ refers to the relative noise level, and both real
and imaginary parts of the noise $\zeta$ follow the standard normal
distribution. Our near-field data are synthesized using
the quadratic finite element discretization in the domain $(-6,\,6)^{2}$
enclosed by a PML layer of width $1$ to damp the reflection. Local
adaptive refinement scheme within the inhomogeneous scatterer is adopted
to enhance the resolution of the scattered wave. The far-field data
are generated approximately by the integral representation \cite[Eq.~(3.64)]{CoK98}
along the circle centered at the origin with radius $5$ using the
composite Simpson's rule:
$$
    \int_0^{2\pi} 5f(5,\theta) ~\mathrm{d}\theta  \approx \frac{\pi}{15}\sum_0^{24} (f_{2i}+4 f_{2i+1} + f_{2i+2})
$$
where $f(r,\theta)$ is the integrand of Eq.~(3.64) in \cite{CoK98} in the polar form and $f_j$ is the value of $f$
evaluated at the $j$-th quadrature node $(r,\theta)=(5, 2j\pi/50)$, $j=0,1,\ldots,50$.
To ensure the sufficient accuracy of the far-field data, we refine the mesh 
successively till the relative maximum error of the data from two successive meshes 
is below $0.1\%$. The indicator function value is normalized between $0$ and $1$, 
and the sampling domain is fixed to be
$\Omega=[-2\lambda,\,2\lambda]^{2}$, which is divided into small
squares of equal width $h=0.01\lambda$. The contour plot of the indicator
function value will be displayed as an estimate to the profiles of
unknown scatterers. For multiple incident waves with directions $d_{1}$,
$d_{2}$, $\text{\ensuremath{\ldots}\ }$, $d_{\nu}$, we take the maximum
of all indicator function values point-wise:
\begin{equation}
\Phi^{\infty}(x_{p})=\max\left\{ \Phi^{\infty}(x_{p};\, d_{1}),\,\Phi^{\infty}(x_{p};\, d_{2}),\,\cdots,\Phi^{\infty}(x_{p};\, d_{\nu})\right\} ,
\end{equation}
 where $\Phi^{\infty}(x_{p};\, d_{l})$, $l=1,2,\ldots,\nu$, is the
indicator function defined by \eqref{eq:indicator:far} associated
with the $l$-th incident direction $d_{l}$. We denote the DSM using
near-field and far-field data by DSM(n) and DSM(f), respectively,
for sake of comparisons of their performances. We shall carry out
three groups of benchmark problems: medium scatterer, obstacle scatterer,
and crack scatterer.

\subsection*{DSM for medium scatterers}

We first test the four examples from \cite{IJZ12}, but using both
near-field and far-field data to compare the performances of two different
types of data.

\smallskip{}

\textbf{Example 1} (A singular point medium). The example considers
one square scatterer of width $0.02\lambda$ located at the origin,
with coefficient $\eta$ of the scatterer being $1$.

This point scatterer identification is a standard benchmark problem.
The contour plots using the DSM(n) and DSM(f) are shown in Figure$\text{\,}$\ref{fig:ex1}.
With only one incident wave probing, the location of the point scatterer
is accurately positioned even if the data is severely distorted. We
observe similar behavior of both indicator functions using near-field
and far-field data. Both DSM(n) and DSM(f) can locate the position
of the ``point'' scatterer in an accurate and stable way. The presence
of noise level up to $20\%$ in both measurement data has little effects
on the indicator functions.

\begin{figure}
\hfill{}\hspace{-0.03\textwidth}%
\begin{tabular}{c}
\includegraphics[clip,width=0.32\textwidth]{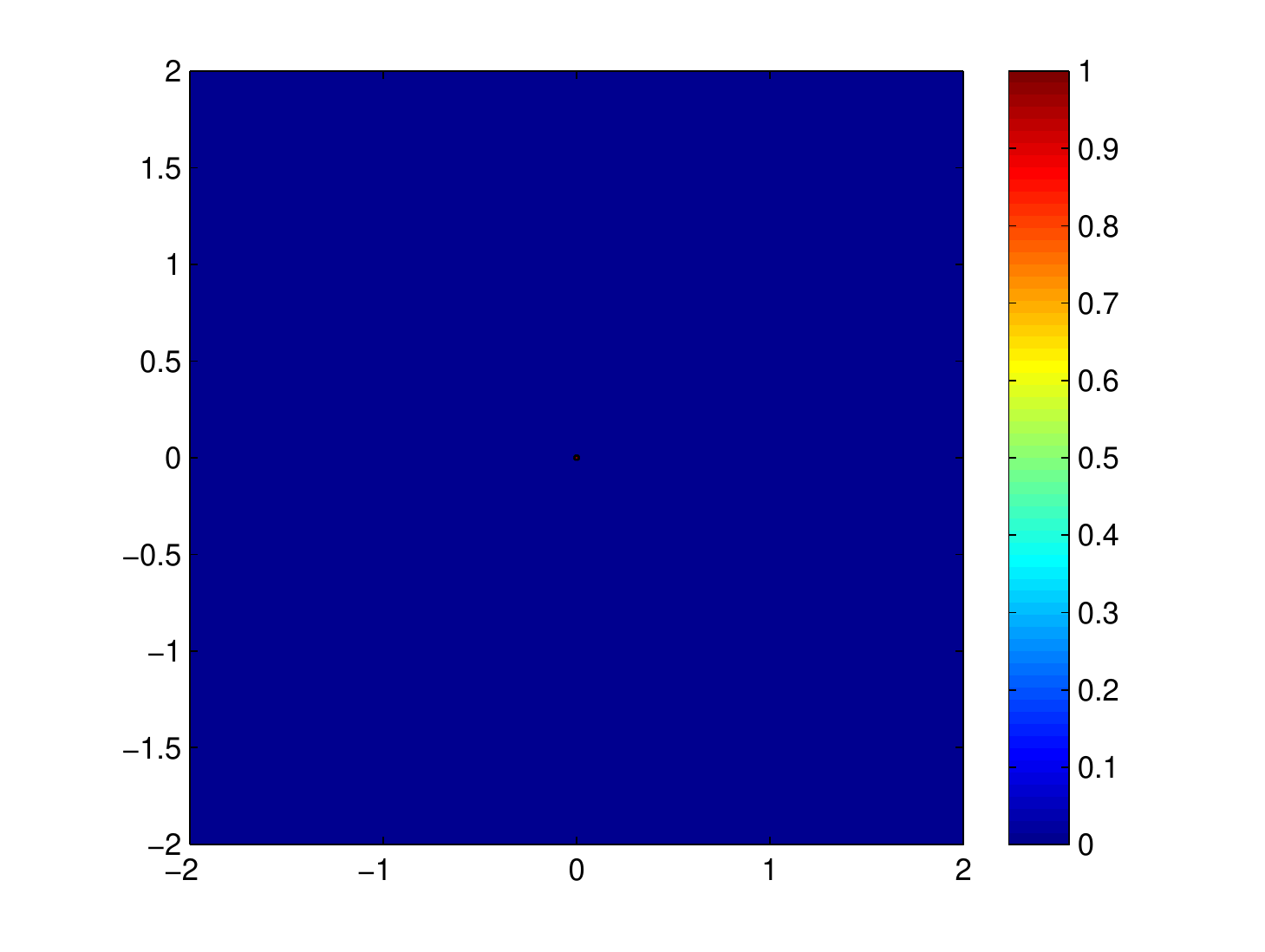}\tabularnewline
(a)\tabularnewline
\end{tabular}\negthinspace{}%
\begin{tabular}{cc}
\includegraphics[width=0.32\textwidth]{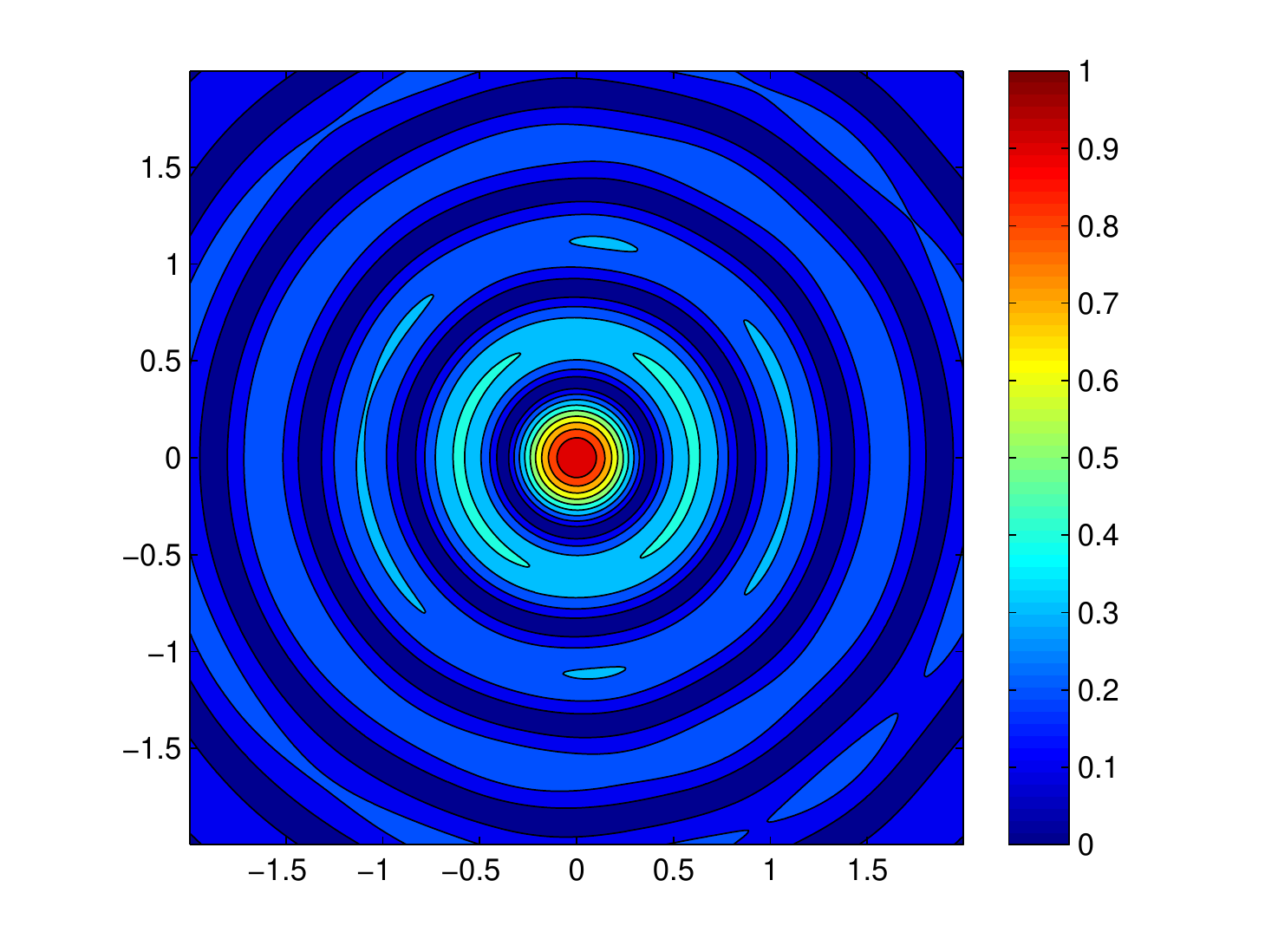} & \includegraphics[width=0.32\textwidth]{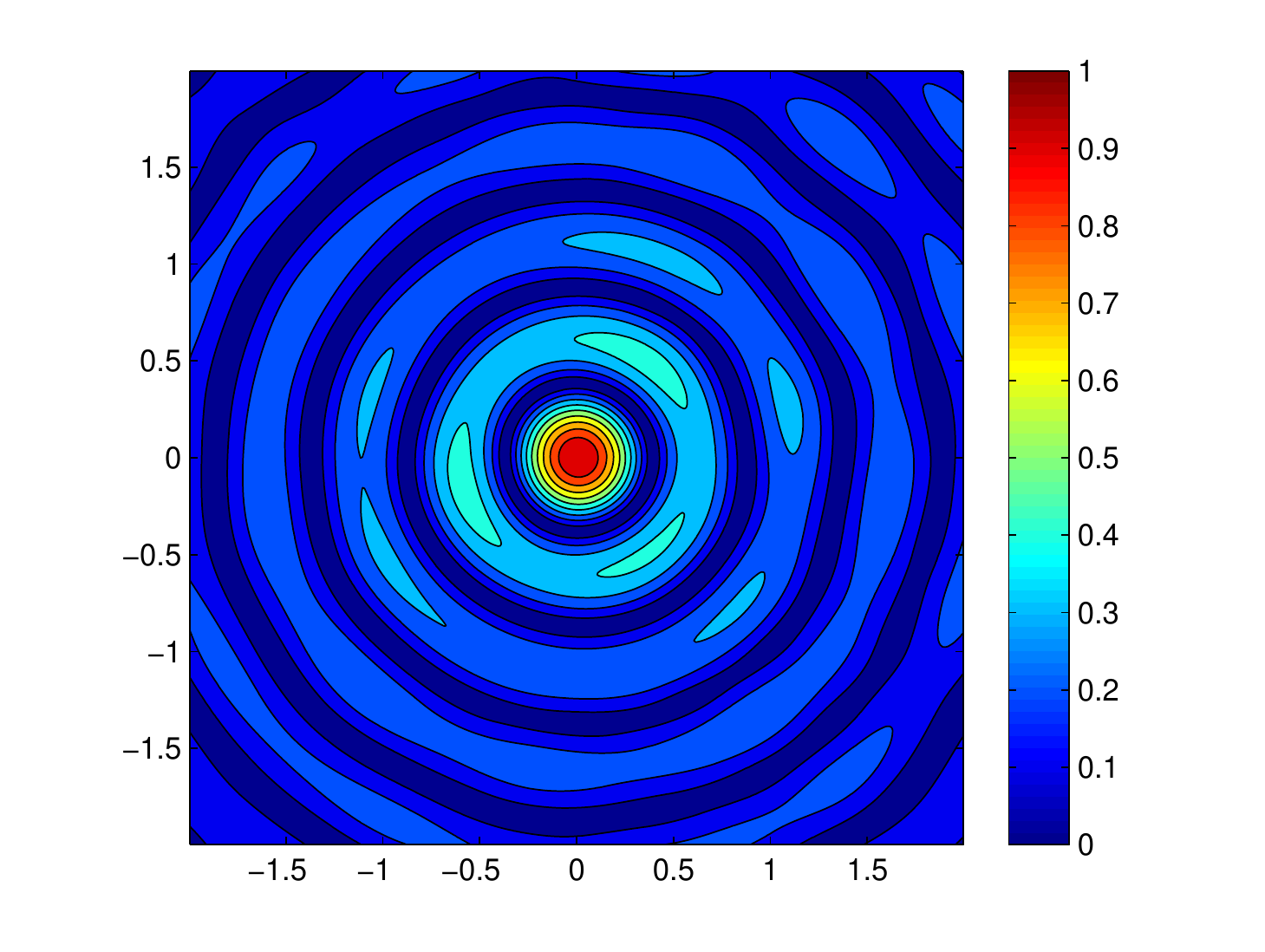}\tabularnewline
(b) & (c)\tabularnewline
\includegraphics[width=0.32\textwidth]{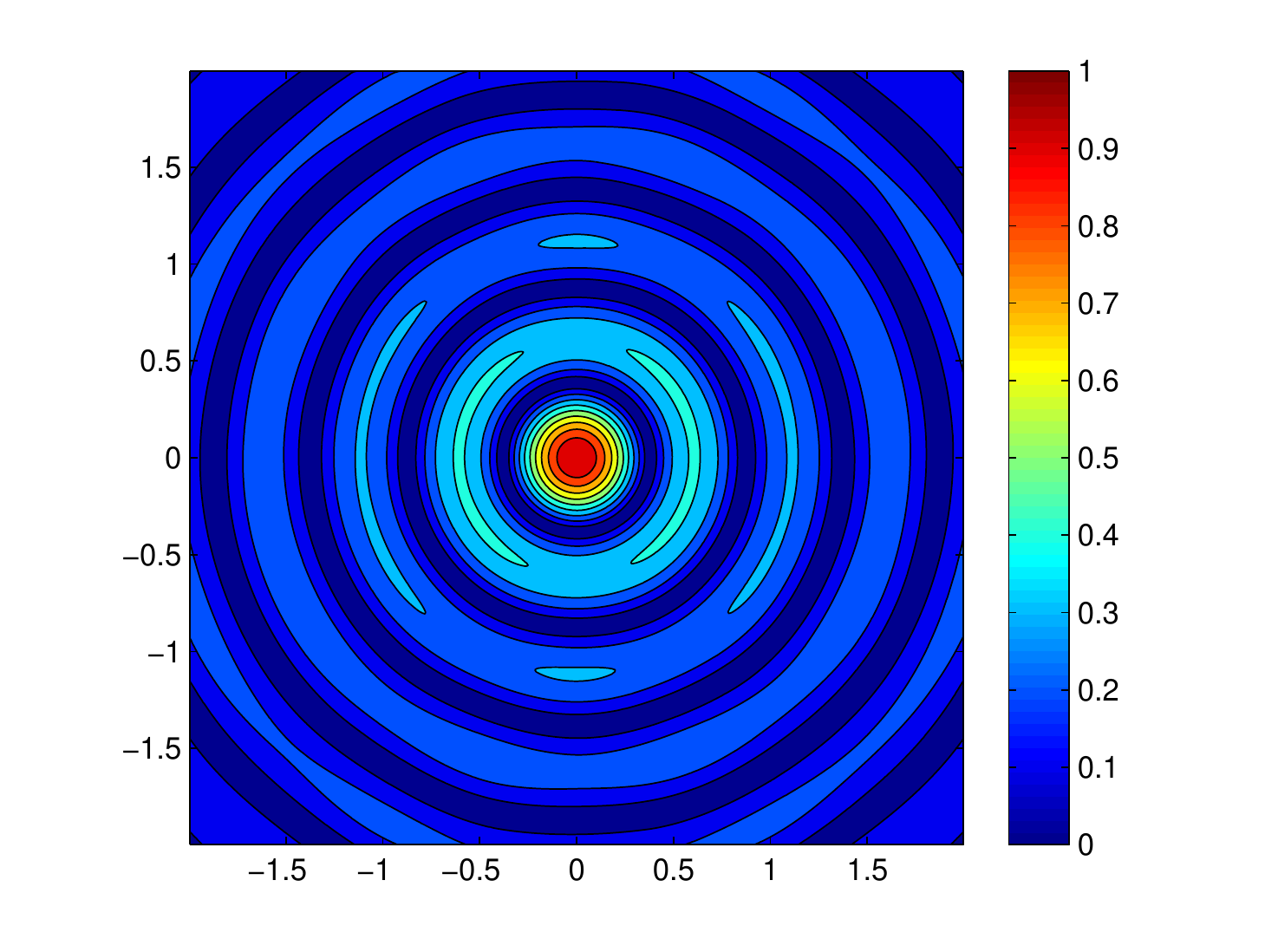} & \includegraphics[width=0.32\textwidth]{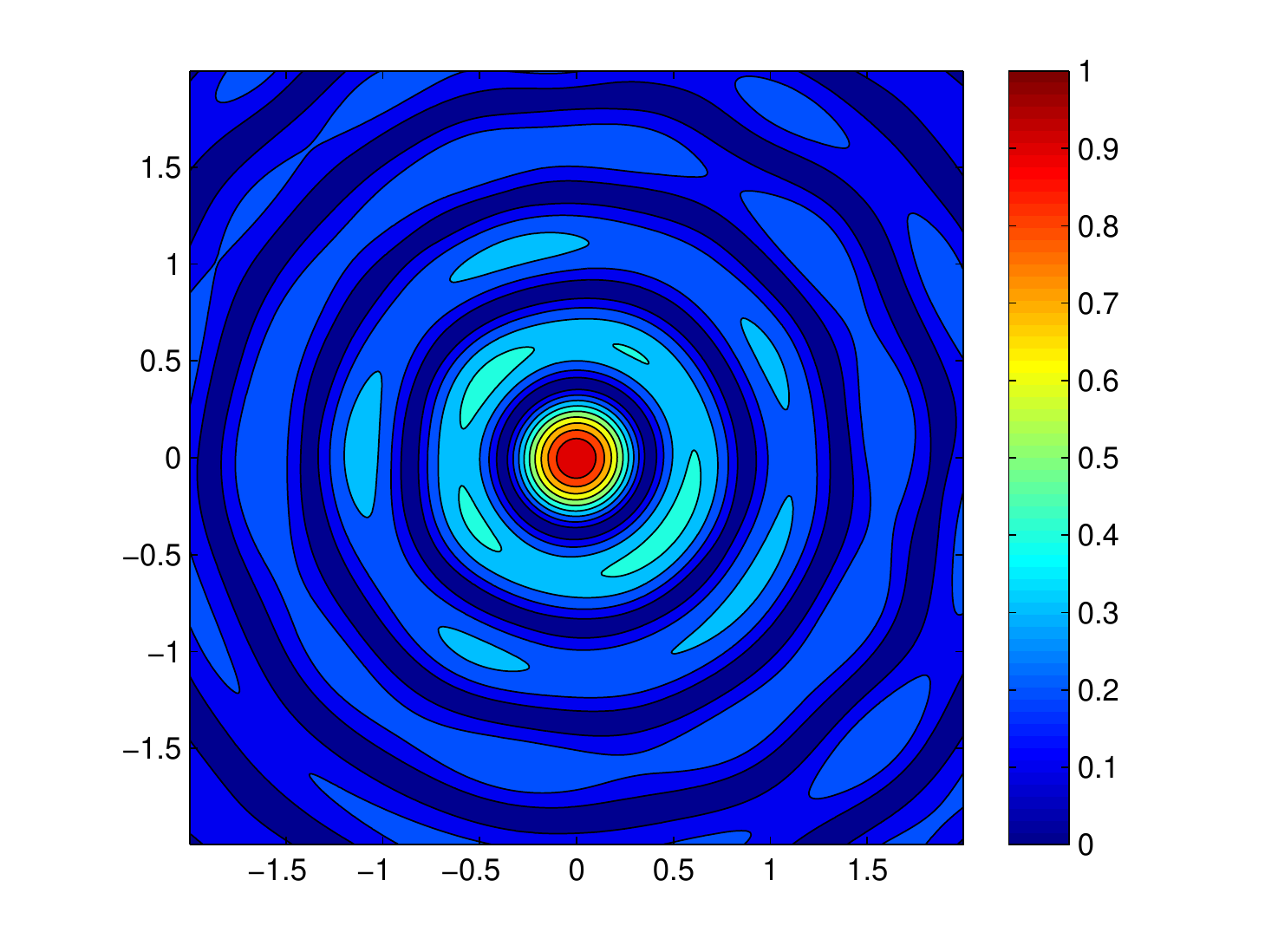}\tabularnewline
(d) & (e)\tabularnewline
\end{tabular}\hfill{}

\caption{\label{fig:ex1} Example 1: (a) true scatterer; Reconstruction results
using (b) exact near-field data, (c) noisy near-field data with $\epsilon=20\%$,
(d) exact far-field data, (e) noisy far-field data with $\epsilon=20\%$.}
\end{figure}

\smallskip{}

\textbf{Example 2} (Well-separated medium scatterers) Two square scatterers
of side length $0.3\lambda$, located at $(-0.8\lambda,\,-0.7\lambda)$
and $(0.3\lambda,\,0.8\lambda)$ respectively, are considered, both
with coefficient $\eta$ being $1$.

This example verifies that the DSM can capture multiple sources. The
indicating contour plots of the DSM using near-field and far-field
data are shown in Figure$\text{\,}$\ref{fig:ex2}. Both DSM(n) and
DSM(f) are able to identify the location of both scatterers even if
the measured data is significantly perturbed. Through studying the
contrast of the indicator contours of medium scatterers with respect
to the surrounding background medium, one can choose the cut-off value
ranging from $0.7$ to $0.8$ to truncate the profiles of the two
scatterers. Choosing a large cut-off value helps reducing those misleading
spurious medium scatterers, see yellow patches in Figure$\text{\,}$\ref{fig:ex2}(b)-(e).

\begin{figure}
\hfill{}\hspace{-0.03\textwidth}%
\begin{tabular}{c}
\includegraphics[clip,width=0.32\textwidth]{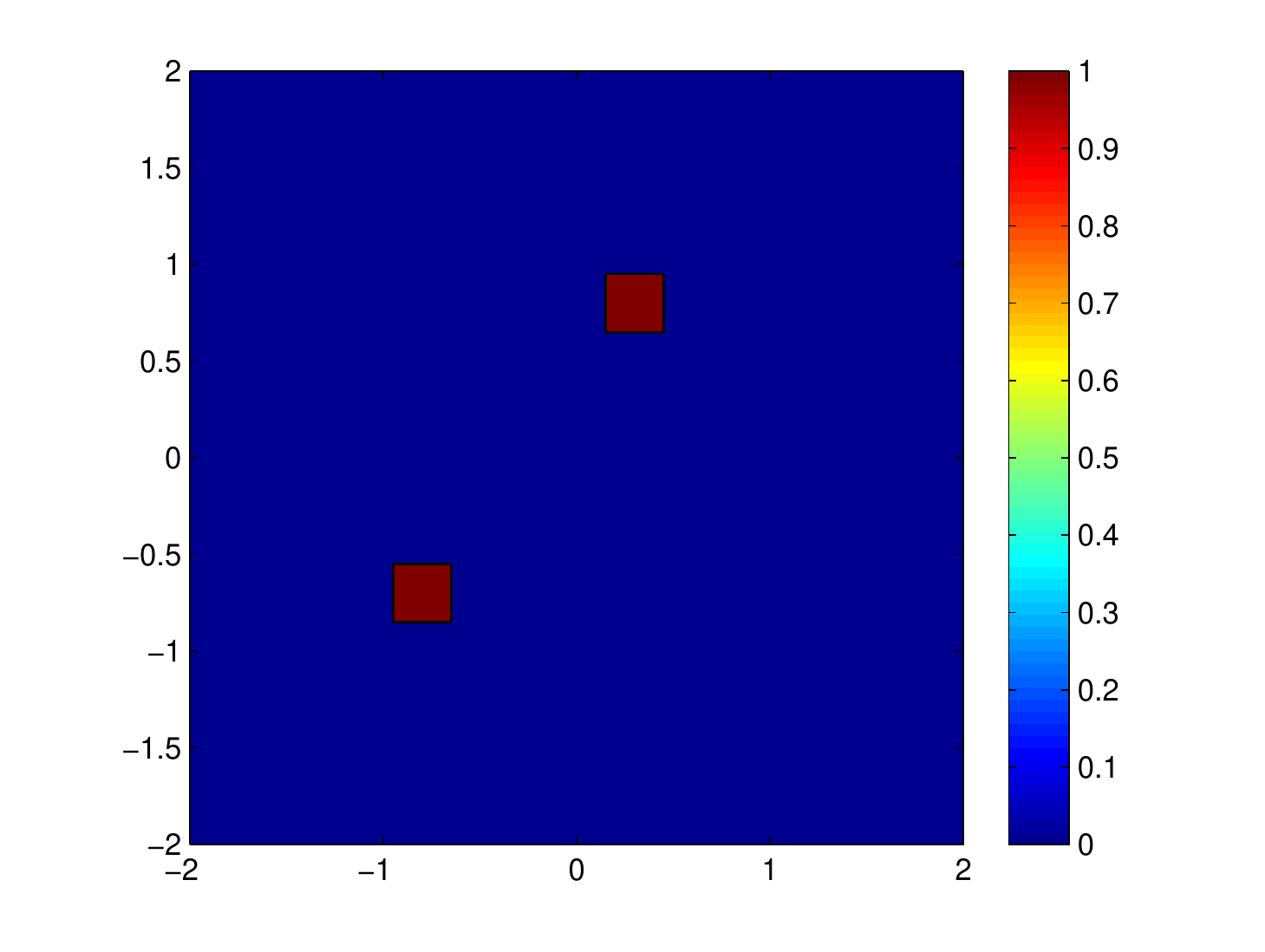}\tabularnewline
(a)\tabularnewline
\end{tabular}\negthinspace{}%
\begin{tabular}{cc}
\includegraphics[width=0.32\textwidth]{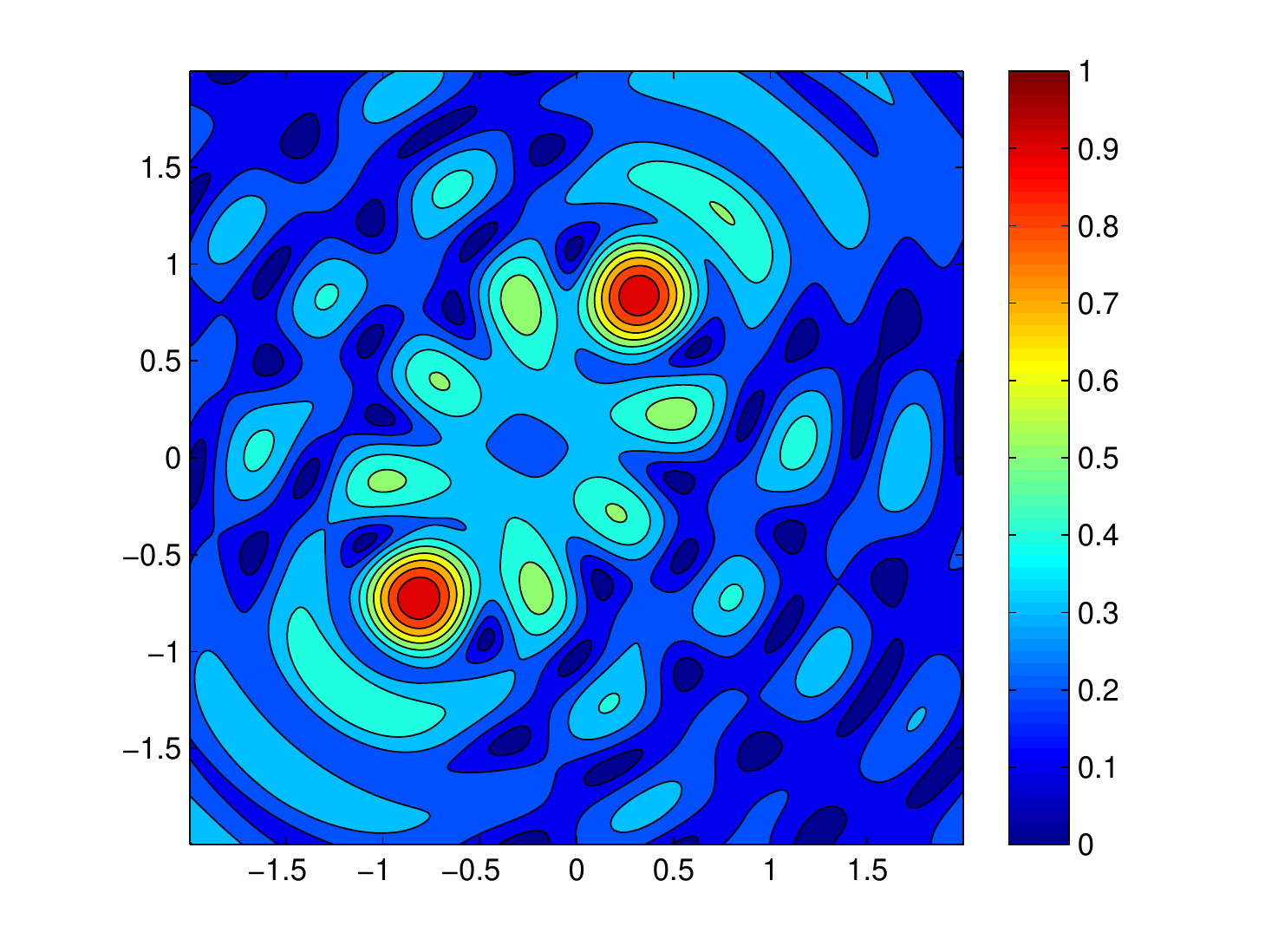} & \includegraphics[width=0.32\textwidth]{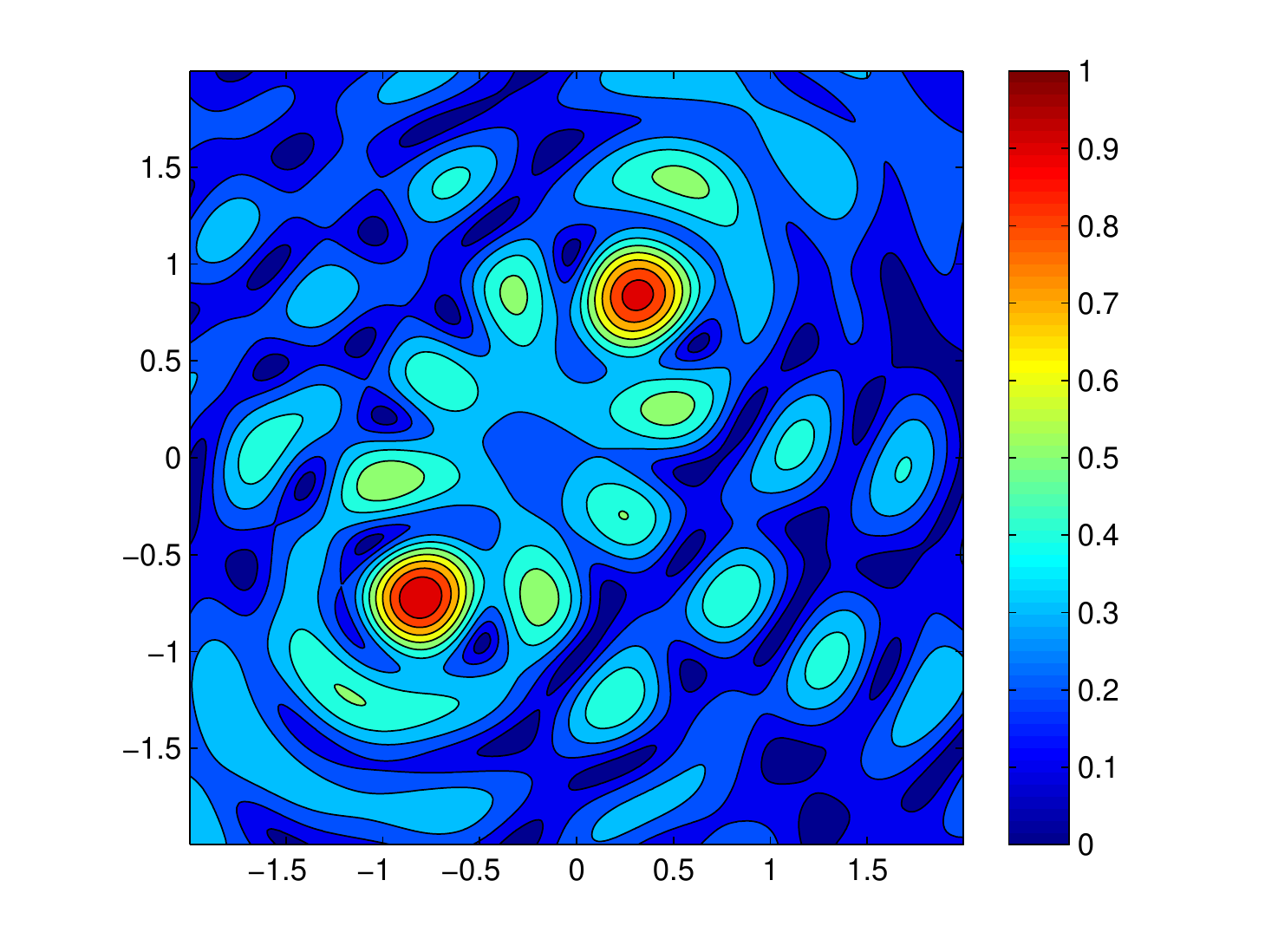}\tabularnewline
(b) & (c)\tabularnewline
\includegraphics[width=0.32\textwidth]{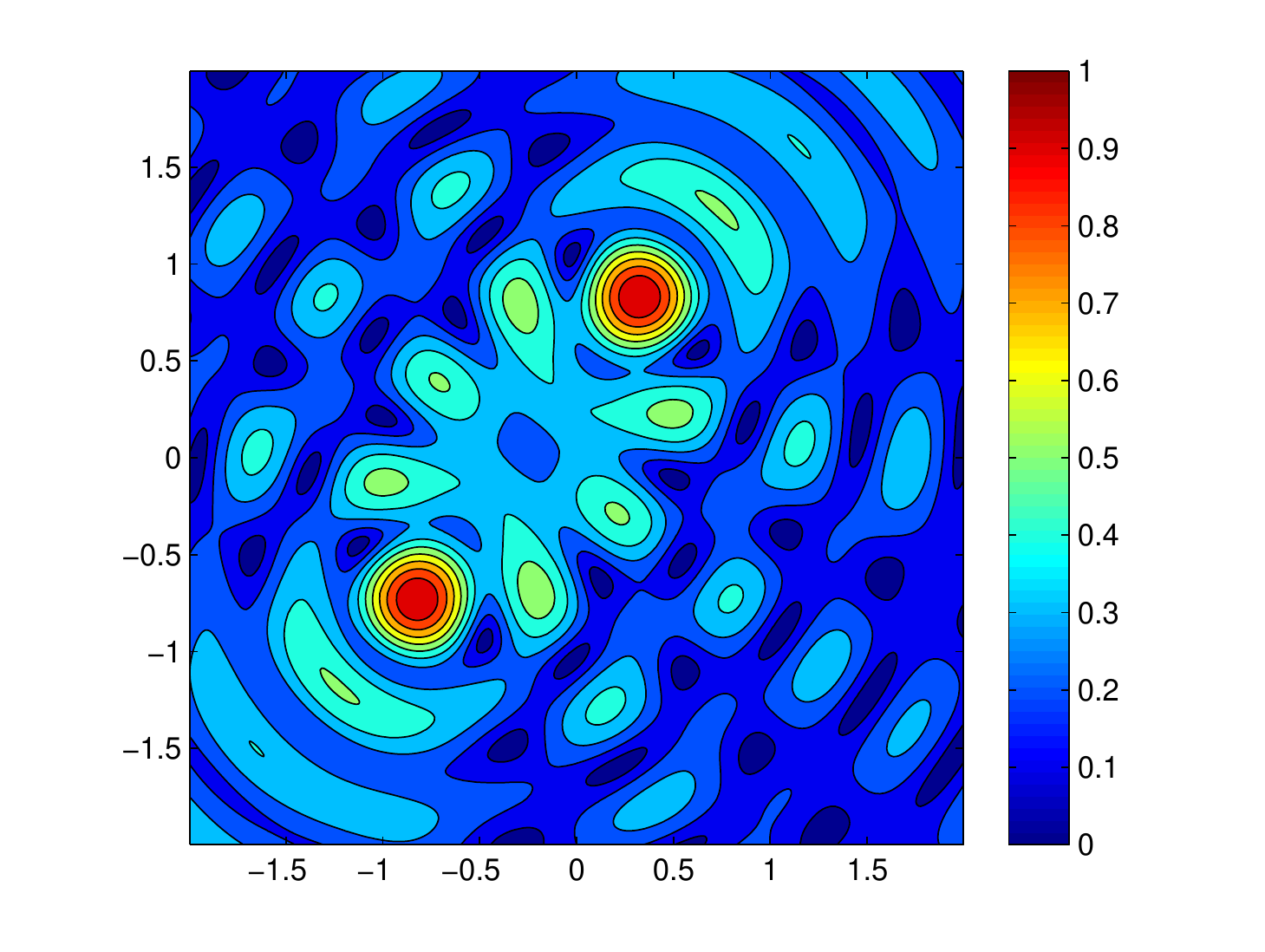} & \includegraphics[width=0.32\textwidth]{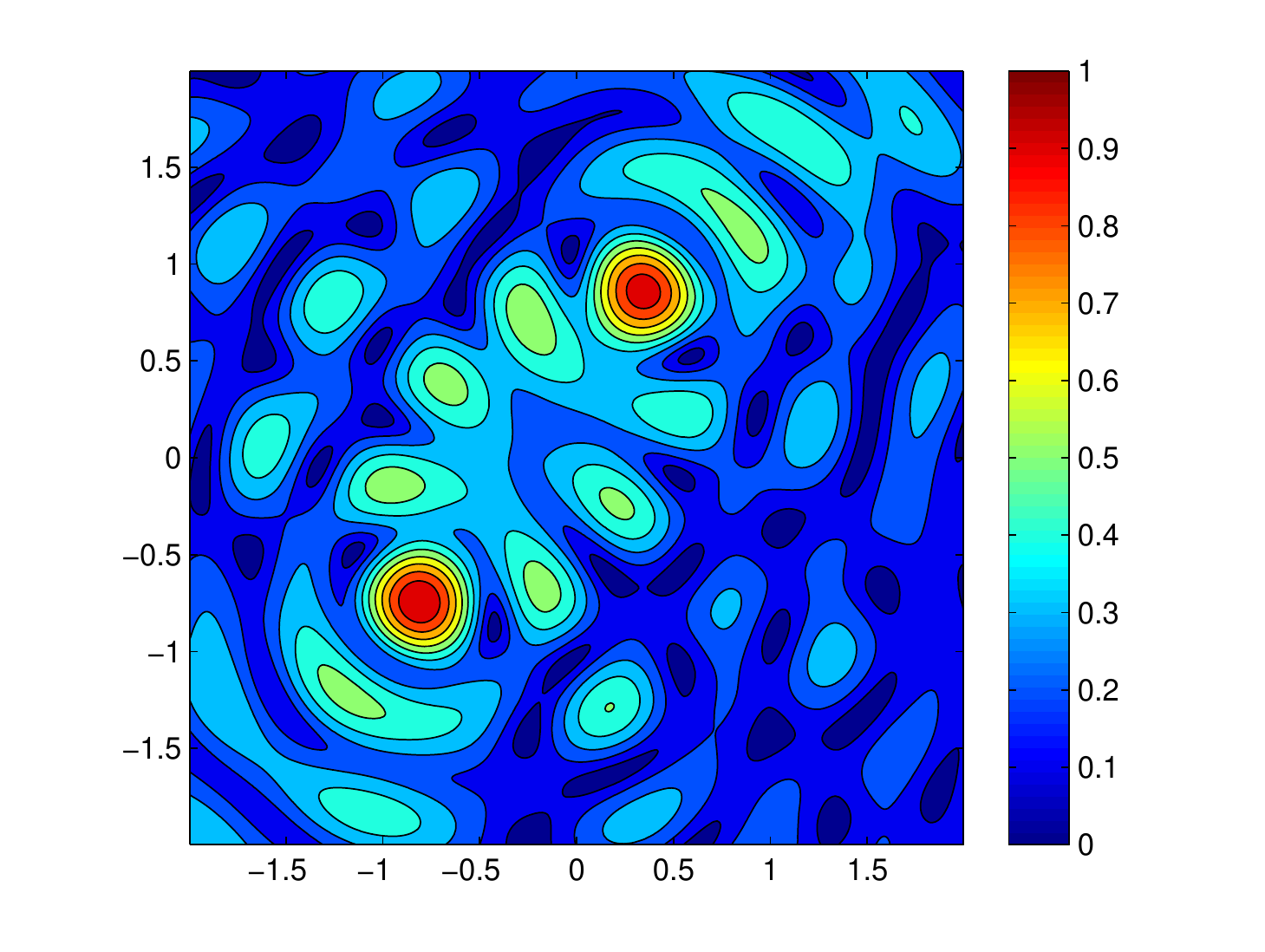}\tabularnewline
(d) & (e)\tabularnewline
\end{tabular}\hfill{}

\caption{\label{fig:ex2} Example 2: (a) true scatterer; reconstruction results
using (b) exact near-field data, (c) noisy near-field data with $\epsilon=20\%$,
(d) exact far-field data, (e) noisy far-field data with $\epsilon=20\%$.}
\end{figure}

\smallskip{}

\textbf{Example 3} (Two close medium scatterers). Here the same setting
as Example 2 is laid out, except that the two scatterers are now moved
to $(-0.25\lambda,\,0)$ and $(0.25\lambda,\,0)$, respectively.

This example investigates the resolution limit when the DSM can separate
multiple sources. The results of near-field case and far-field case
are shown in Figure$\text{\,}$\ref{fig:ex3}. Both the DSM(n) and
DSM(f) can separate these two close medium scatterers well. The location
and size of the scatterers agrees well with the true ones even under
large noise. When we further reduce the distance between those two
scatterers (less than half a wave length), DSM cannot separate the
two scatterer any longer, which is consistent with the Heisenberg
uncertainty principle.

\begin{figure}
\hfill{}\hspace{-0.03\textwidth}%
\begin{tabular}{c}
\includegraphics[clip,width=0.32\textwidth]{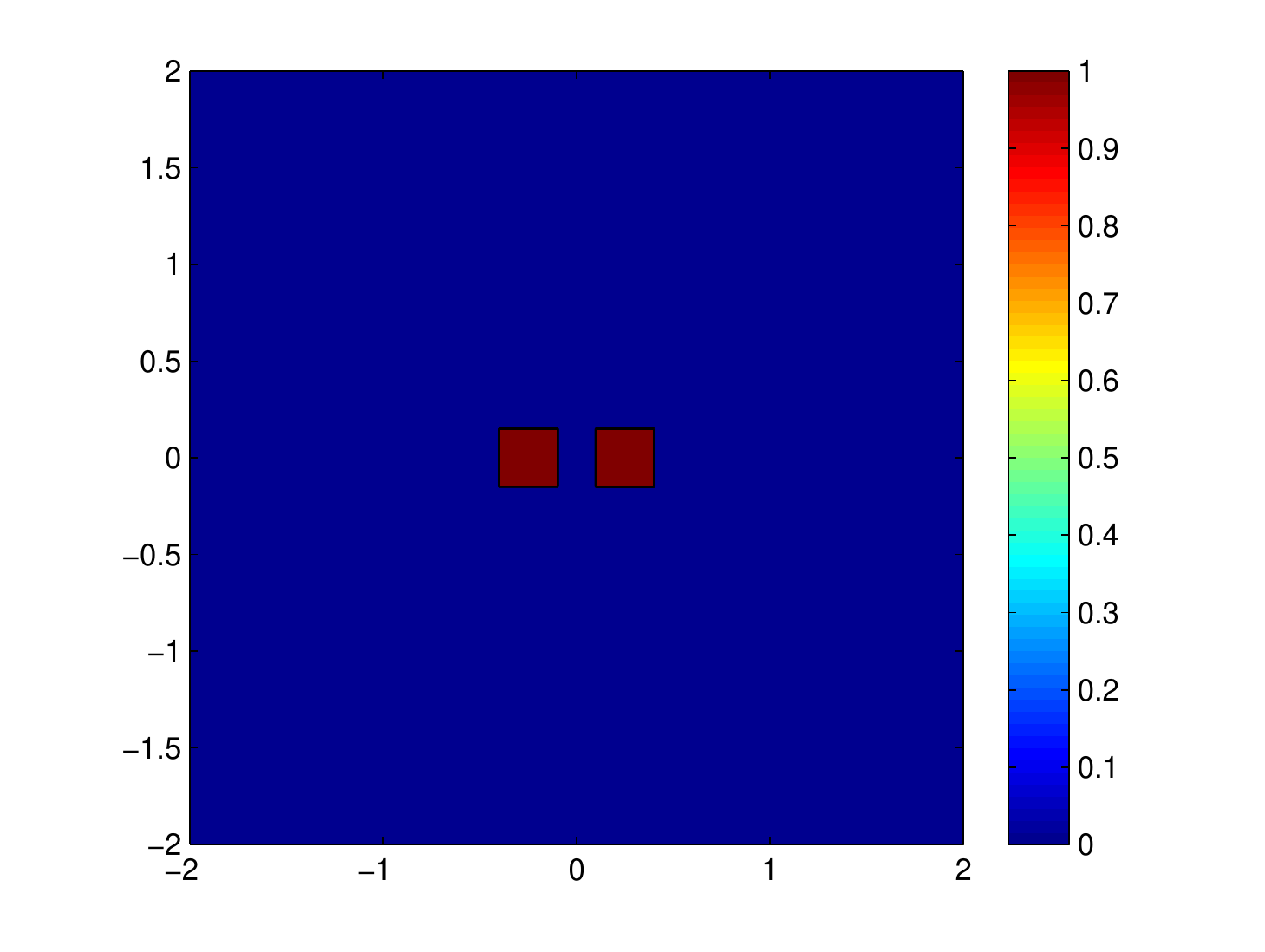}\tabularnewline
(a)\tabularnewline
\end{tabular}\negthinspace{}%
\begin{tabular}{cc}
\includegraphics[width=0.32\textwidth]{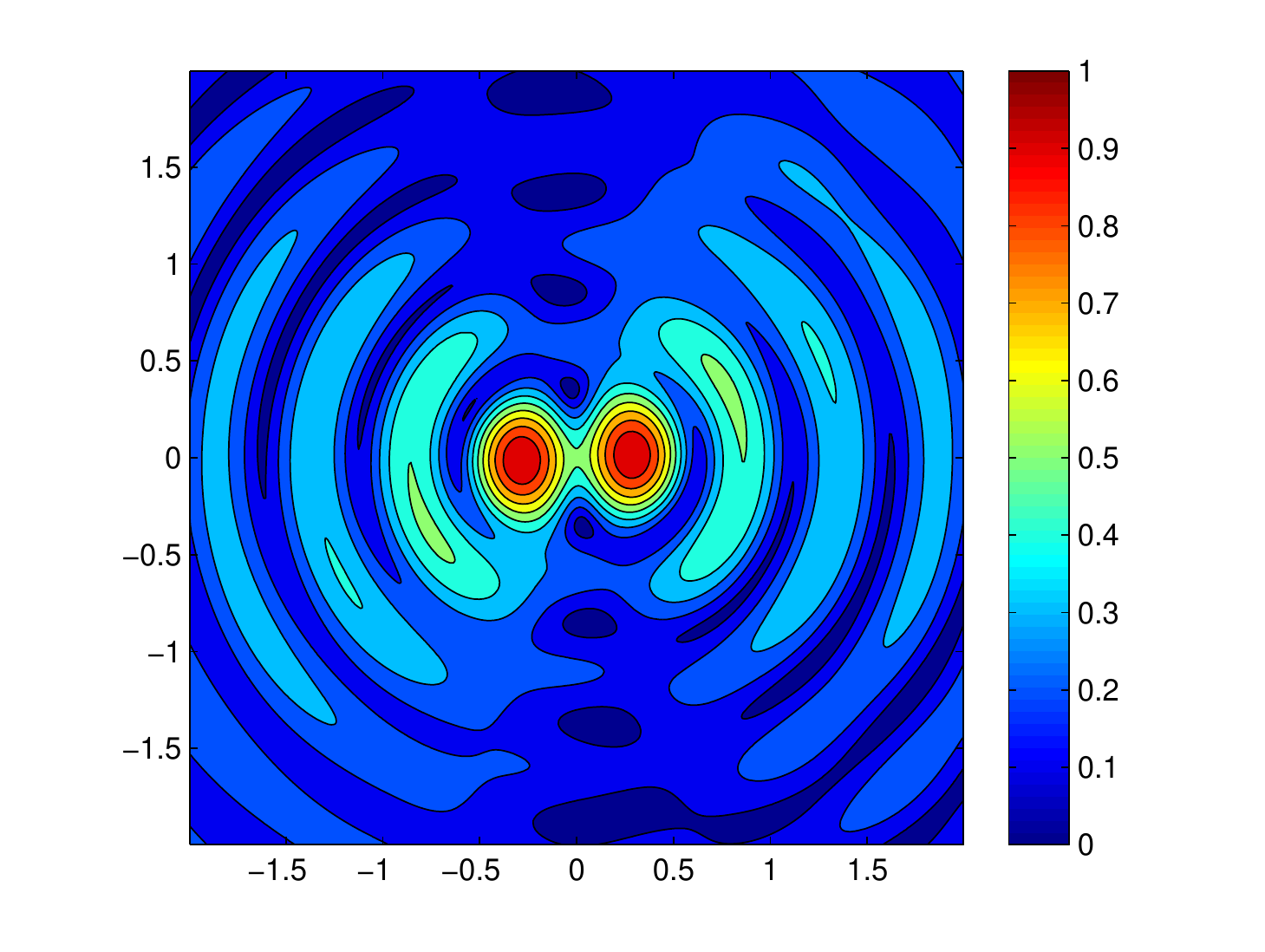} & \includegraphics[width=0.32\textwidth]{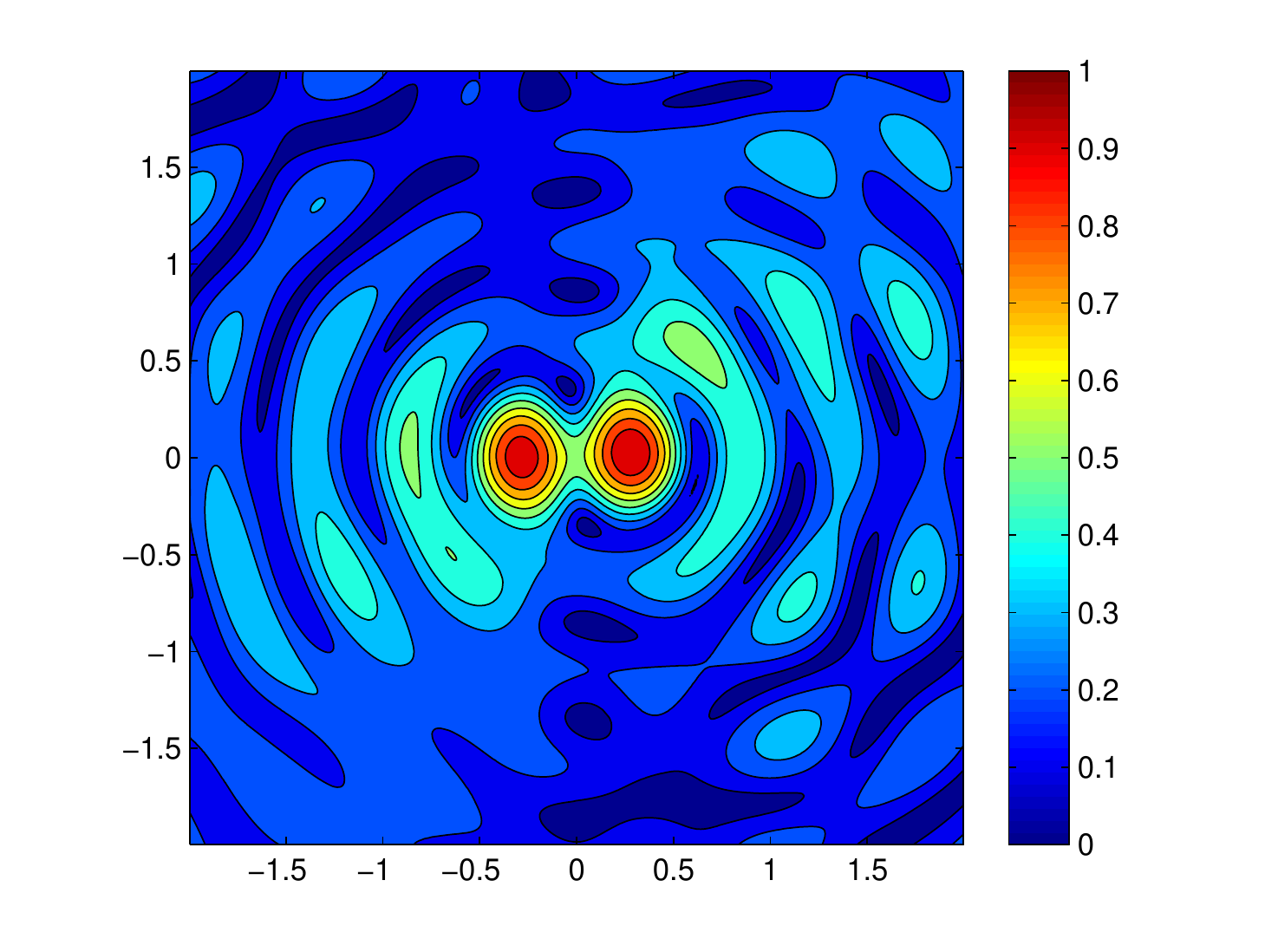}\tabularnewline
(b) & (c)\tabularnewline
\includegraphics[width=0.32\textwidth]{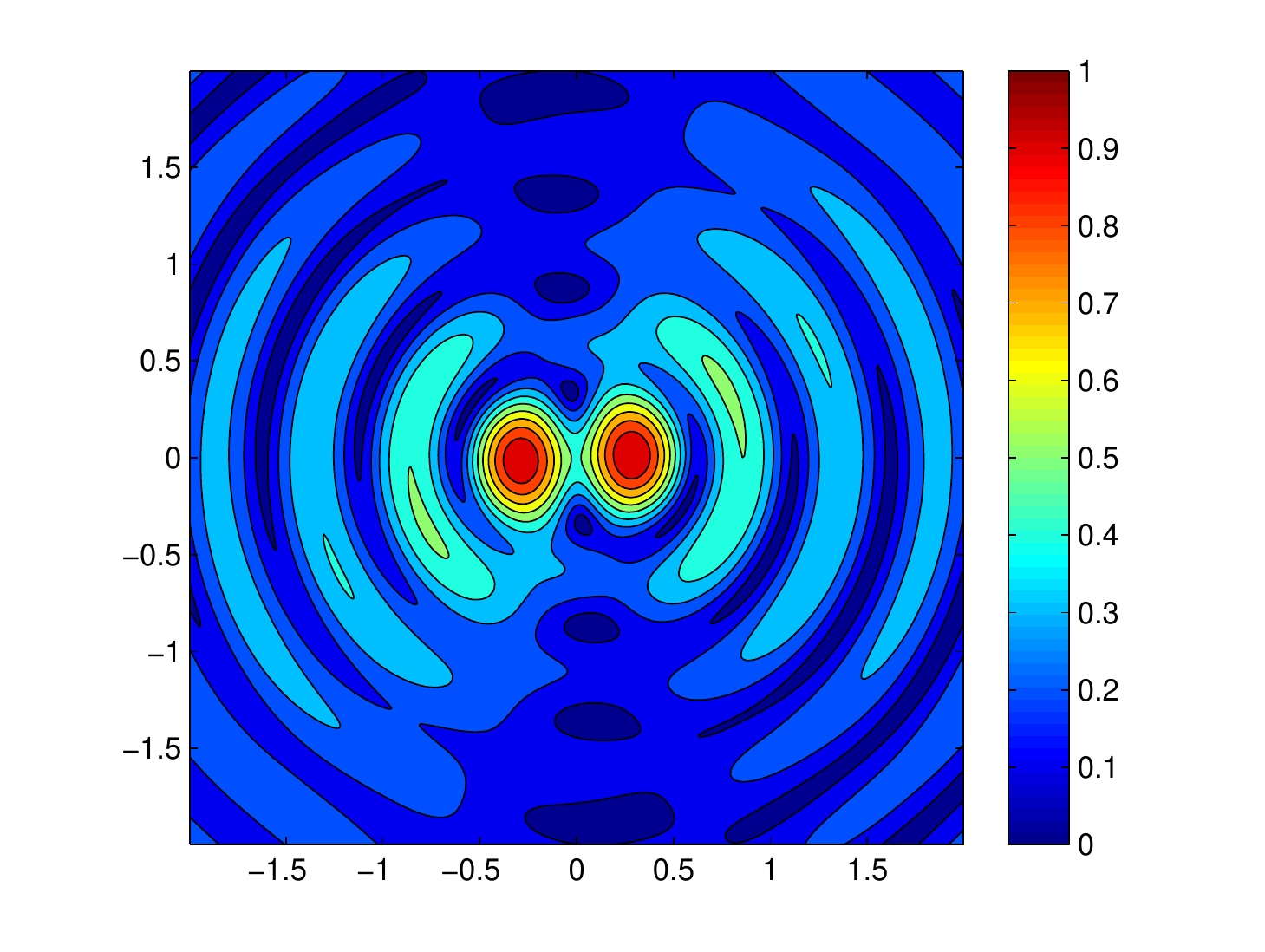} & \includegraphics[width=0.32\textwidth]{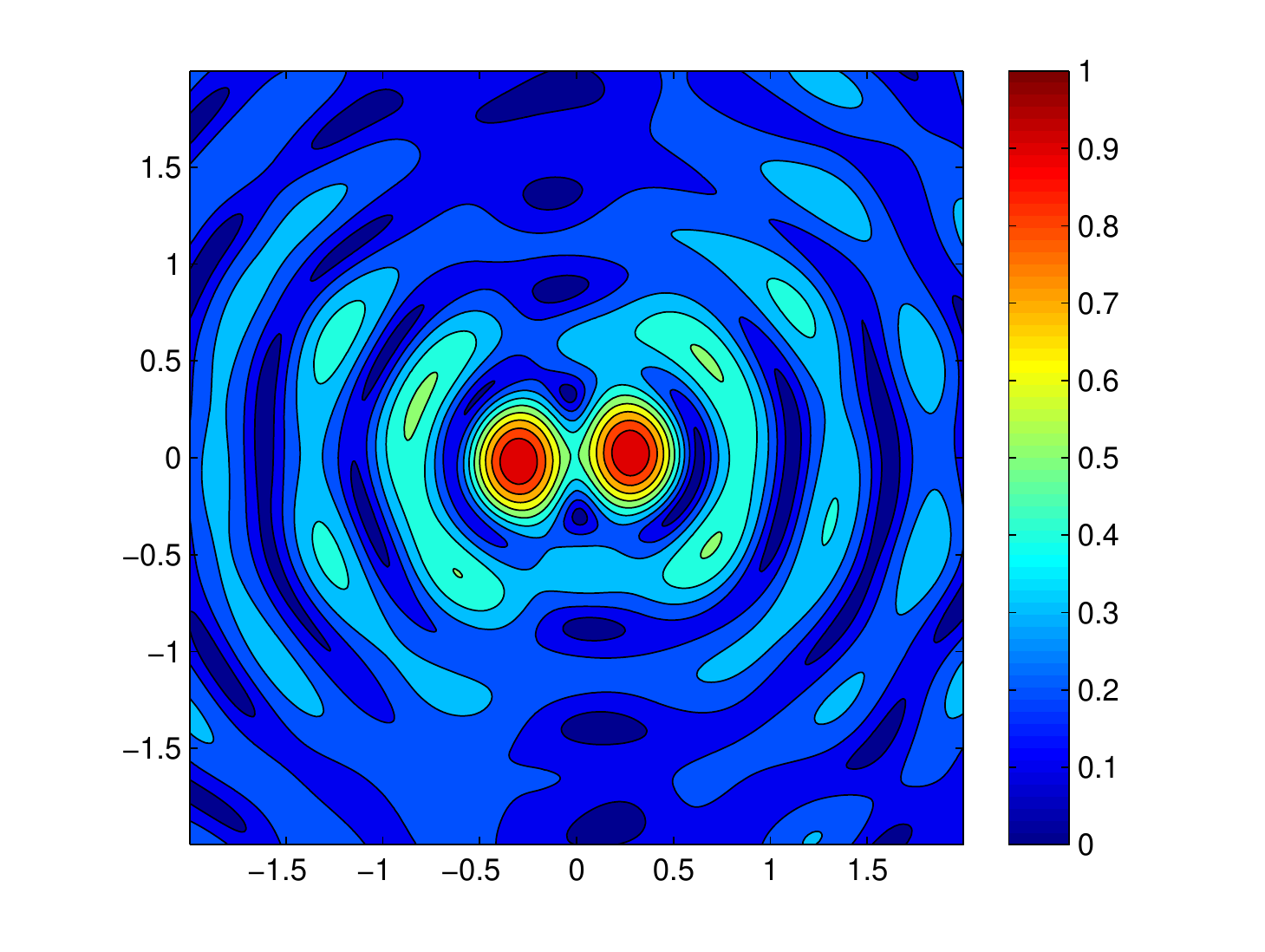}\tabularnewline
(d) & (e)\tabularnewline
\end{tabular}\hfill{}

\caption{\label{fig:ex3} Example 3: (a) true scatterer; reconstruction results
using (b) exact near-field data, (c) noisy near-field data with $\epsilon=20\%$,
(d) exact far-field data, (e) noisy far-field data with $\epsilon=20\%$.}
\end{figure}

\smallskip{}

\textbf{Example 4 (A ring-shaped medium scatterer)}. We consider a
ring-shaped square scatterer located at the origin, with the outer
and inner side lengths being $0.6\lambda$ and $0.4\lambda$ respectively
and the coefficient $\eta$ being $1$. Two incident directions $d_{1}=\frac{1}{\sqrt{2}}(1,\,1)^{T}$
and $d_{2}=\frac{1}{\sqrt{2}}(1,\,-1)^{T}$ are employed. The reconstruction
results using near-field and far-field data are shown in Figures $\text{\,}$\ref{fig:ex4:near}
and \ref{fig:ex4:far} respectively.

This is a rather challenging task to estimate the profile of the scatterer.
We see that the DSM(f) has similar performance to the DSM(n). One
single incident wave is inadequate to revolve full features of the
scatterer. Using two incident waves, The four corners of the square
ring is correctly labeled by the peaks of the superimposed indicator
function in Figure\ \ref{fig:ex4:near}(d), (g) and Figure\,\ref{fig:ex4:far}(c),
(f).

\begin{figure}
\hfill{}\includegraphics[width=0.33\textwidth]{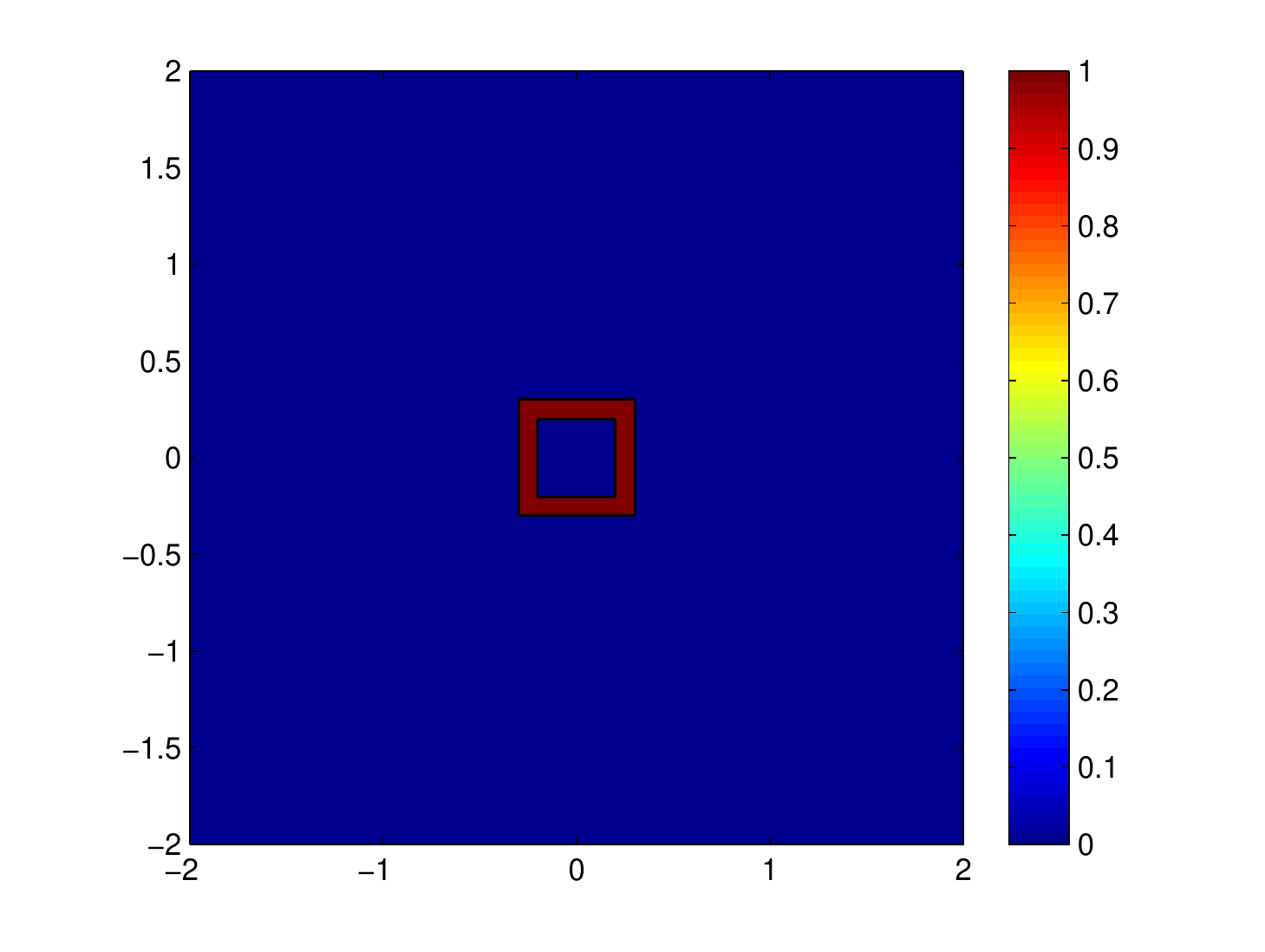}\hfill{}

\hfill{}(a)\hfill{}

\hfill{}\includegraphics[width=0.33\textwidth]{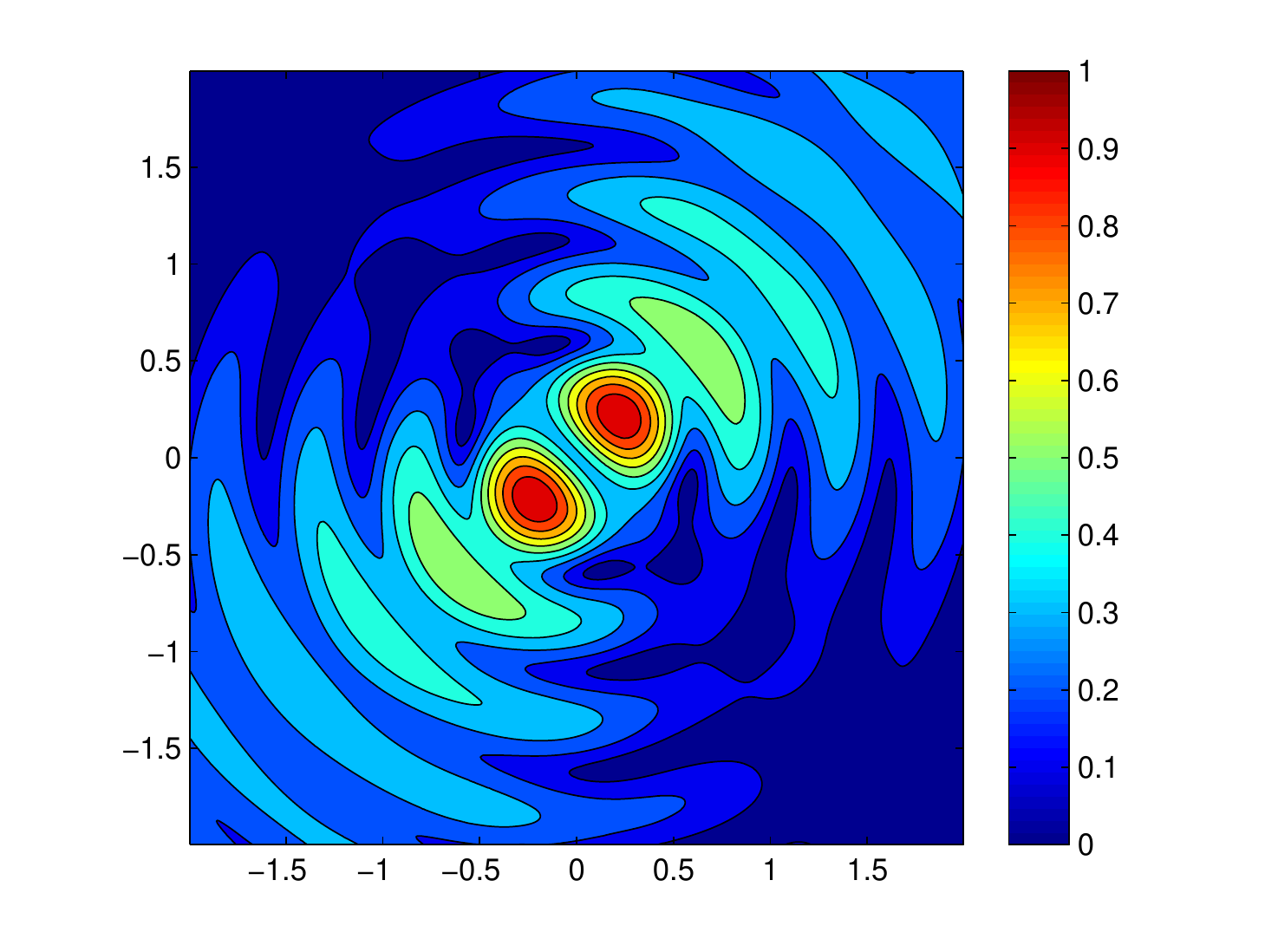}\hfill{}\includegraphics[width=0.33\textwidth]{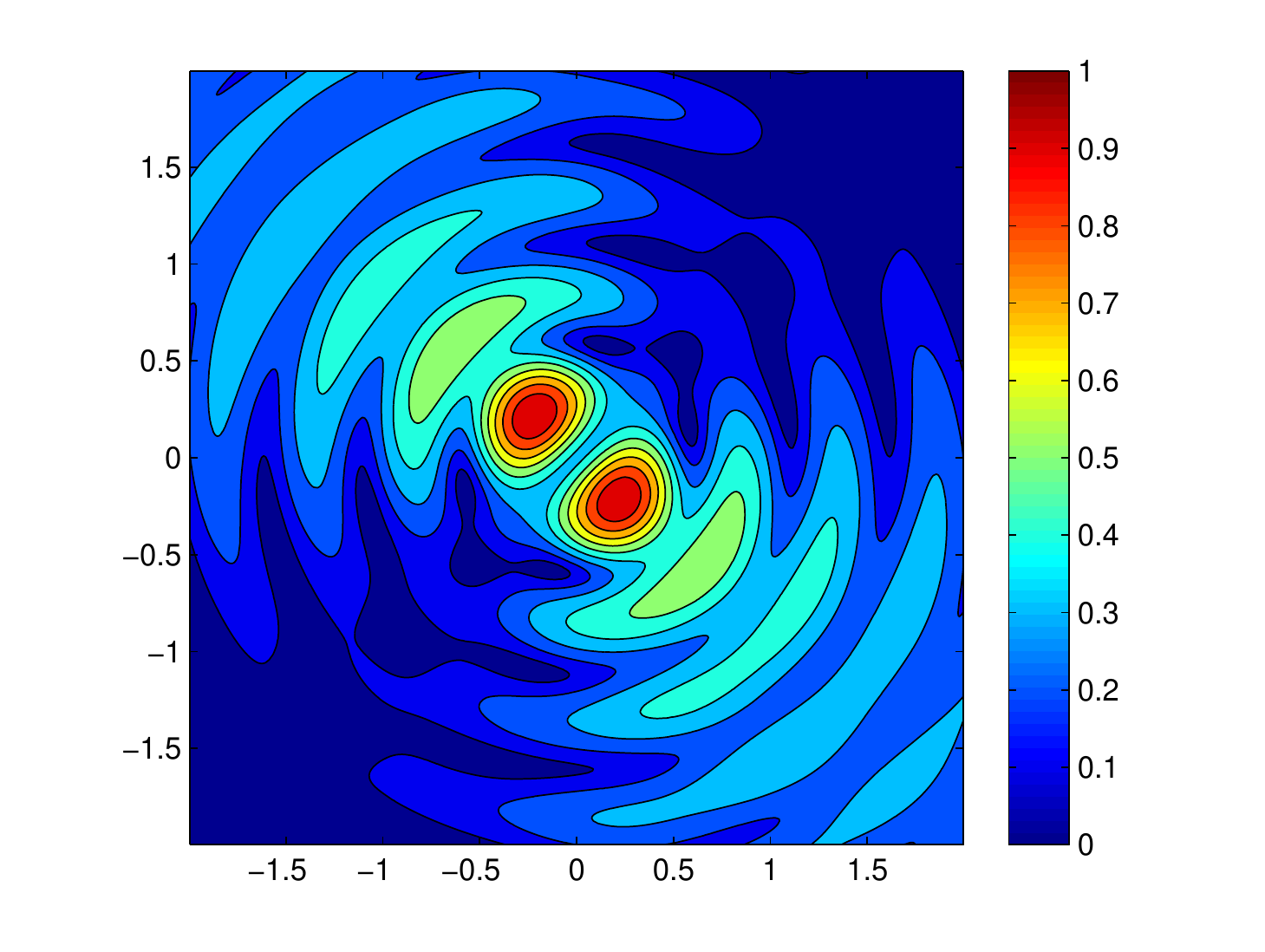}\hfill{}\includegraphics[width=0.33\textwidth]{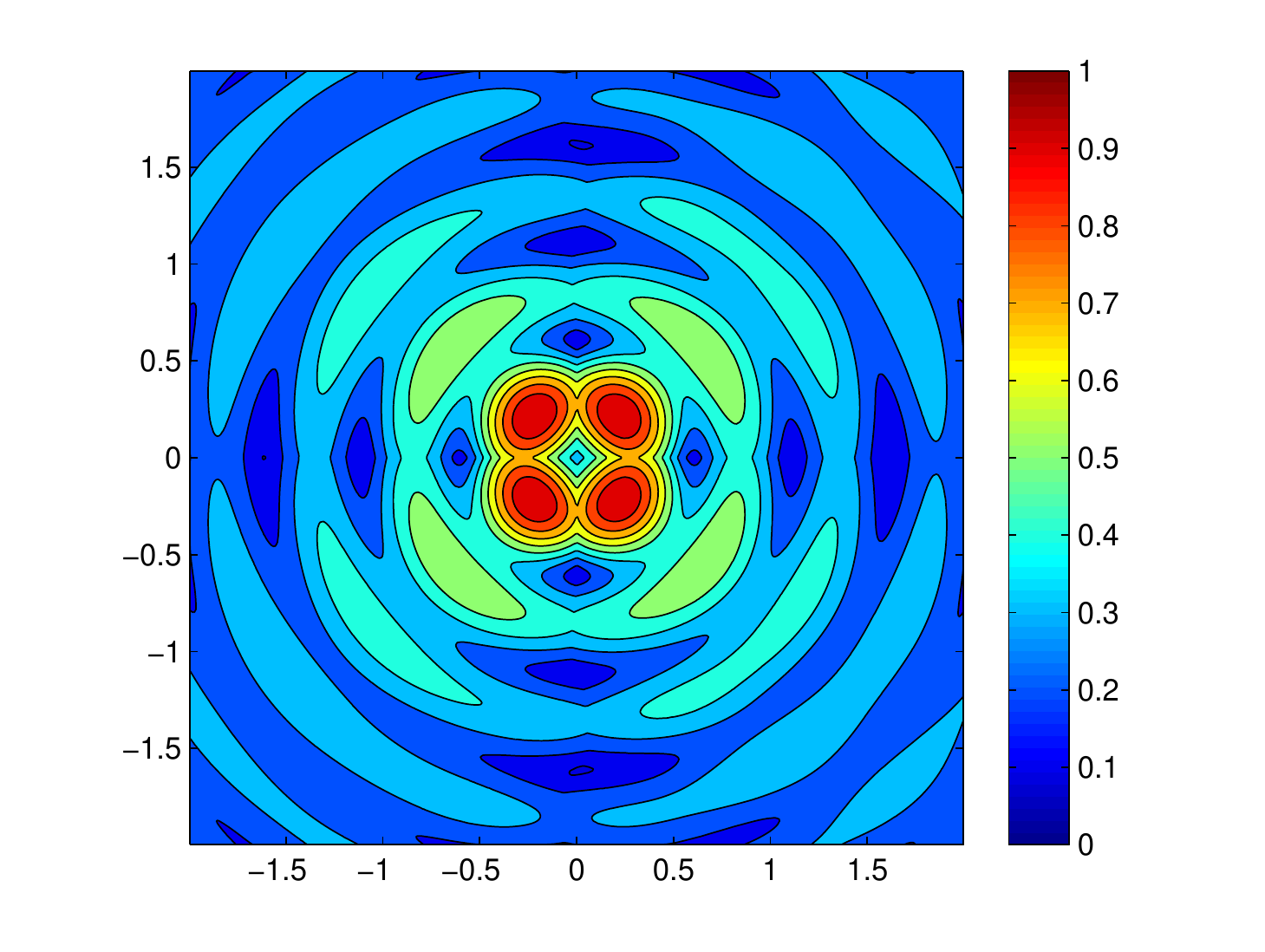}\hfill{}\hfill{}

\hfill{}(b)\hfill{}\hfill{}(c)\hfill{}\hfill{}(d)\hfill{}

\hfill{}\includegraphics[width=0.33\textwidth]{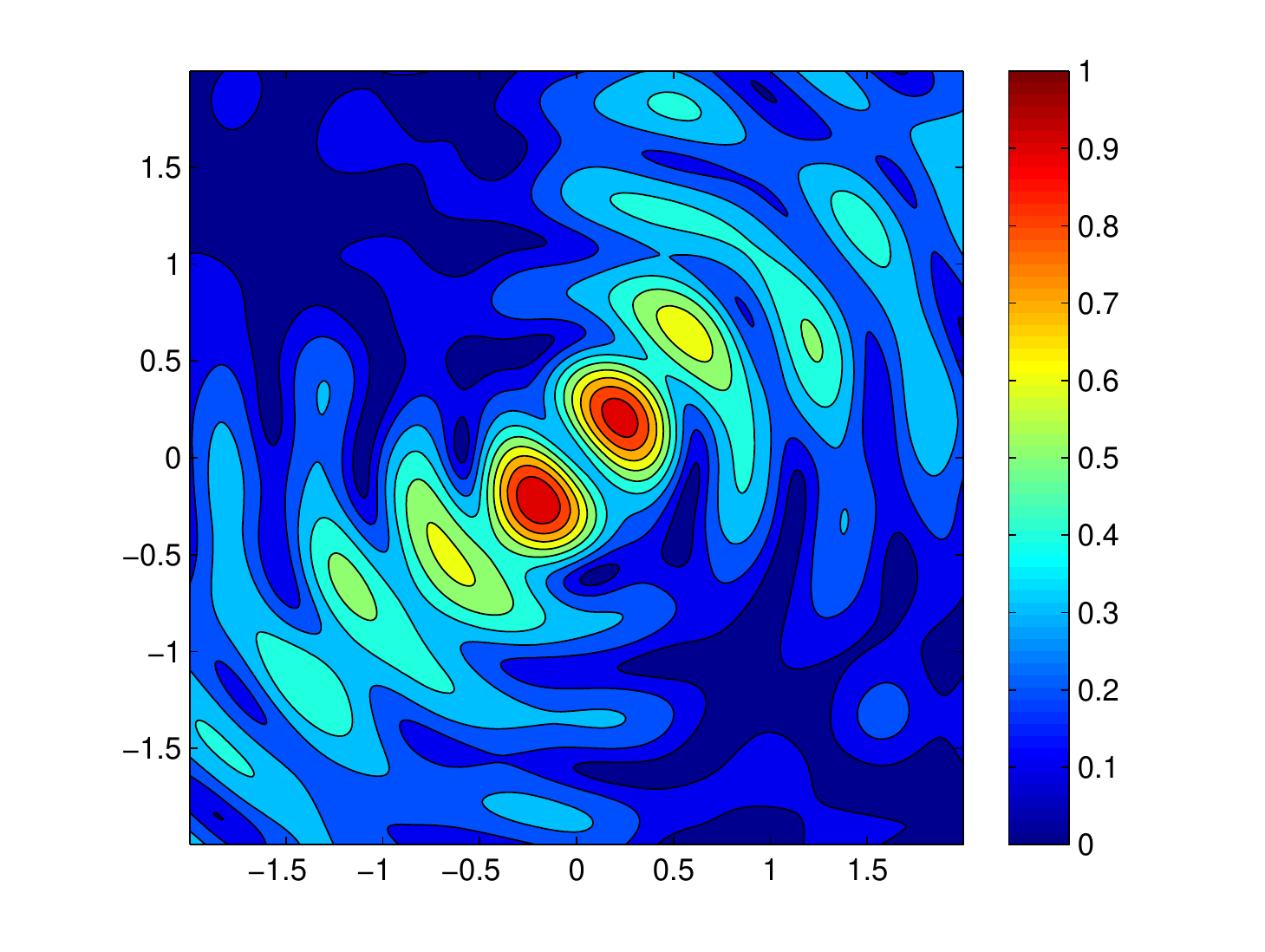}\hfill{}\includegraphics[width=0.33\textwidth]{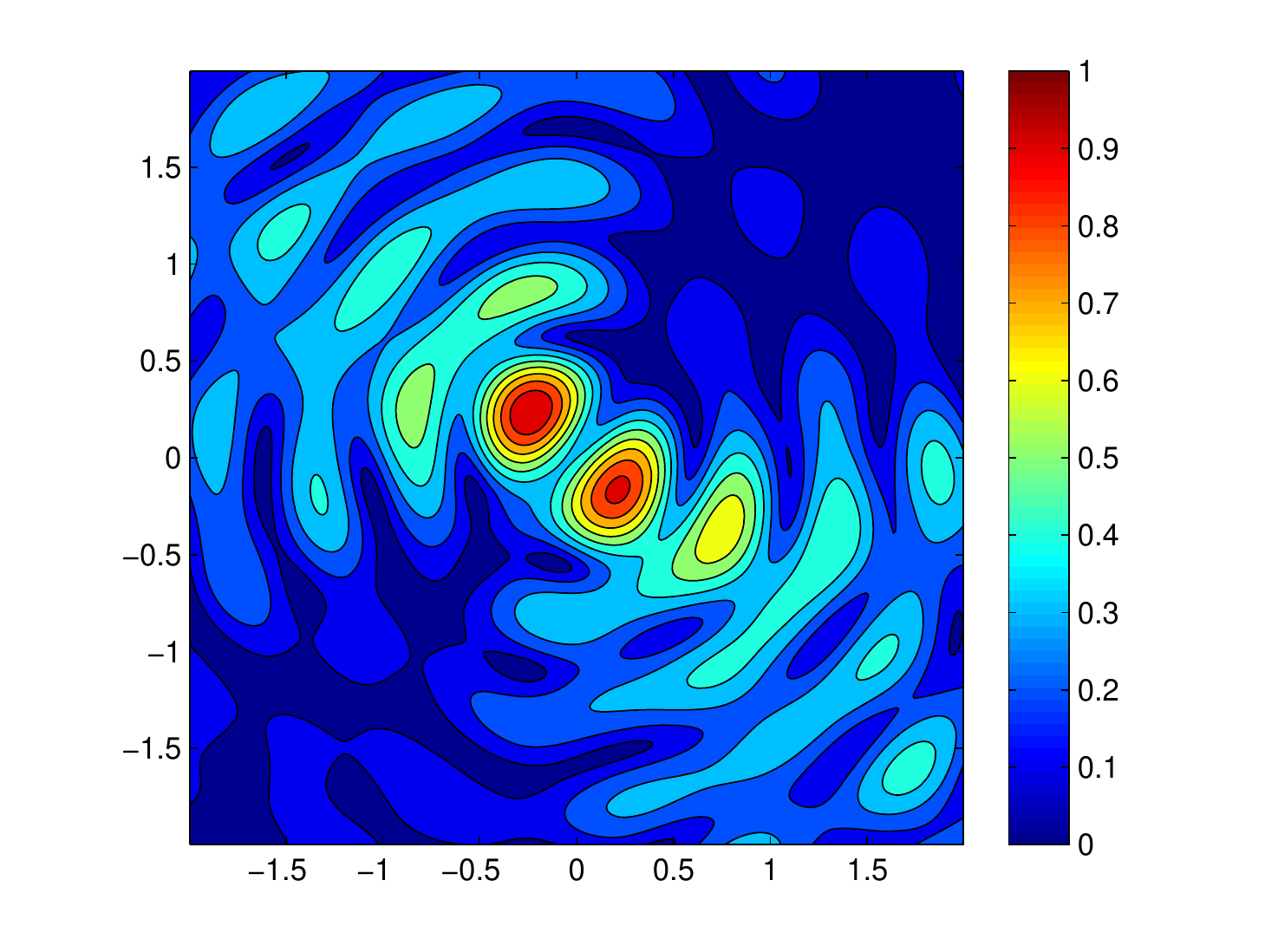}\hfill{}\includegraphics[width=0.33\textwidth]{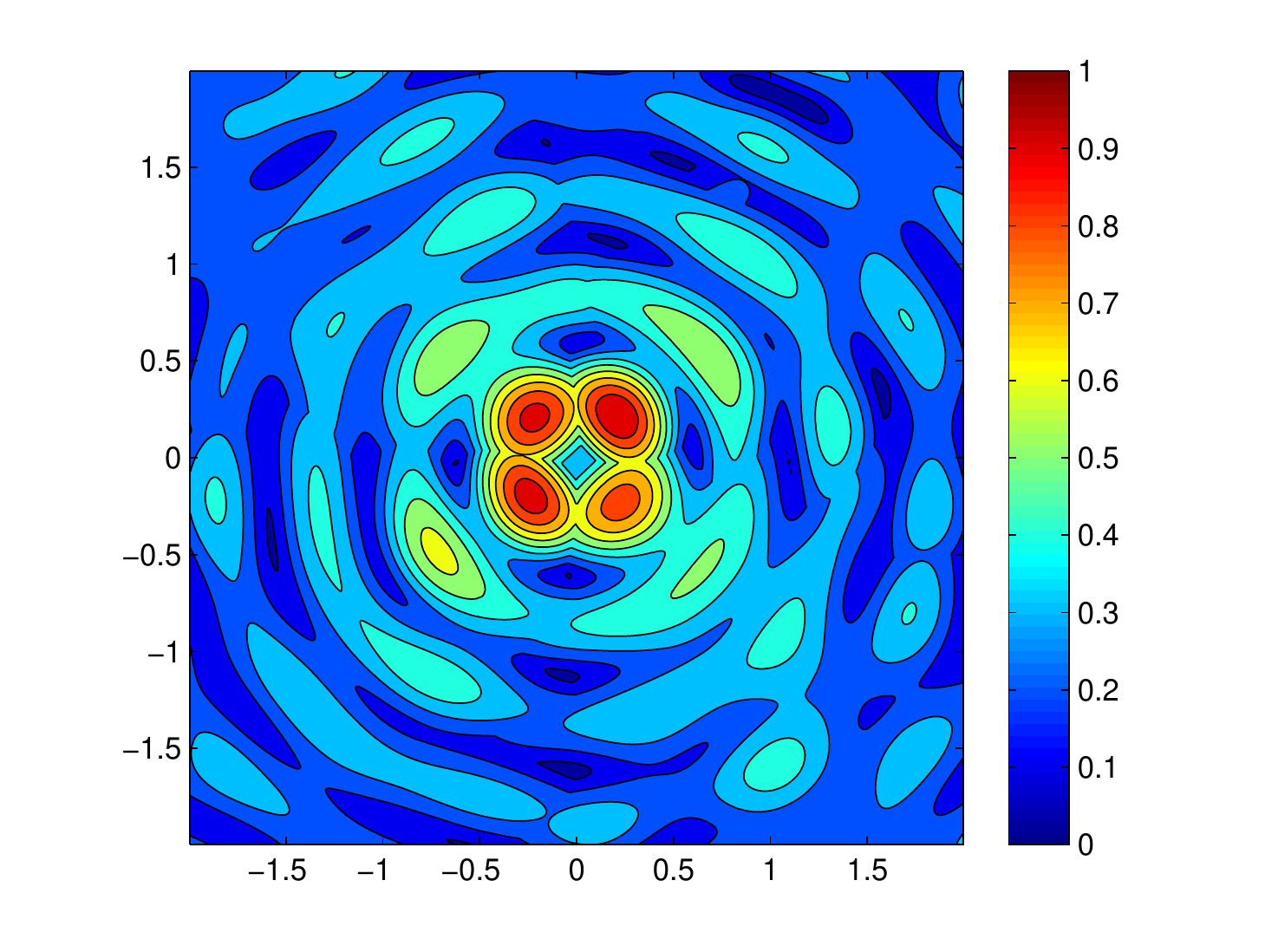}\hfill{}\hfill{}

\hfill{}(e)\hfill{}\hfill{}(f)\hfill{}\hfill{}(g)\hfill{}

\caption{\label{fig:ex4:near} Example 4: (a) true scatterer; reconstruction
results using exact near-field data (b), (c), (d), and noisy data
(e), (f), (g) with $\epsilon=20\%$; incident directions $d_{1}$,
$d_{2}$ and ($d_{1},d_{2}$) used respectively for (b) \& (e), (c)
\& (f) and (d) \& (g).}
\end{figure}

\begin{figure}
\hfill{}\includegraphics[width=0.25\textwidth]{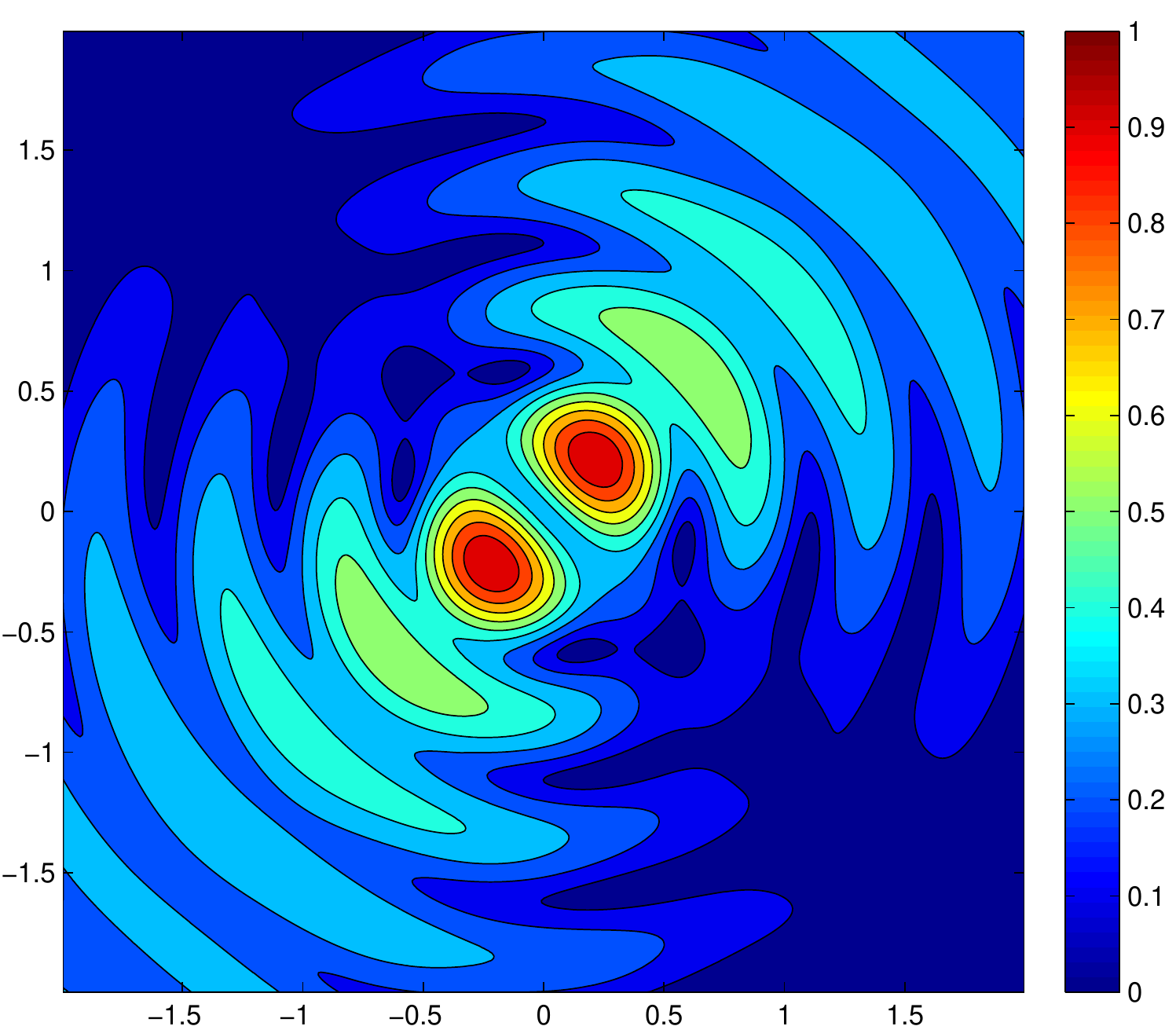}\hfill{}\includegraphics[width=0.25\textwidth]{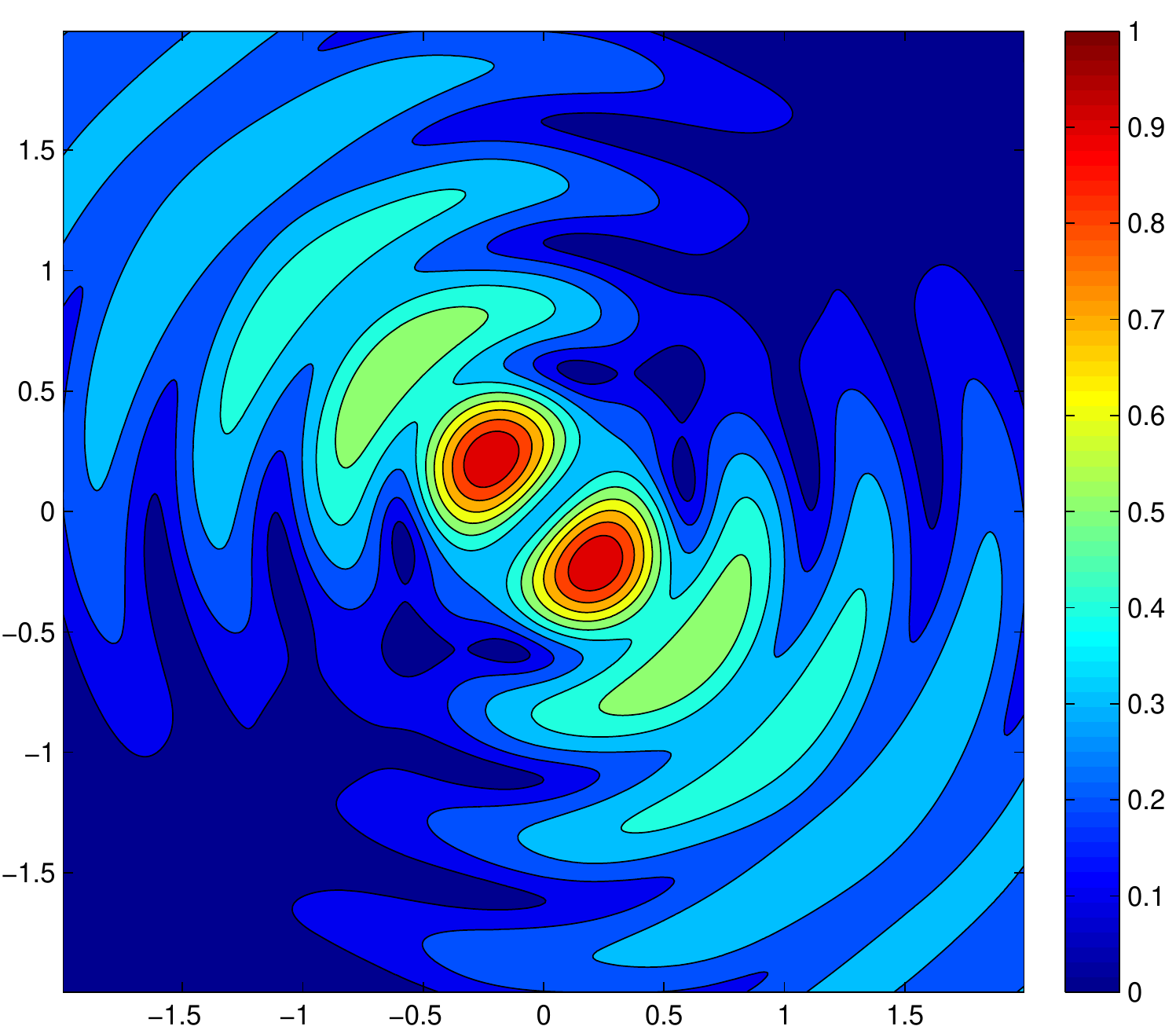}\hfill{}\includegraphics[width=0.25\textwidth]{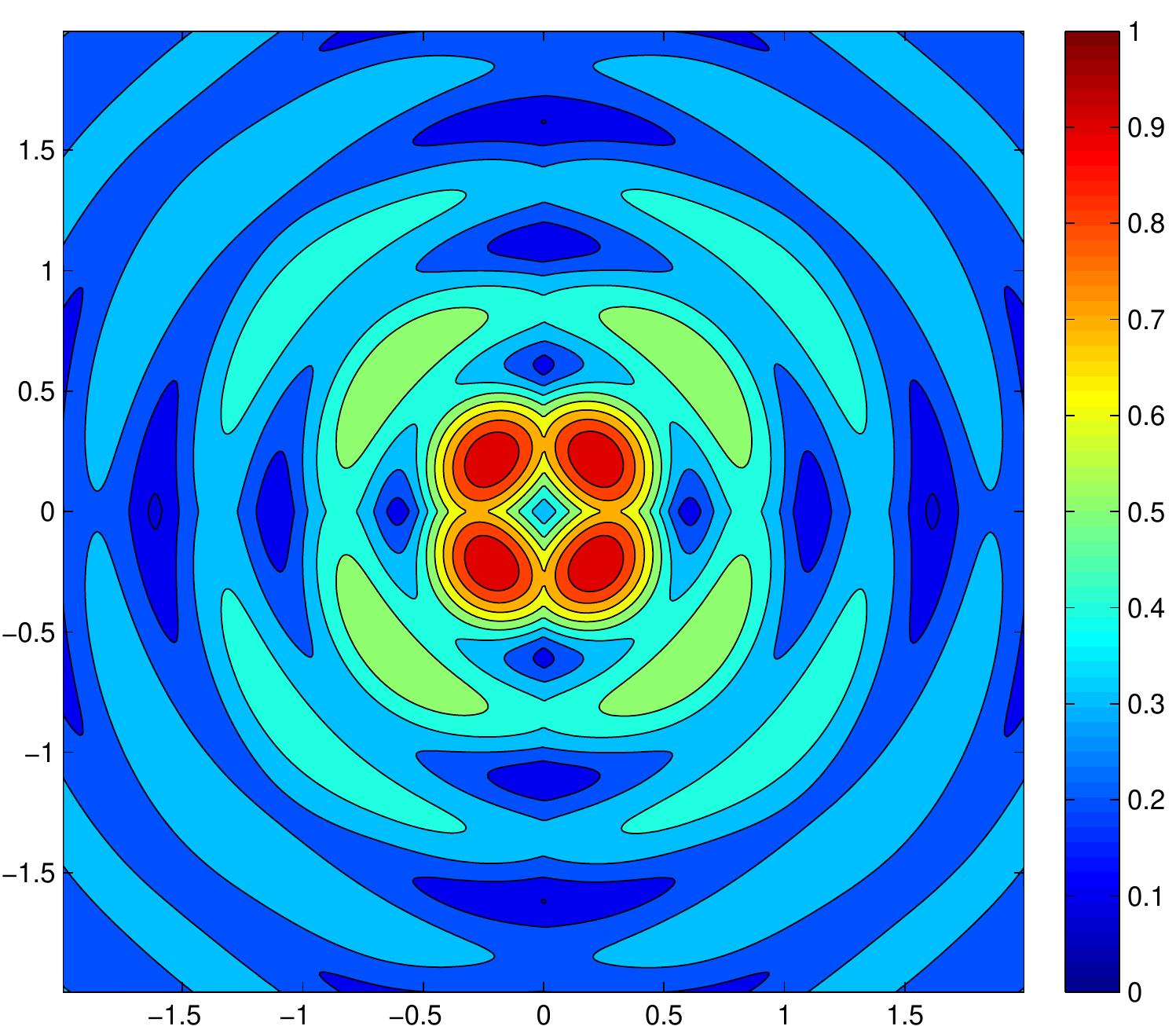}\hfill{}\hfill{}

\hfill{}\!\!\!\!\!\!\!\!\!\!\!\!\!\!\!\!(a)\hfill{}
~~~~~(b)\hfill{}~~~~~~~~ (c)\hfill{}

\hfill{}\includegraphics[width=0.25\textwidth]{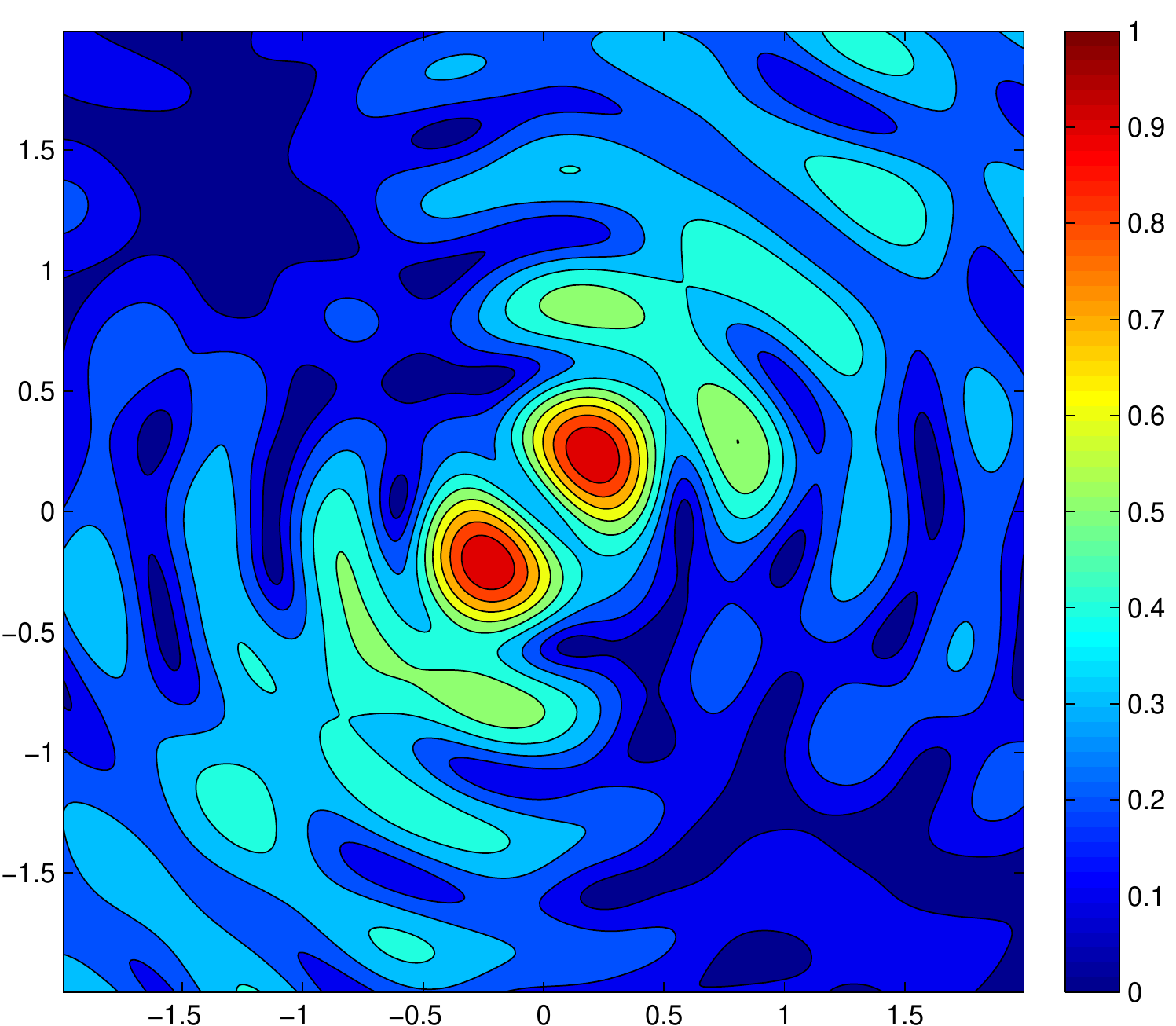}\hfill{}\includegraphics[width=0.25\textwidth]{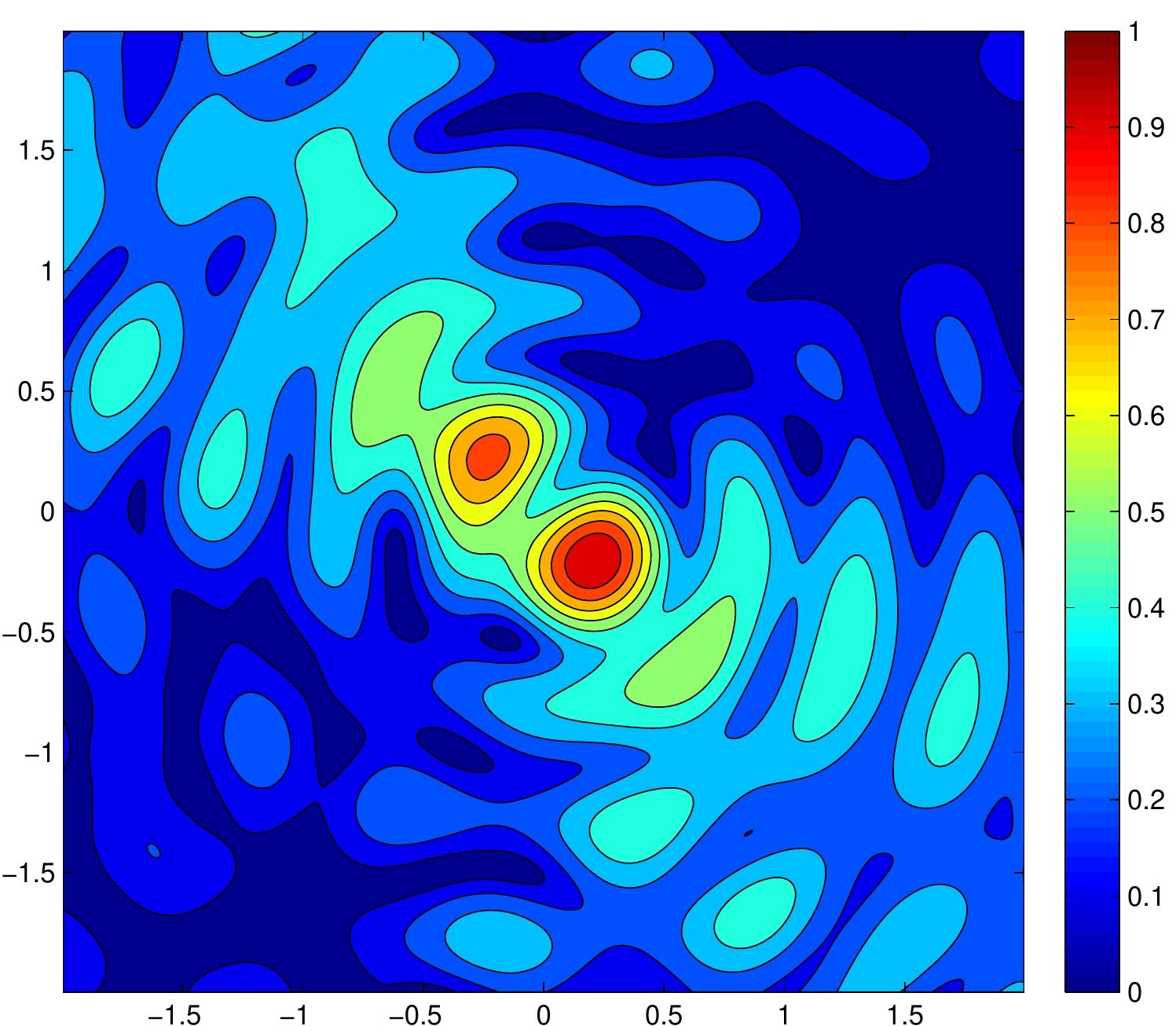}\hfill{}\includegraphics[width=0.25\textwidth]{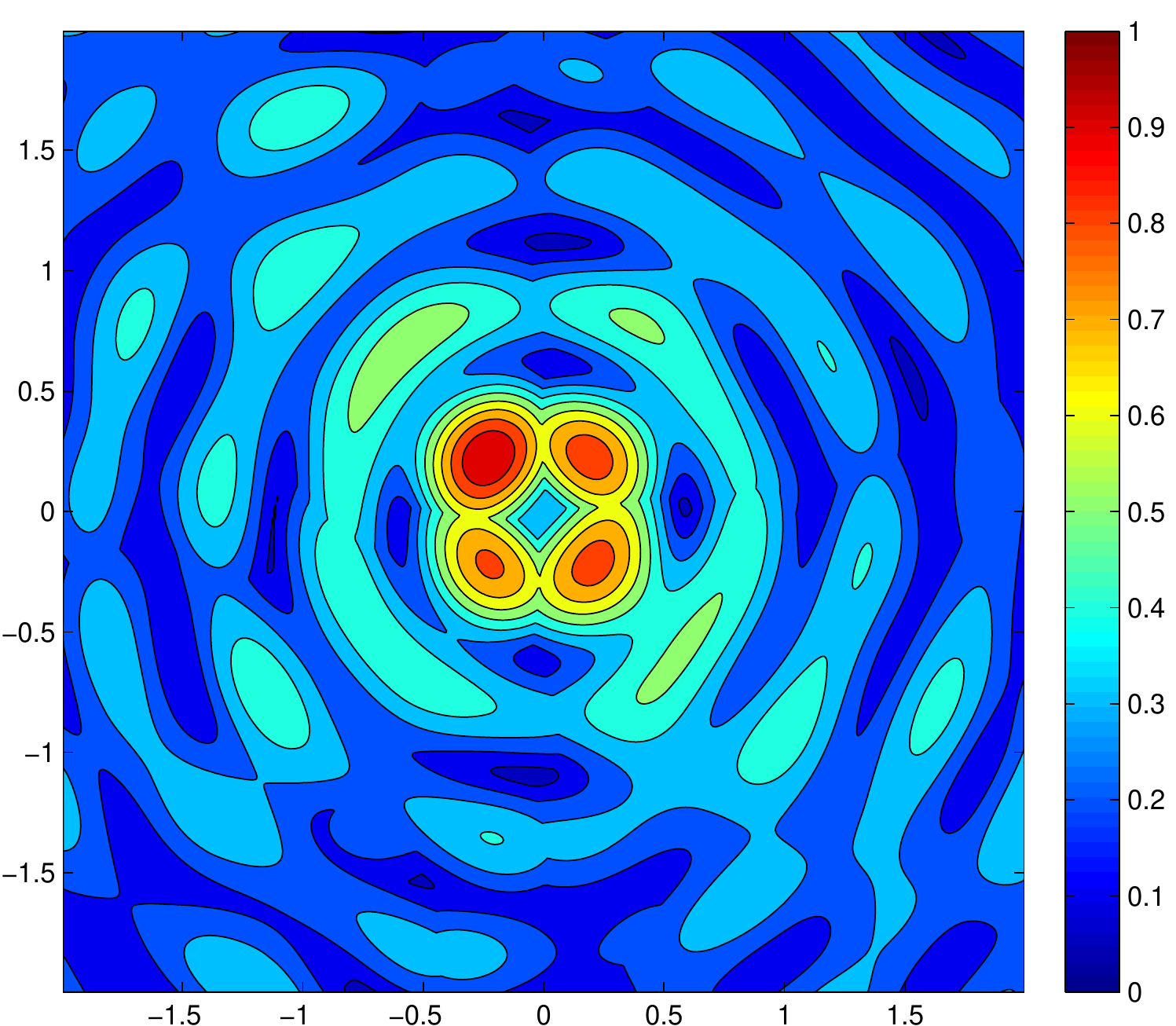}\hfill{}\hfill{}

\hfill{}\!\!\!\!\!\!\!\!\!\!\!\!\!\!\!\!(d)\hfill{}
~~~~~(e)\hfill{}~~~~~~~~ (f)\hfill{}

\caption{\label{fig:ex4:far} Example 4: reconstruction results using
exact far-field data (a), (b), (c), and noisy data (d), (d), (f)
with $\epsilon=20\%$; incident directions $d_{1}$, $d_{2}$ and
($d_{1},d_{2}$) used respectively for (a) \& (d), (b) \& (e) and (c)
\& (f).}
\end{figure}

\smallskip{}

From the above four examples, we see that the DSM(f) has similar performance
as the DSM(n).  Both methods work stably and can tolerate strong noise.
These estimated locations and index values can serve as good initial
guesses for further reconstructions of more accurate locations and
inhomogeneity profile $\eta(x)$ of scatterers through computationally
much more demanding resolution processes such as nonlinear optimizations.

\subsection*{Physical interpretation}

Before we study more challenging examples,  we try to provide a physical justification
of the DSM using both near-field and far-field data based on some numerical observations.

Let us first consider the
key factor $\Im\left(G(x_{p},x_{j})\right)$ in the correlation measure
\eqref{eq:uinf:reason}, which can be explicitly written (with an
appropriate scaling factor) as:
\begin{equation}
C_{N}\Im\left(G(x_{p},x_{j})\right)=\begin{cases}
{\displaystyle J_{0}(k|x_{j}-x_{p}|)}, & N=2;\\
\\
{\displaystyle \frac{\sin(k|x_{j}-x_{p}|)}{k|x_{j}-x_{p}|}}, & N=3,
\end{cases}\label{eq:Imag:G}
\end{equation}
 where $C_{N}=4$ and $4\pi$ for $N=2$ and $3$, respectively. Figure\ \ref{fig:decay}
shows that the leading term $C_{N}\,\Im\left(G(x_{p},x_{j})\right)$
of the correlation measure achieves the maximum when $x_{p}$ approaches
$x_{j}$, and decays quickly as the distance between them increases.

\begin{figure}
\hfill{}\includegraphics[width=0.48\textwidth]{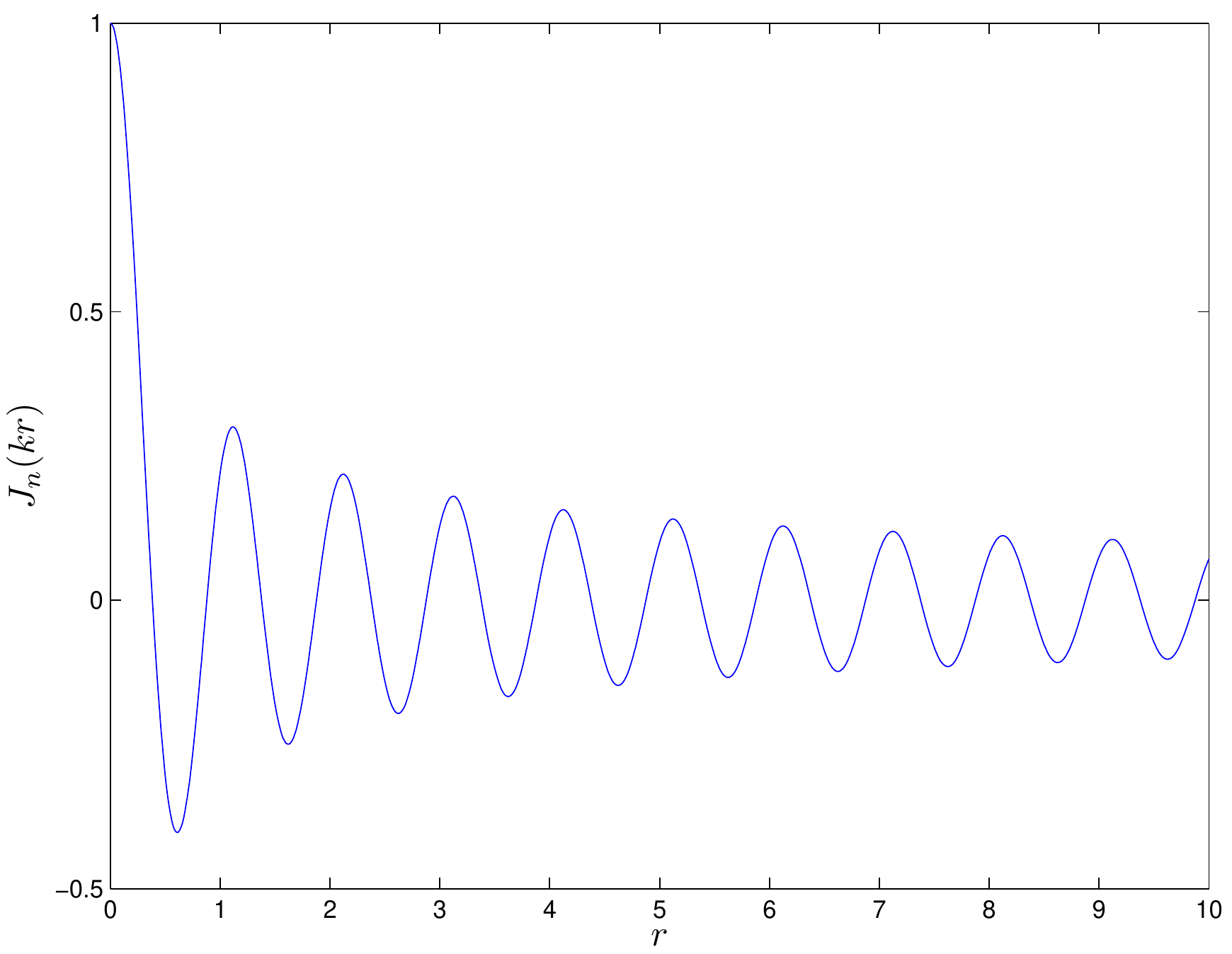}\hfill{}\includegraphics[width=0.48\textwidth]{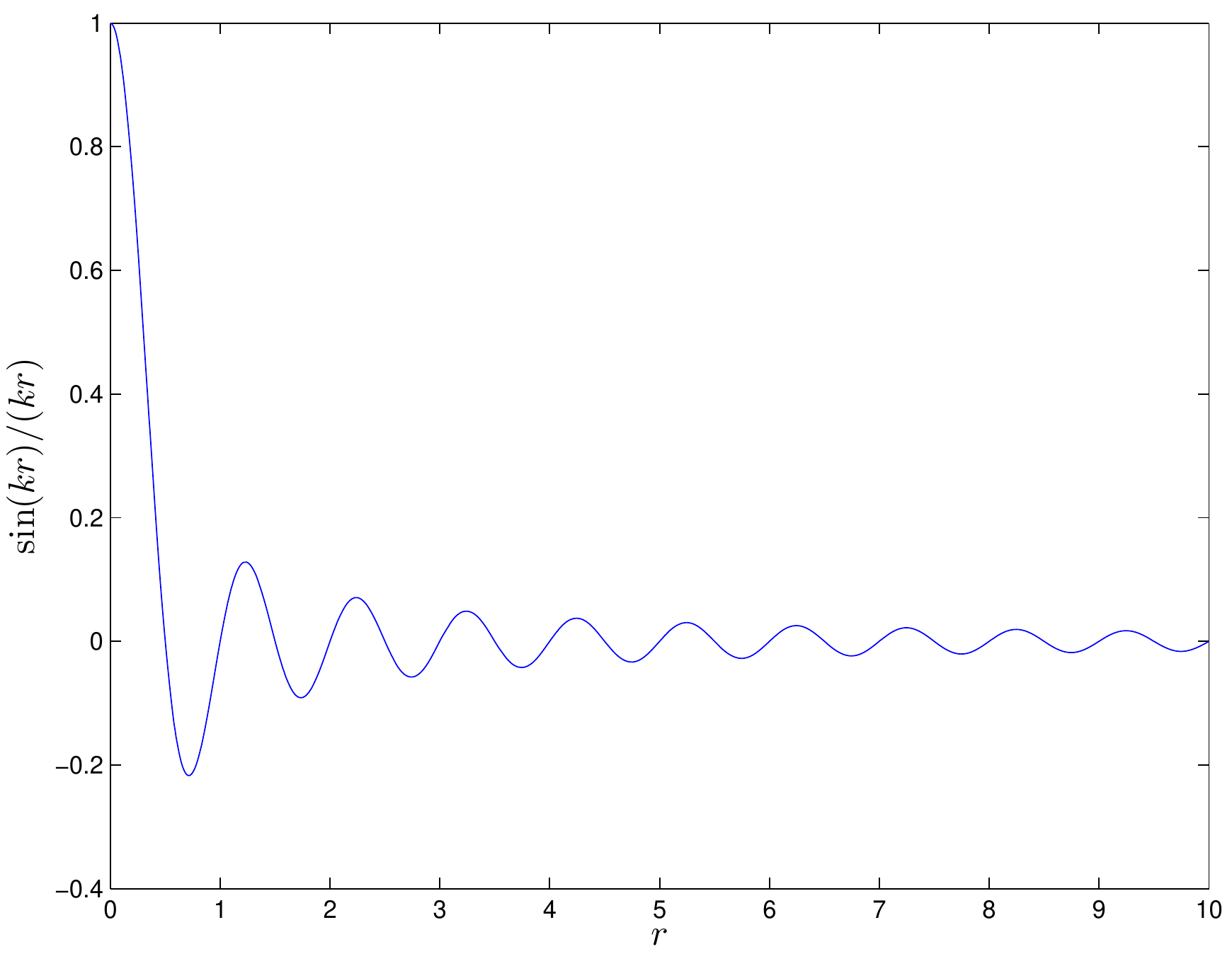}\hfill{}\hfill{}

\hfill{}(a)\hfill{}\hfill{}(b)\hfill{}

\caption{\label{fig:decay}Decay patterns of
$C_{N}\,\Im\left(G(x_{p},x_{q})\right)$ as $r=|x_{p}-x_{q}|$
increases in two dimensions (a) and three dimensions (b) when $k=2
\pi$. }
\end{figure}

Let us now first explain the DSM using far-field data. From Figure\ \ref{fig:decay},
it can be seen clearly that there is much stronger pattern correlation
between $u^{\infty}(\hat{x})$ and $G^{\infty}(\hat{x},x_p)$ in \eqref{eq:uinf:reason} when $r=|x_{p}-x_{j}|$ is small, namly  the sampling point $x_{p}$
is  sufficiently close to the center $x_{j}$ of some inhomogeneous sampling element
where $I_{j}$ does not vanish due to the non-vanishing $\eta(x_{j})$. Let us take Example
1 to illustrate this key relation.
We plot in Figure~\ref{fig:ratio:uinf:Ginf} the pointwise  complex ratio of the far field data $u^{\infty}(\hat{x})$
to the far field pattern $G^{\infty}(\hat{x},x_p)$  of the  fundamental solutions located at the sampling
point $x_p$ being the origin.
This complex ratio is computed with respect to 50 observation angles in $[0,2\pi]$ in the polar form with its polar radius and phase angle depicted in the left and right plots, respectively. It can be easily observed that both the polar radius  and the phase angle in Figure~\ref{fig:ratio:uinf:Ginf} approaches some constants. In other words,  $u^{\infty}(\hat{x})$ and $G^{\infty}(\hat{x},x_p)$  are almost linear dependent with each other.

\begin{figure}
\hfill{}\includegraphics[width=0.47\textwidth]{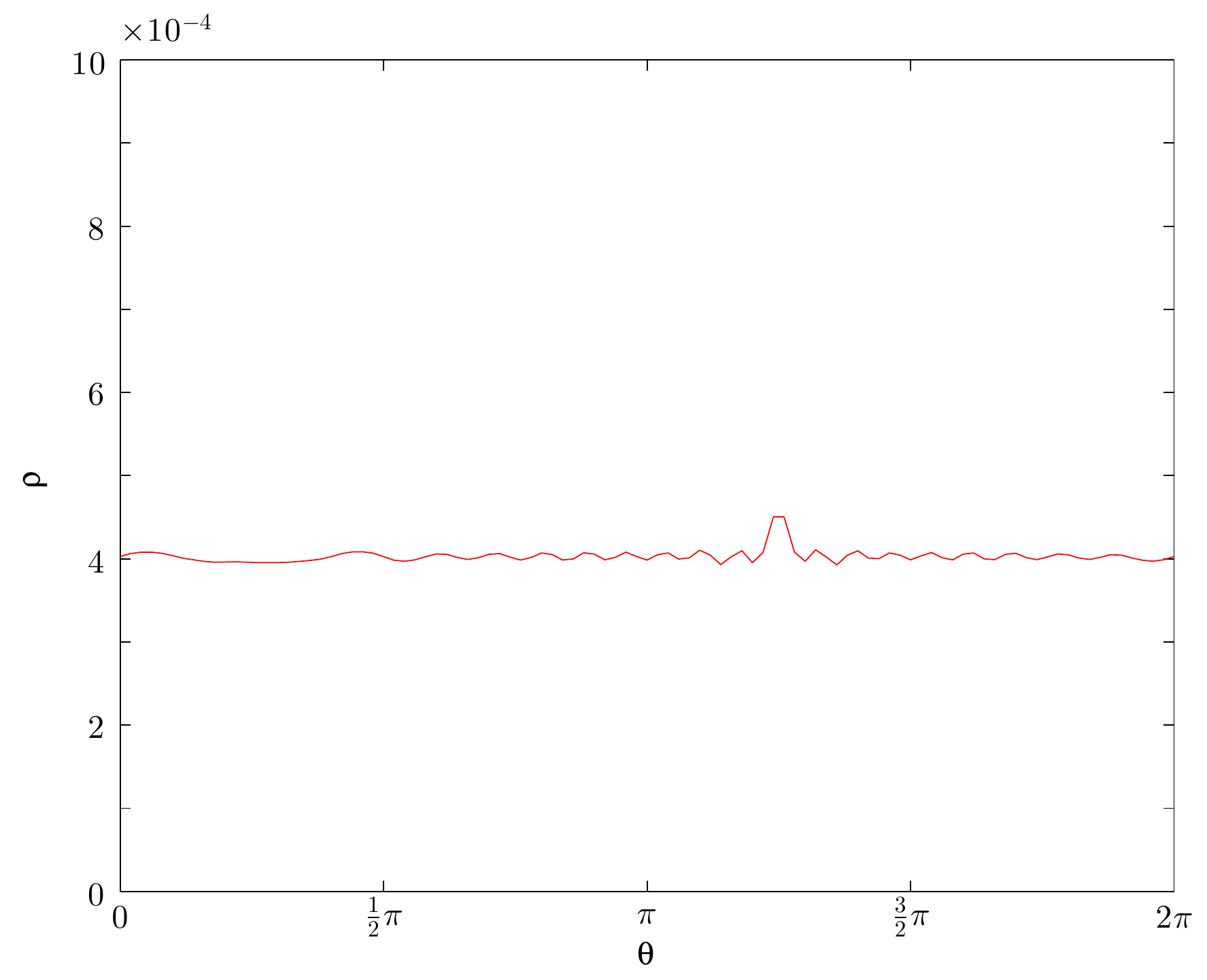}\hfill{}\includegraphics[width=0.49\textwidth]{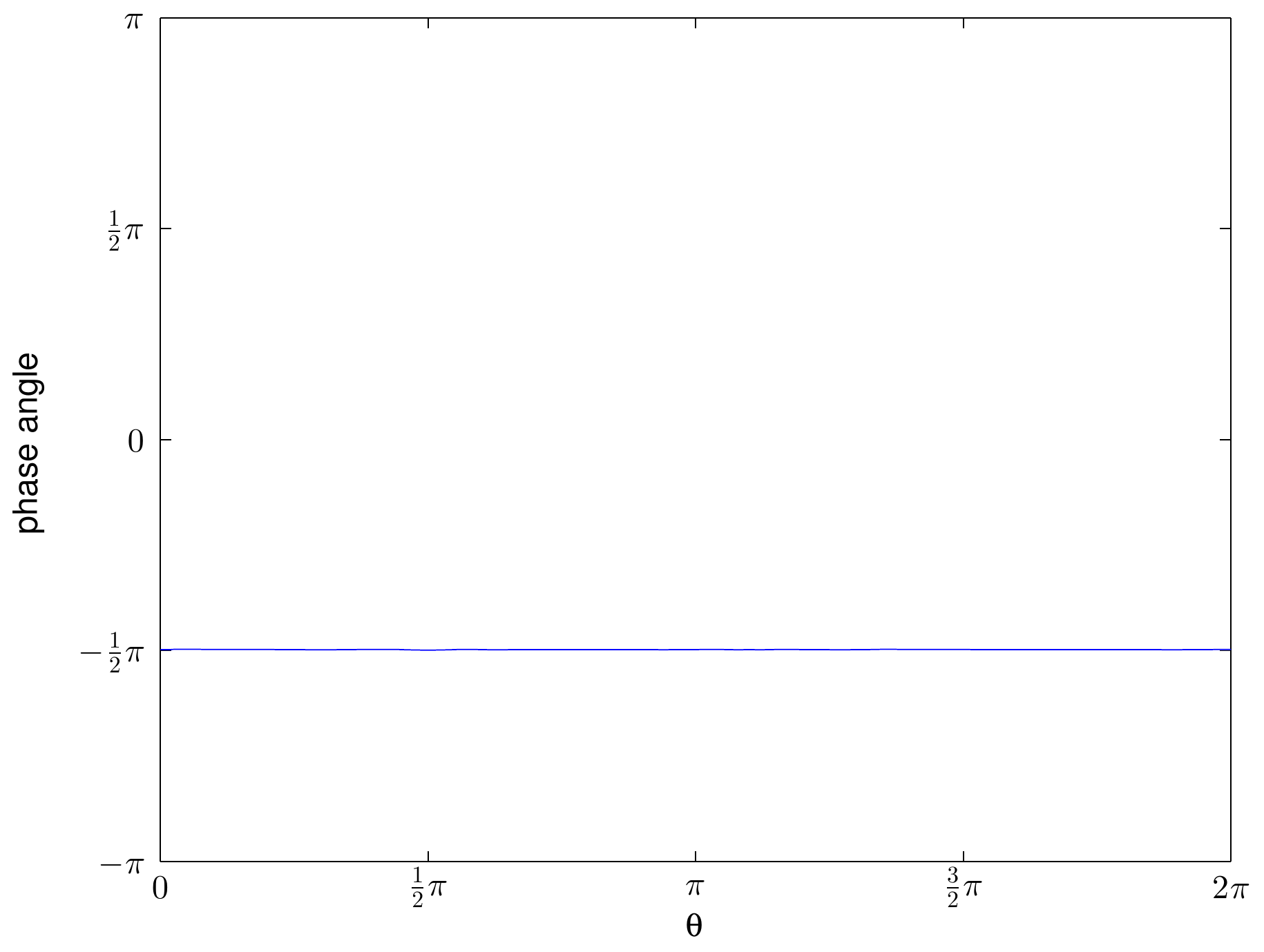}\hfill{}\hfill{}

\hfill{}(a)\hfill{}\hfill{}(b)\hfill{}

\caption{\label{fig:ratio:uinf:Ginf} Pointwise complex ratio ((a) polar radius; (b) phase angle) of $u^{\infty}(\hat{x})$ to  $G^{\infty}(\hat{x},x_p)$ versus all the observation angle $\theta$. Here $x_p$ is chosen to be the origin, the center of the small inhomogeneous scatterer in Example 1.}
\end{figure}

For multi-component small-scaled medium scatterers with sufficiently large distance
between each other, multiple scattering is weak and thus
ignored, so  the exact  scattered wave field
$u^{s}$ can be approximated by its reduced counterpart, namely a superposition of small point sources  induced
by the incident plane wave illuminating the scatterers.  Thanks to the superposition principle
of the wave phenomena, some scattered wave by an individual small
medium scatterer shares common decay pattern with $G(x,x_{p})$ when $x_{p}$
falls within the inhomogeneous medium, which leads to the parallelism
of some component of $u^{\infty}(\hat{x})$ and $G^{\infty}(\hat{x},x_{p})$
in the trace space on $\mathbb{S}^{N-1}$ as shown in Figure~\ref{fig:ratio:uinf:Ginf}. Physically speaking, it amounts
to exciting point sources when a plane wave impinges on some small-scaled
media (size of media is less then a half wavelength). Then the far-field
pattern $G^{\infty}(\hat{x},x_{p})$ of the fundamental solution becomes
a major component of the far-field $u^{\infty}$ when $x_{p}$ is
located within the scatterer. Therefore, if the indicator function $\Phi^{\infty}(x_{p})$
approaches $1$ for some sampling point $x_{p}$, it is plausible
to claim that $x_{p}$ is within the scatterer obstacle or medium.

The reasoning for the DSM using near-field data can be viewed as
a direct consequence of
that of its far-field version. Again we use Example 1
for illustration by taking  the sampling point $x_p$ to be the origin. Under the assumption of sufficiently large
distance between the measurement surface $\Gamma$ and the small scatterer inside
(normally four or five times the wavelength), the traces of $u^{s}(x)$ and
$G(x,x_{p})$ on $\Gamma$ depend approximately linearly on their
asymptotic amplitudes $u^{\infty}(\hat{x})$ and
$G^{\infty}(\hat{x},x_{p})$, respectively, up to some complex scaling
factors.  This can be verified by checking the pointwise complex ratio of $u^{s}(x)$
to $u^{\infty}(\hat{x})$ and that of $G(x,x_{p})$ to $G^{\infty}(\hat{x},x_{p})$, respectively,
as showin in
Figure~\ref{fig:ratio:us:Gs}. By recalling the nearly linear dependence of  $u^{\infty}(\hat{x})$ and  $G^{\infty}(\hat{x},x_{p})$ and the transitivity of the three linear approximations aforementioned, it leads naturally
to the nearly linear dependence of the traces of $u^{s}(x)$ and
$G(x,x_{p})$ on $\Gamma$, which explain the effectiveness of the indicator function \eqref{eq:indicator:near} in \cite{IJZ12}.

For multi-component scatterers, as in the far field case, the scattering field $u^s$
may well reduce to a superposition of small point sources  induced
by the incident plane wave illuminating the scatterers plus some high order terms due to
the multiple scattering.
When the sampling point $x_p$ falls within a medium scatterer, one
can obtain a nearly linear dependence of a significant component of the scattering field $u^s$
on the trace of  the point source $G(x,x_{p})$ restricted on $\Gamma$.     After
normalization, such physical correlation of the near fields $u^s$ and $G(x,x_{p})$ leads to a fairly large
value (nearly one) of the indicator function $\Phi(x_{p})$ for the
sampling point $x_{p}$.

\begin{figure}
\hfill{}\includegraphics[width=0.48\textwidth]{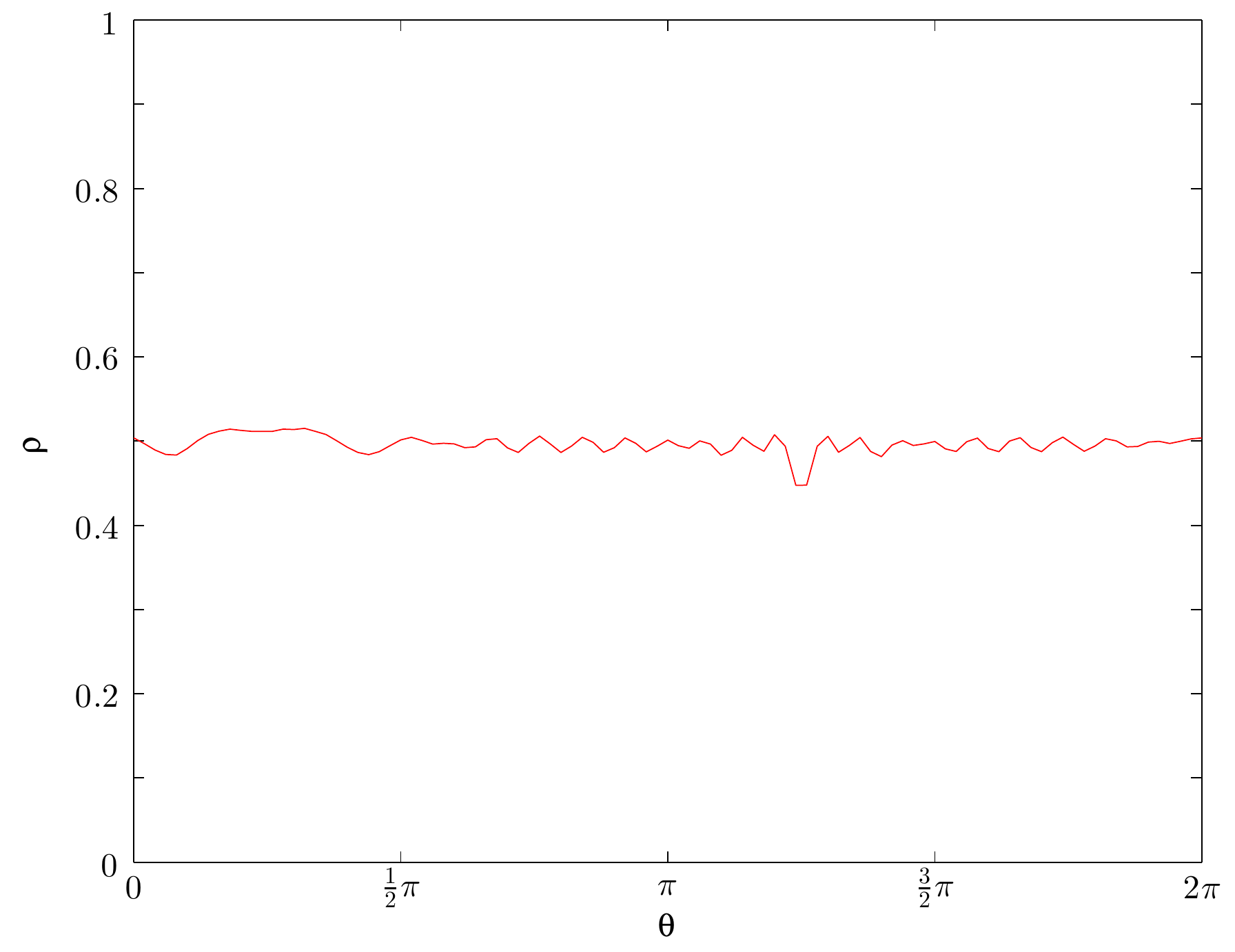}\hfill{}\includegraphics[width=0.48\textwidth]{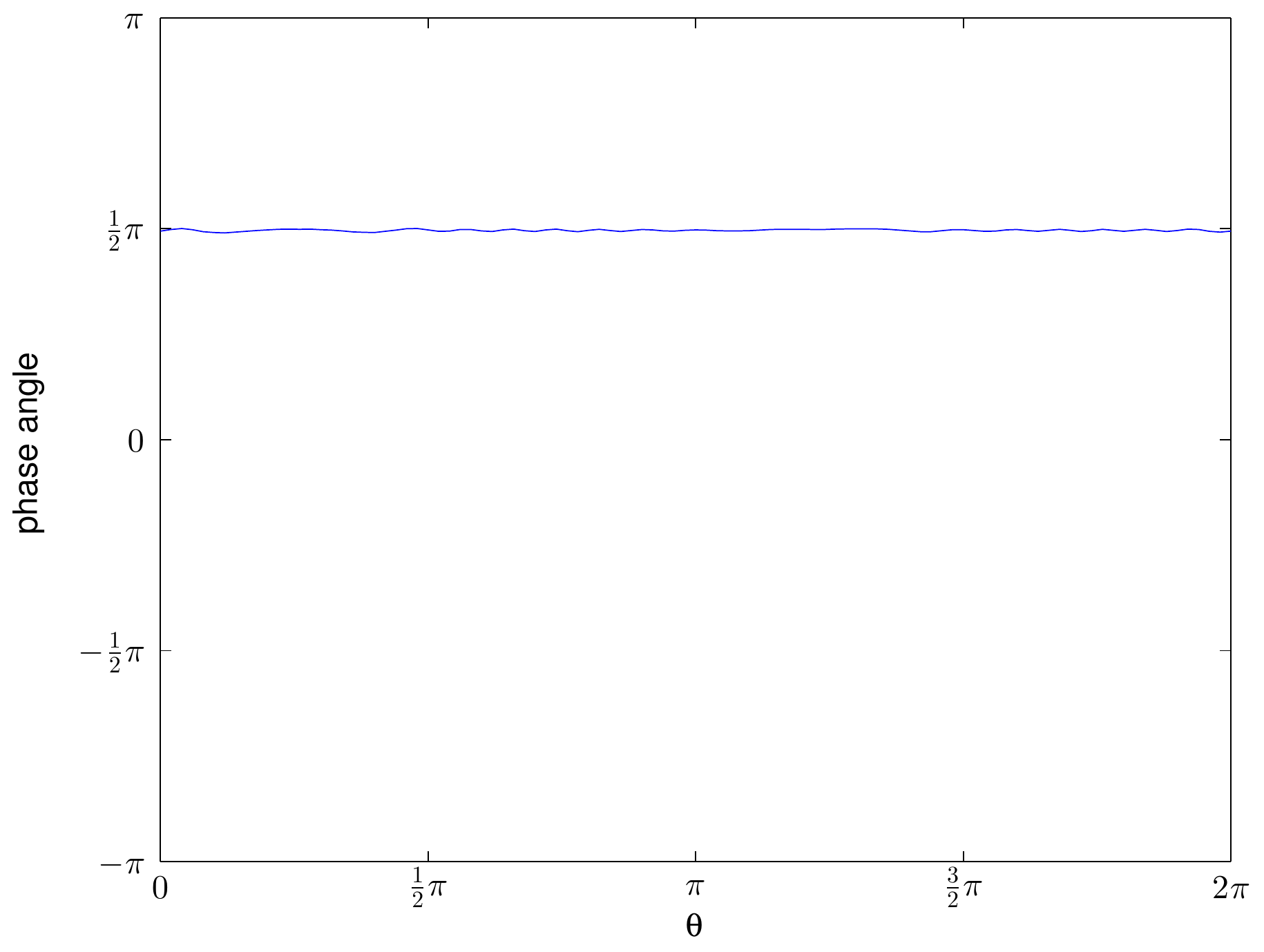}\hfill{}\hfill{}

\hfill{}(a)\hfill{}\hfill{}(b)\hfill{}

\hfill{}

Complex ratio ((a) polar radius; (b) phase angle) of $u^{s}({x})$ to $u^{\infty}(\hat{x})$  versus  the observation angle $\theta$.

\hfill{}\includegraphics[width=0.48\textwidth]{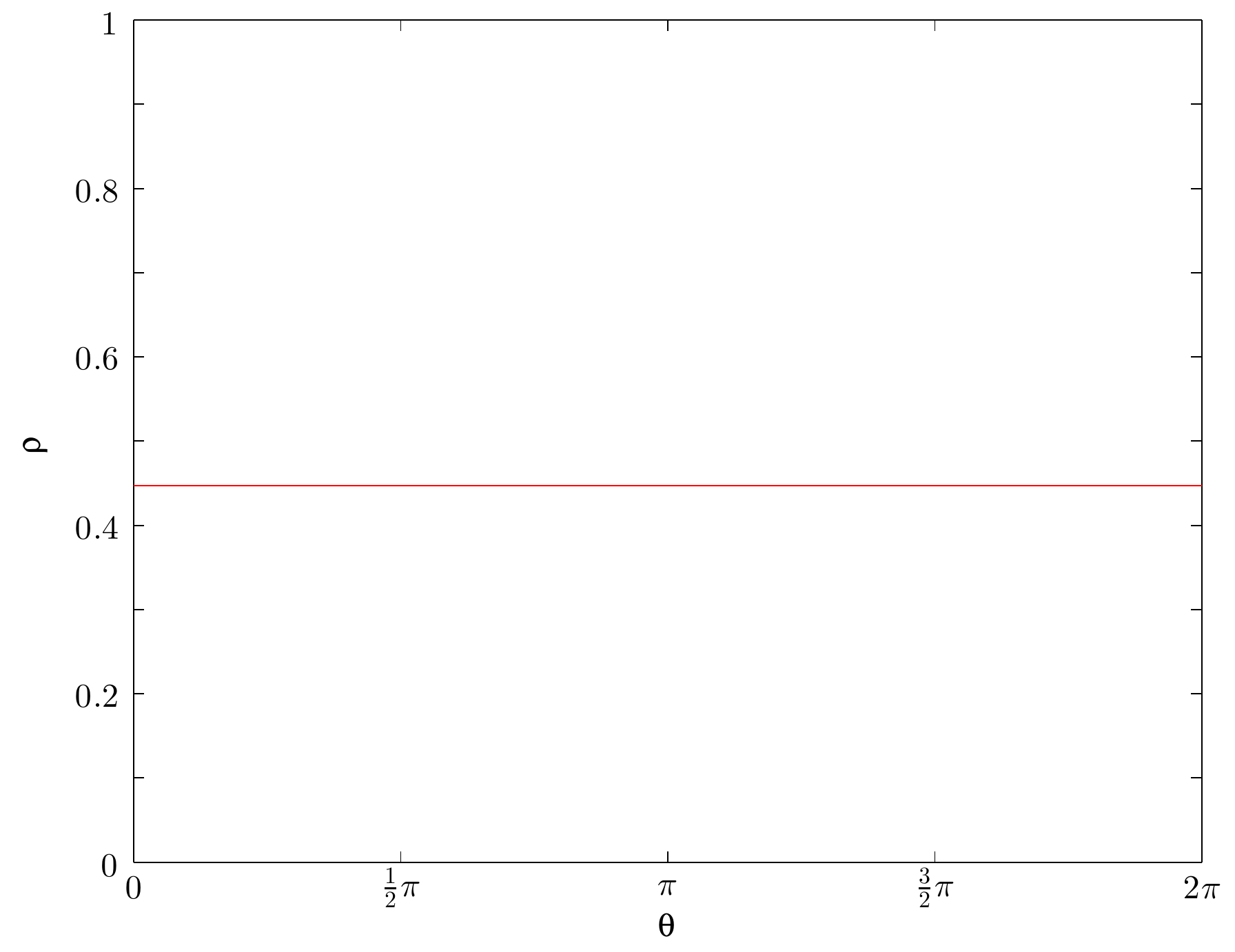}\hfill{}\includegraphics[width=0.48\textwidth]{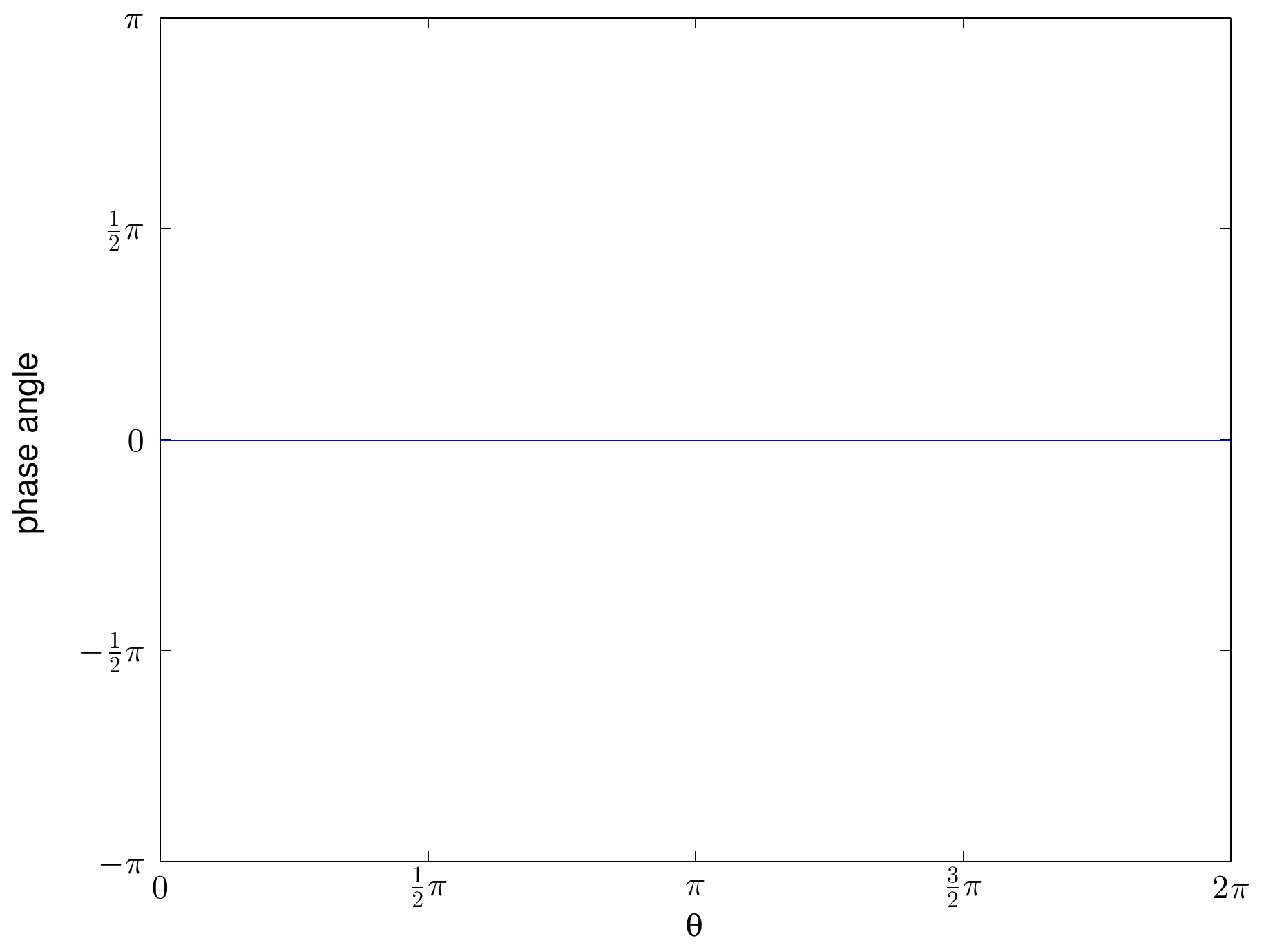}\hfill{}\hfill{}

\hfill{}(c)\hfill{}\hfill{}(d)\hfill{}

Complex ratio ((c) polar radius; (d) phase angle) of $G({x},x_p)$ to $G^{\infty}(\hat{x},x_p)$ versus  the observation angle $\theta$.

\caption{\label{fig:ratio:us:Gs}  Illustration of approximately linear dependence of the traces of $u^{s}(x)$ and
$G(x,x_{p})$ on $\Gamma$  on their
asymptotic amplitudes $u^{\infty}(\hat{x})$ and
$G^{\infty}(\hat{x},x_{p})$, respectively. Here $x_p$ is chosen to be the origin, the center of the small inhomogeneous scatterer in Example 1.}
\end{figure}

\subsection*{DSM for mixed obstacle and medium scatterers}

The next example considers a mixed scatterer composed of an obstacle
and an inhomogeneous medium with variable material coefficient.

\textbf{Example 5 (A mixed scatterer with obstacle and medium)}. The
example considers two square scatterers of side length $0.3\lambda$.
The two scatterers are located at $(-0.8\lambda,\,-0.7\lambda)$ (a
sound-soft obstacle) and $(0.3\lambda,\,0.8\lambda)$ (an inhomogeneous
medium), respectively. The lower left one is an obstacle, which is
excluded from the domain and denoted by a white square. We vary the
coefficient $\eta$ in the upper right inhomogeneous square to study
the wave interaction of obstacle and medium scatterers.

This example verifies the capability of both types of DSMs to identify
obstacles and also shows the condition under which the DSM can identify
both obstacles and media. The results for $\eta=1$ and $n^{2}=10+10\mathrm{i}$
using the DSM(n) and DSM(f) are shown in Figures\ \ref{fig:ex10:1}
and \ref{fig:ex10:1000} respectively. For $\eta=1$, it amounts to a refractive index $n=\sqrt{1+1/(2\pi)^2}$ for the right upper square medium scatter, which is
very  close to the background medium with unit refractive index. The
scattering effect of the inhomogeneous medium scatterer is so small compared to the scattered
wave due to the obstacle square. Therefore, one can only locate correctly
the position of the obstacle and loses track of the information of
the inhomogeneous medium scatterer.

\begin{figure}
\hfill{}\hspace{-0.03\textwidth}%
\begin{tabular}{c}
\includegraphics[clip,width=0.32\textwidth]{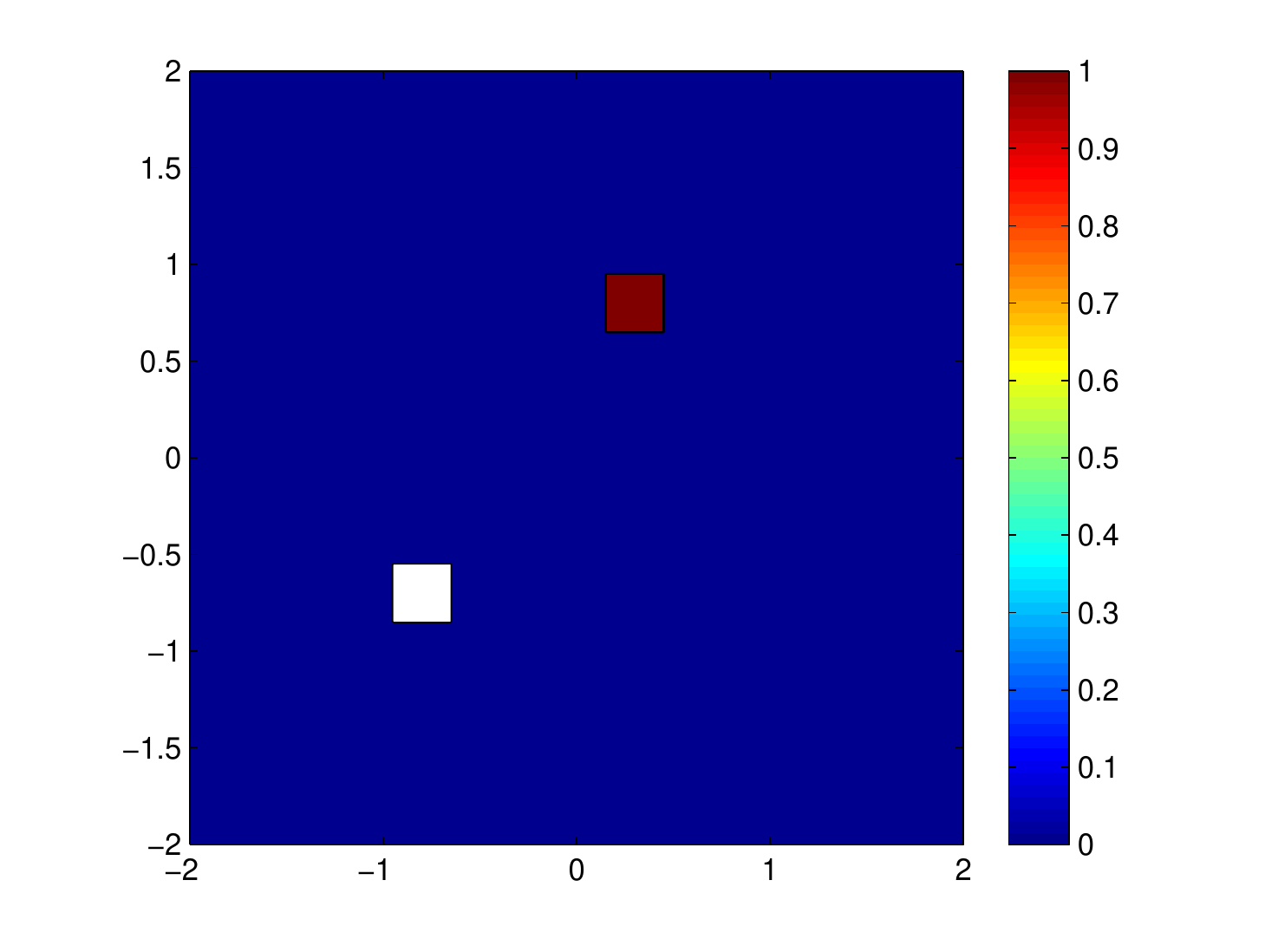}\tabularnewline
(a)\tabularnewline
\end{tabular}\negthinspace{}%
\begin{tabular}{cc}
\includegraphics[width=0.32\textwidth]{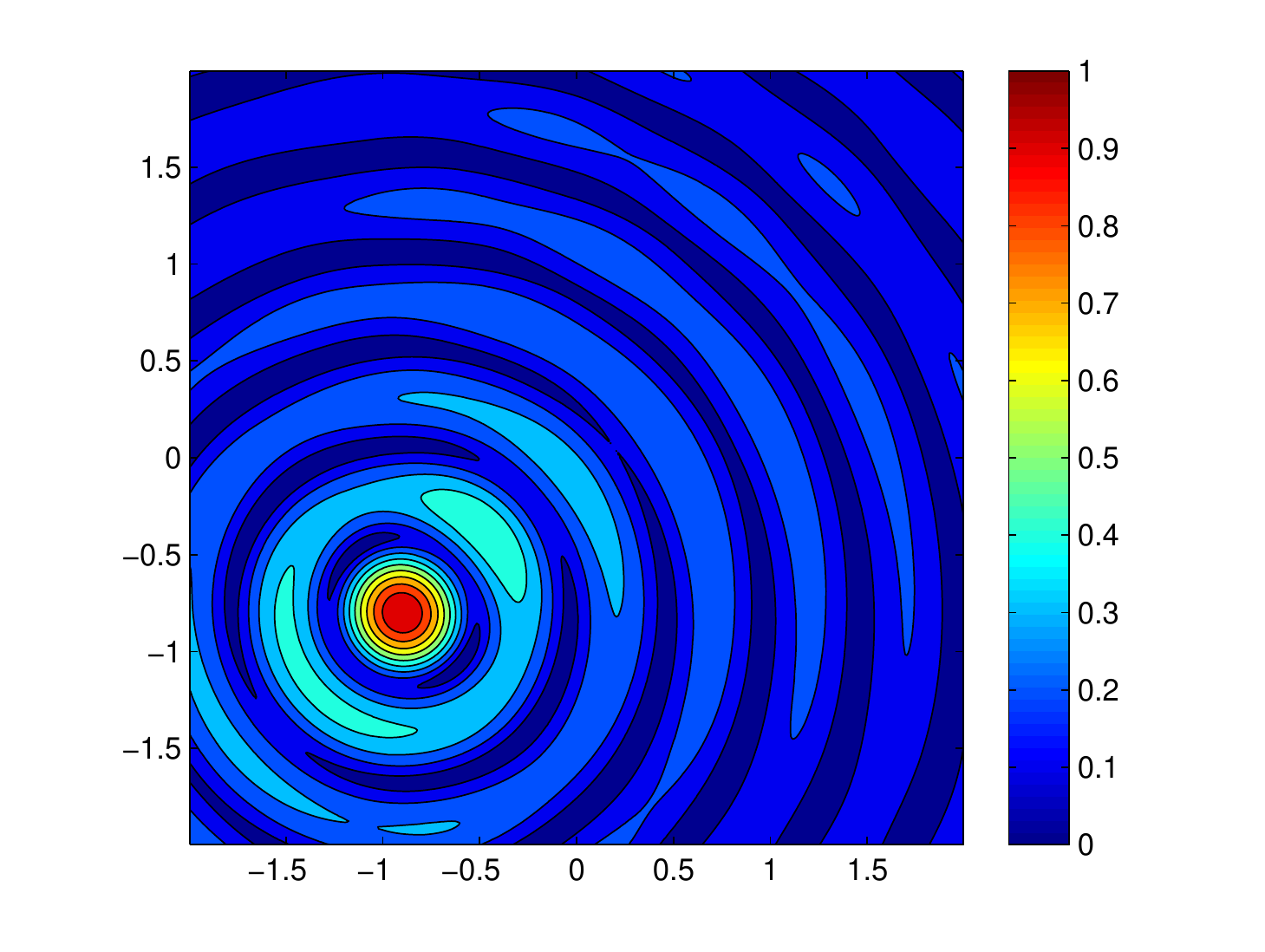} & \includegraphics[width=0.32\textwidth]{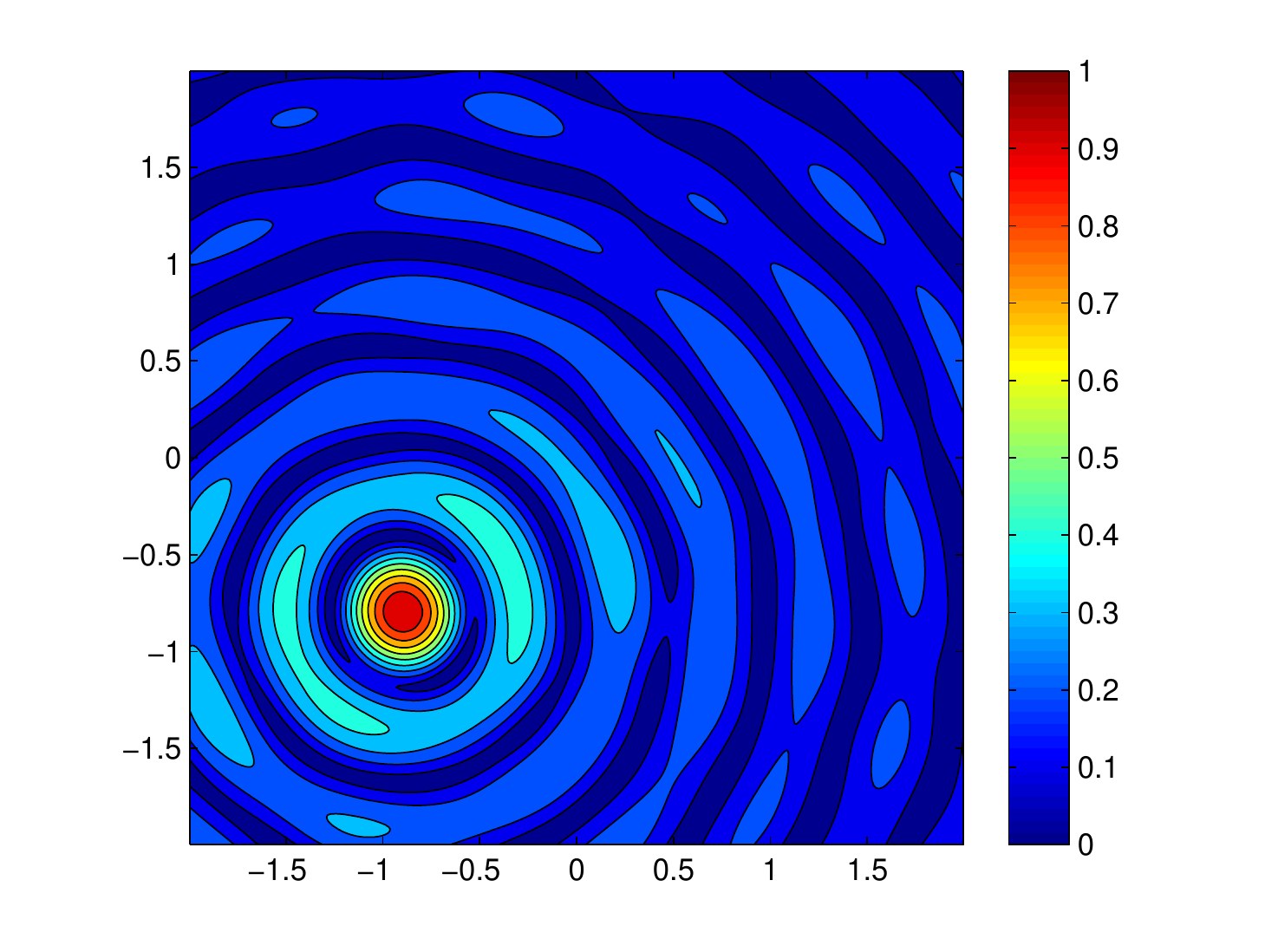}\tabularnewline
(b) & (c)\tabularnewline
\includegraphics[width=0.32\textwidth]{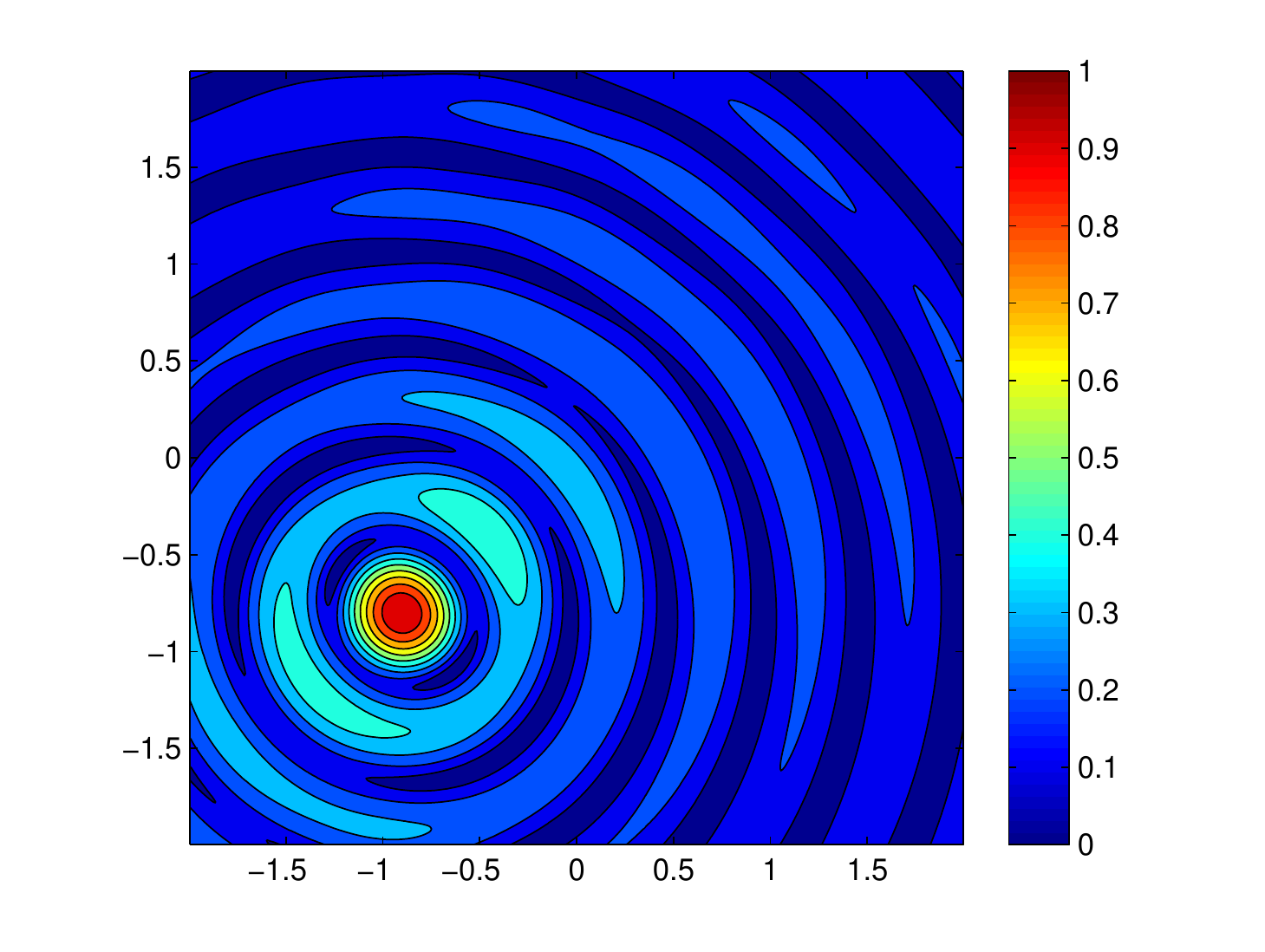} & \includegraphics[width=0.32\textwidth]{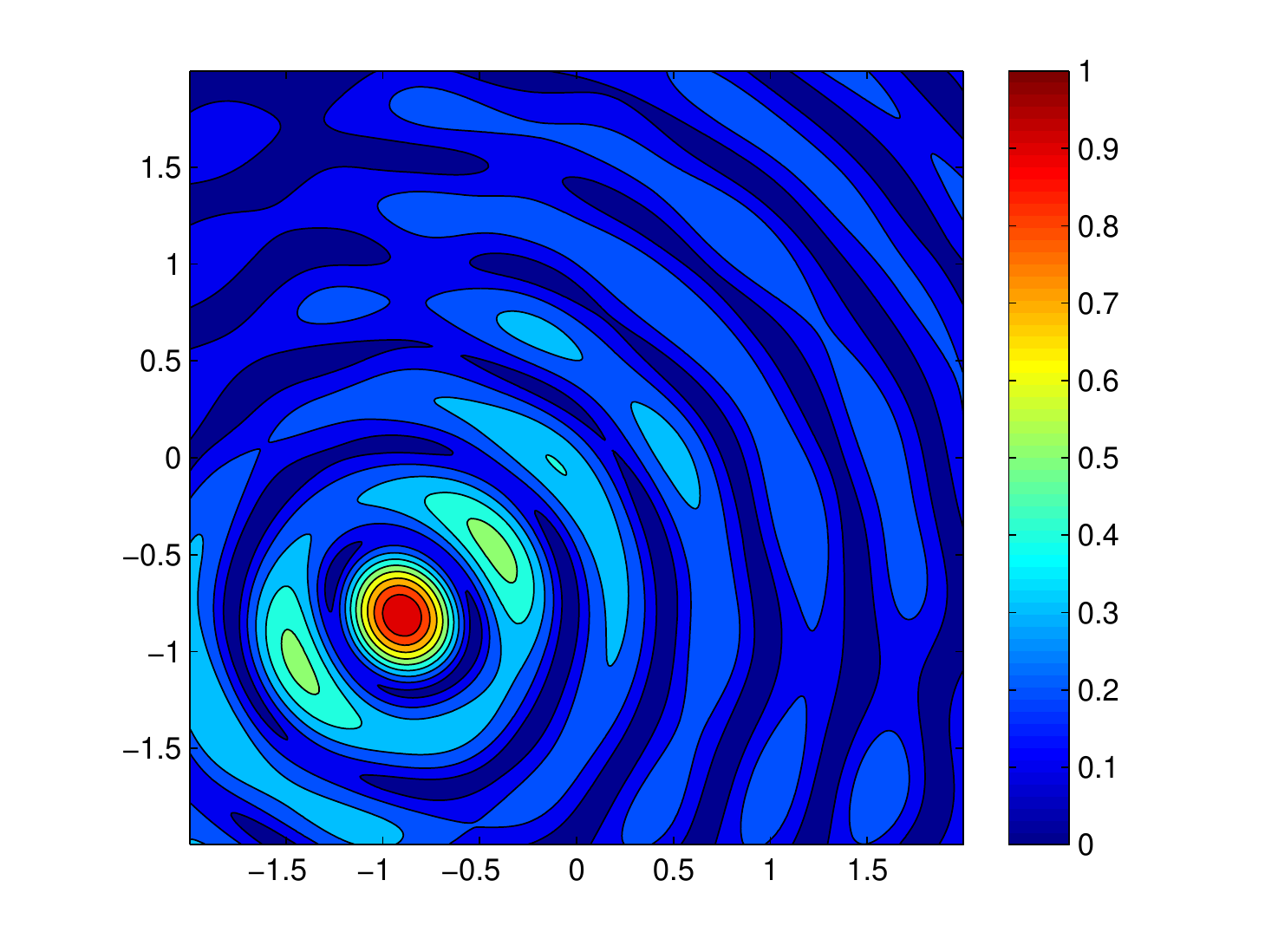}\tabularnewline
(d) & (e)\tabularnewline
\end{tabular}\hfill{}

\caption{\label{fig:ex10:1} Example 5 with $\eta=1$: (a) true scatterer;
reconstruction results using (b) exact near-field data, (c) noisy
near-field data with $\epsilon=20\%$, (d) exact far-field data, (e)
noisy far-field data with $\epsilon=20\%$.}
\end{figure}

However, when we increase the refractive index $n^{2}$ to be
$10+10\mathrm{i}$, the multiple scattering interaction of the scattered wave field due
to the obstacle and medium scatterers becomes more evident. Only in
such circumstances the location of the medium scatterer emerges
gradually as a red patch in the correct northeast location. In other
words, the DSM performs well for mixed types of scatterers under the
assumption that the interaction of scattered wave due to each
individual scatterer (be it an obstacle or a medium scatterer) must be strong enough. Such good performance is
quite robust with respect to large noise level.

\begin{figure}
\hfill{}%
\begin{tabular}{cc}
\includegraphics[width=0.32\textwidth]{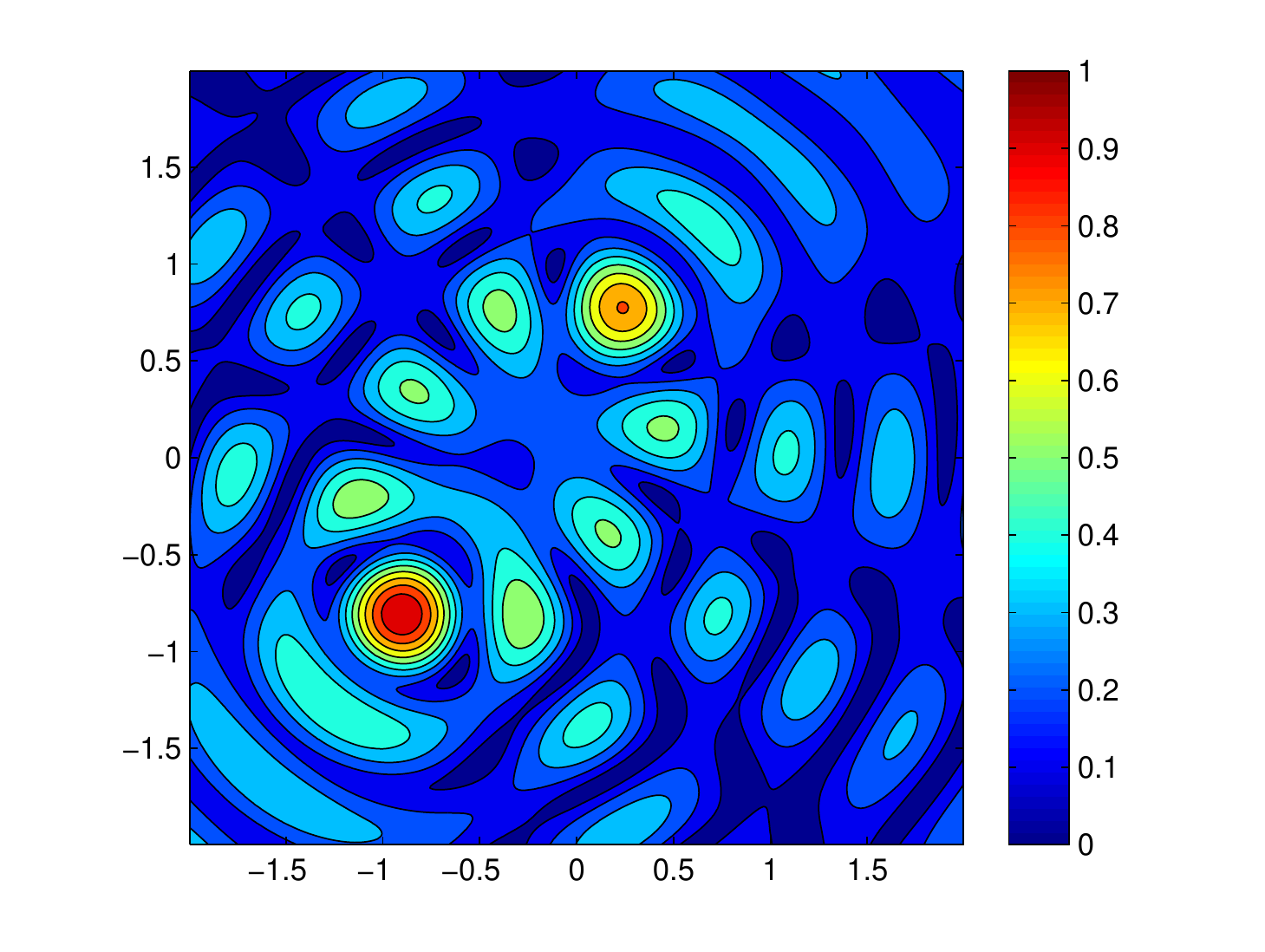} & \includegraphics[width=0.32\textwidth]{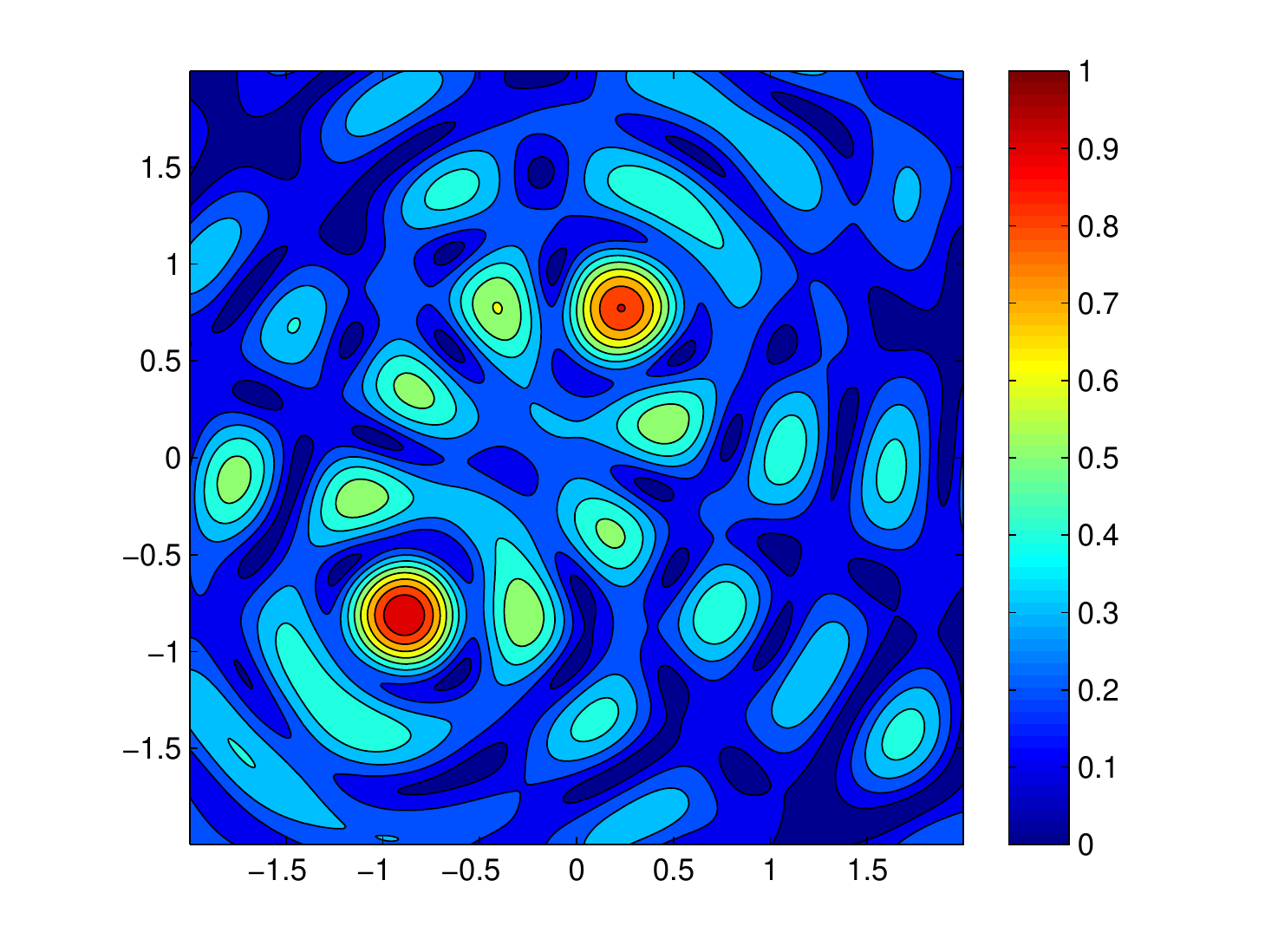}\tabularnewline
(a) & (b)\tabularnewline
\includegraphics[width=0.32\textwidth]{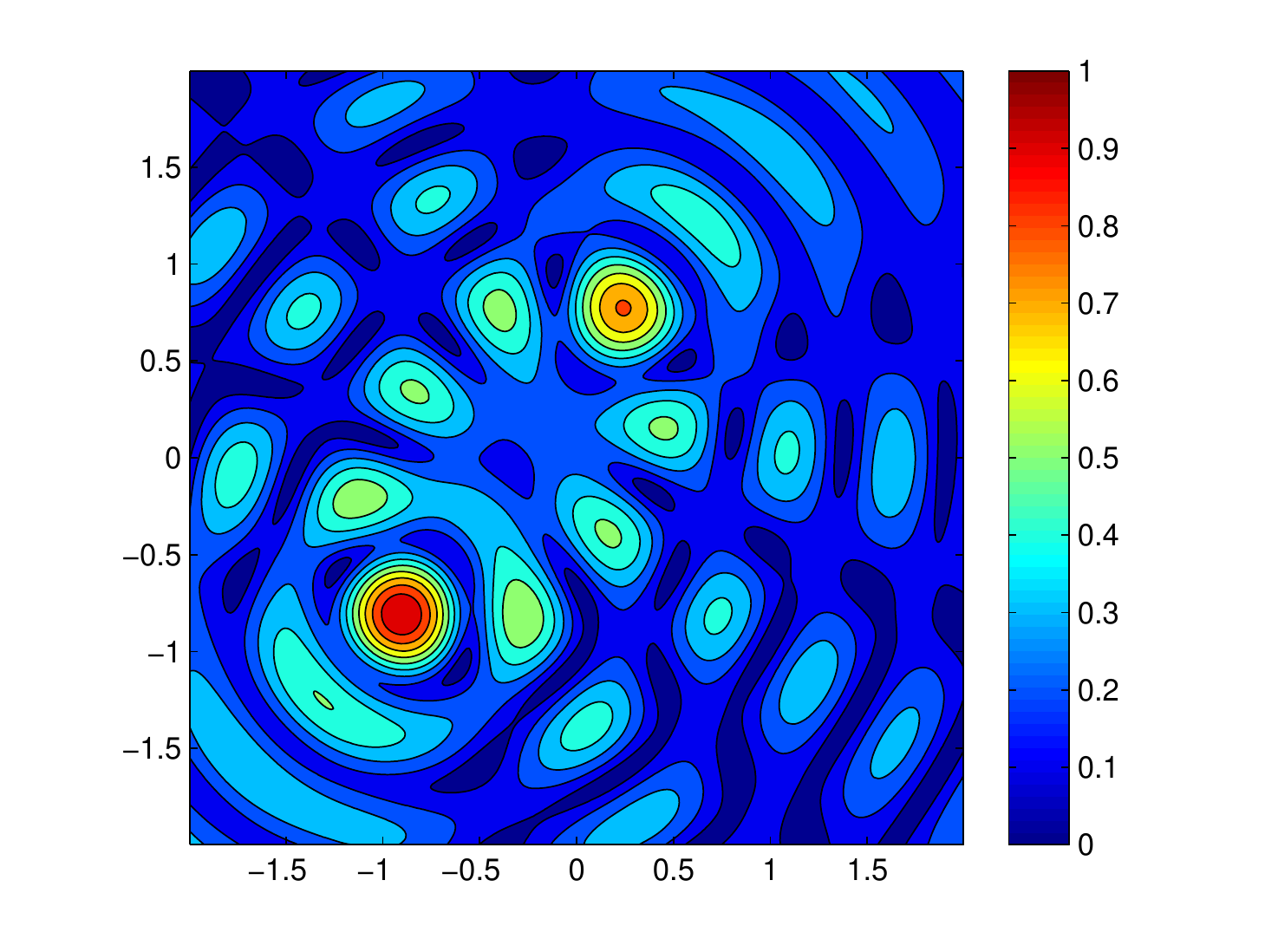} & \includegraphics[width=0.32\textwidth]{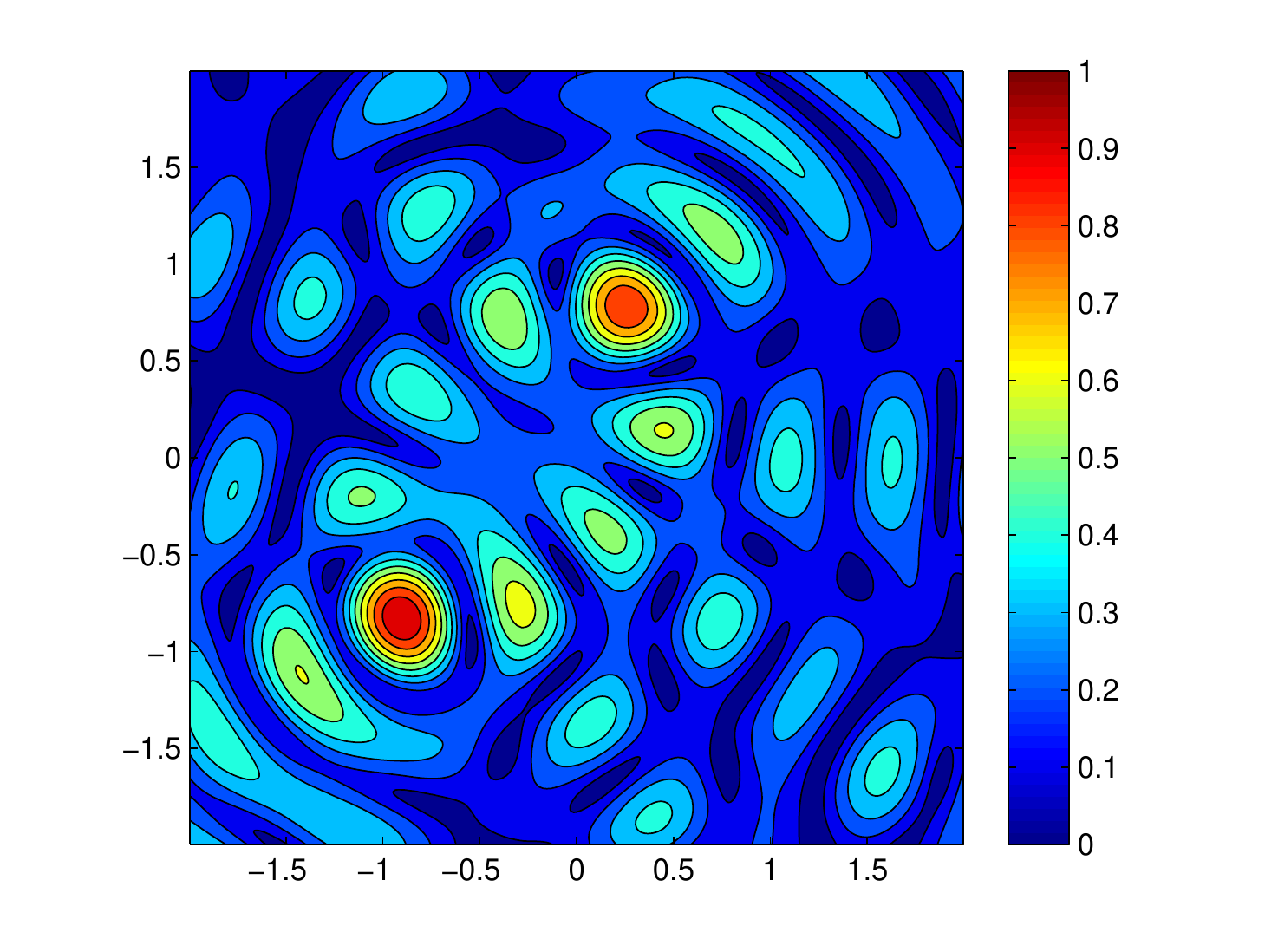}\tabularnewline
(c) & (d)\tabularnewline
\end{tabular}\hfill{}

\caption{\label{fig:ex10:1000} Example 5 with $n^{2}=10+10\mathrm{i}$, :
reconstruction results using (a) exact near-field data, (b) noisy
near-field data with $\epsilon=20\%$, (c) exact far-field data, (d)
noisy far-field data with $\epsilon=20\%$.}
\end{figure}

\smallskip{}

\subsection*{DSM for cracks }

Now we are going to test two examples with cracks and demonstrate
an interesting and promising application of the DSM to such an important
scattering scenario.

\smallskip{}

\textbf{Example 6 (A horizontal crack)}. This example considers a
thin crack of thickness $0.1\lambda$, centered at the origin and
parallel to the $x$-axis lying in the middle of the domain with length
1. The medium scatterer with coefficient $\eta=1$ is illuminated
by a horizontal incident plane wave, see Figures\ \ref{fig:ex6-1}(a)
for the configuration.

Cracks are the most illusive scatterer type to be identified as it
has a very small thickness, and it is non-trivial even with the data
from multiple incident directions. The results using the near-field
and far-field data from just one incident field are shown in Figure\ \ref{fig:ex6-1}.

In the noise-free cases, both DSM(n) and DSM(f) can successfully
determine the location and length for the crack, and the identified
geometry is roughly a red long bar with correct length. It is important
to observe that the DSM also works for this difficult example with
one incident, even with large noise, say $\epsilon=20\%$; see Figure\,\ref{fig:ex6-1}
(c), (e). But in this case, the DSM(f) performs better than the DSM(n),
which seems to break the crack into two pieces.

\begin{figure}
\hfill{}\hspace{-0.03\textwidth}%
\begin{tabular}{c}
\includegraphics[clip,width=0.32\textwidth]{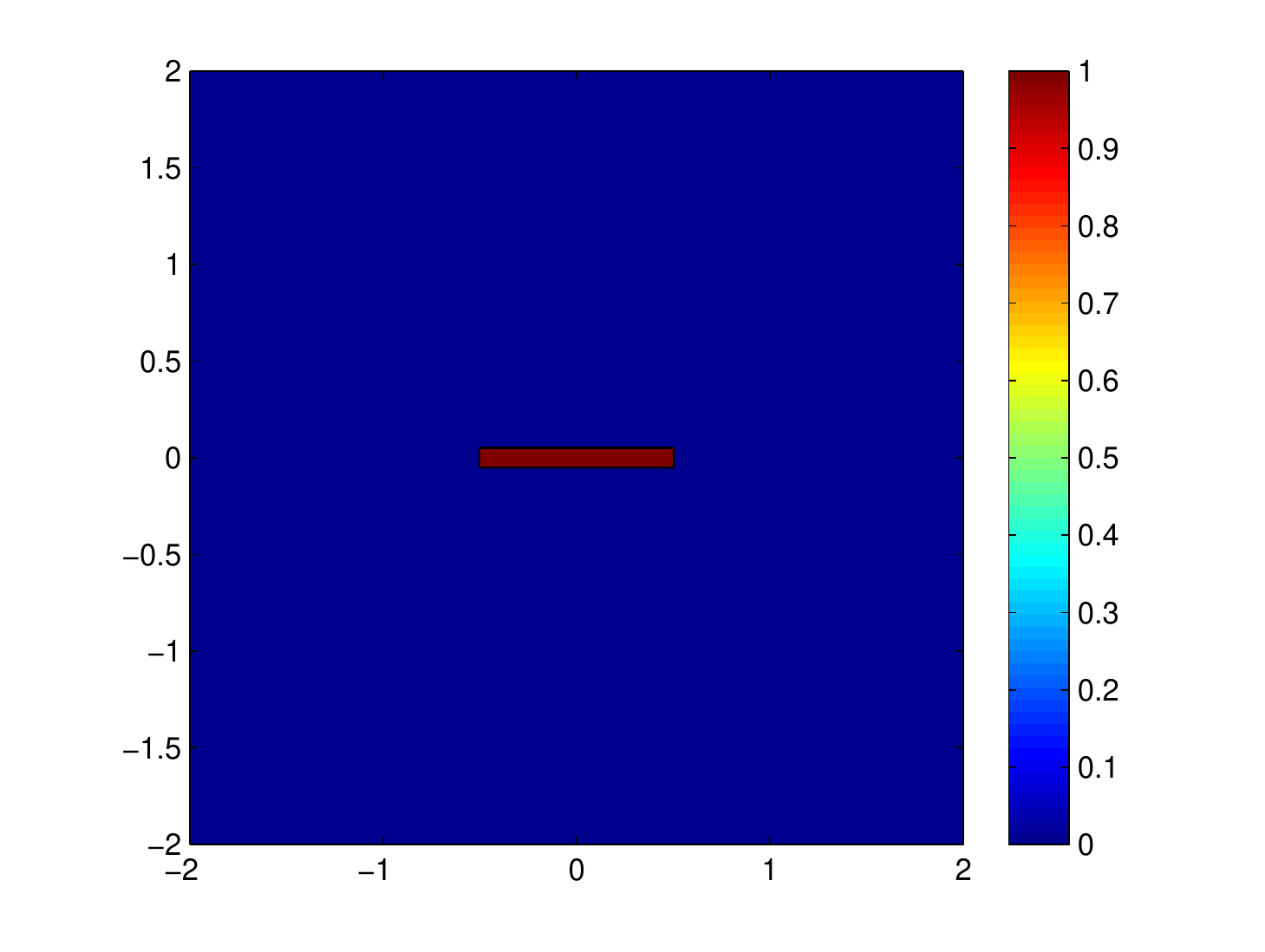}\tabularnewline
(a)\tabularnewline
\end{tabular}\negthinspace{}%
\begin{tabular}{cc}
\includegraphics[width=0.32\textwidth]{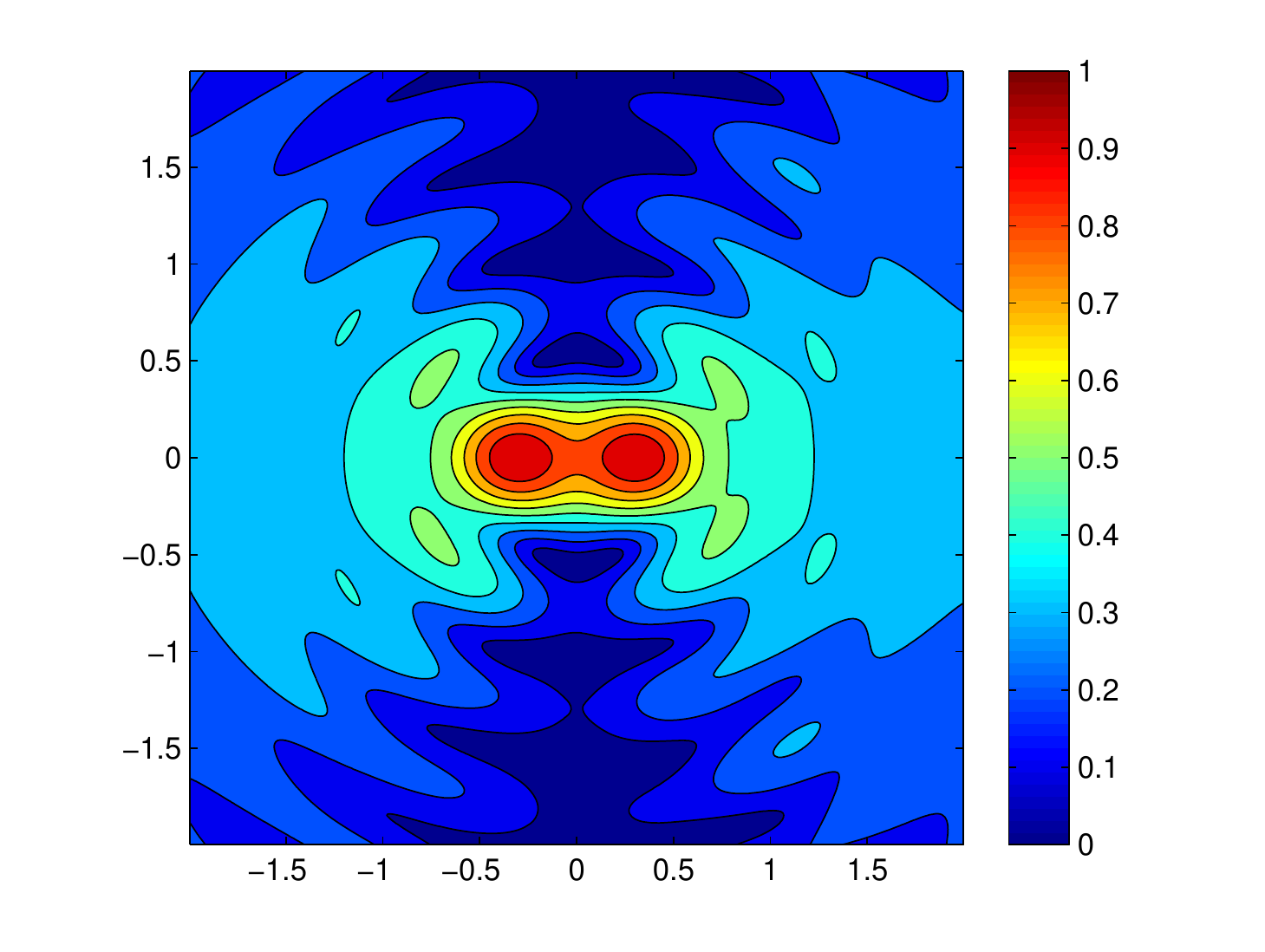} & \includegraphics[width=0.32\textwidth]{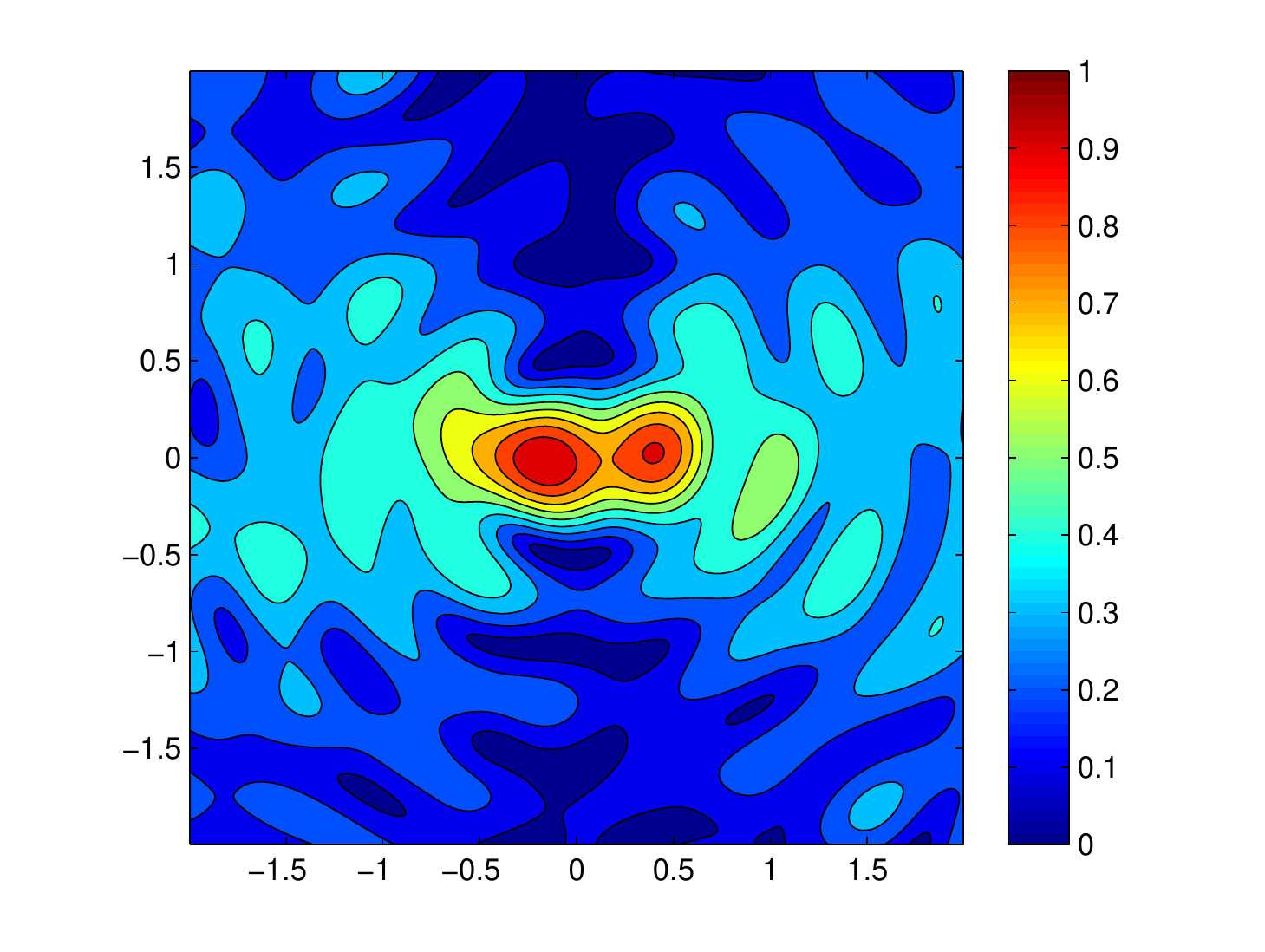}\tabularnewline
(b) & (c)\tabularnewline
\includegraphics[width=0.32\textwidth]{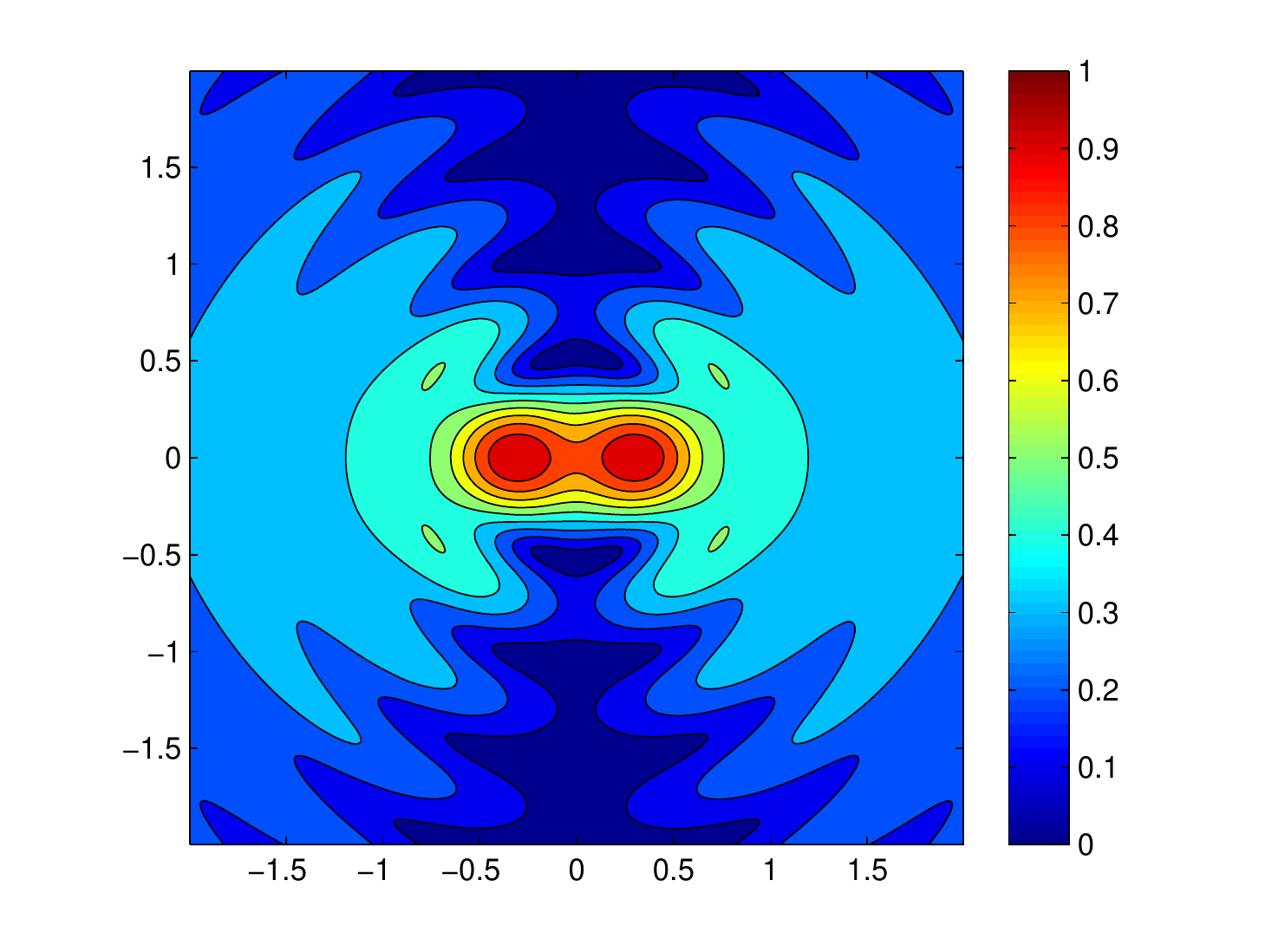} & \includegraphics[width=0.32\textwidth]{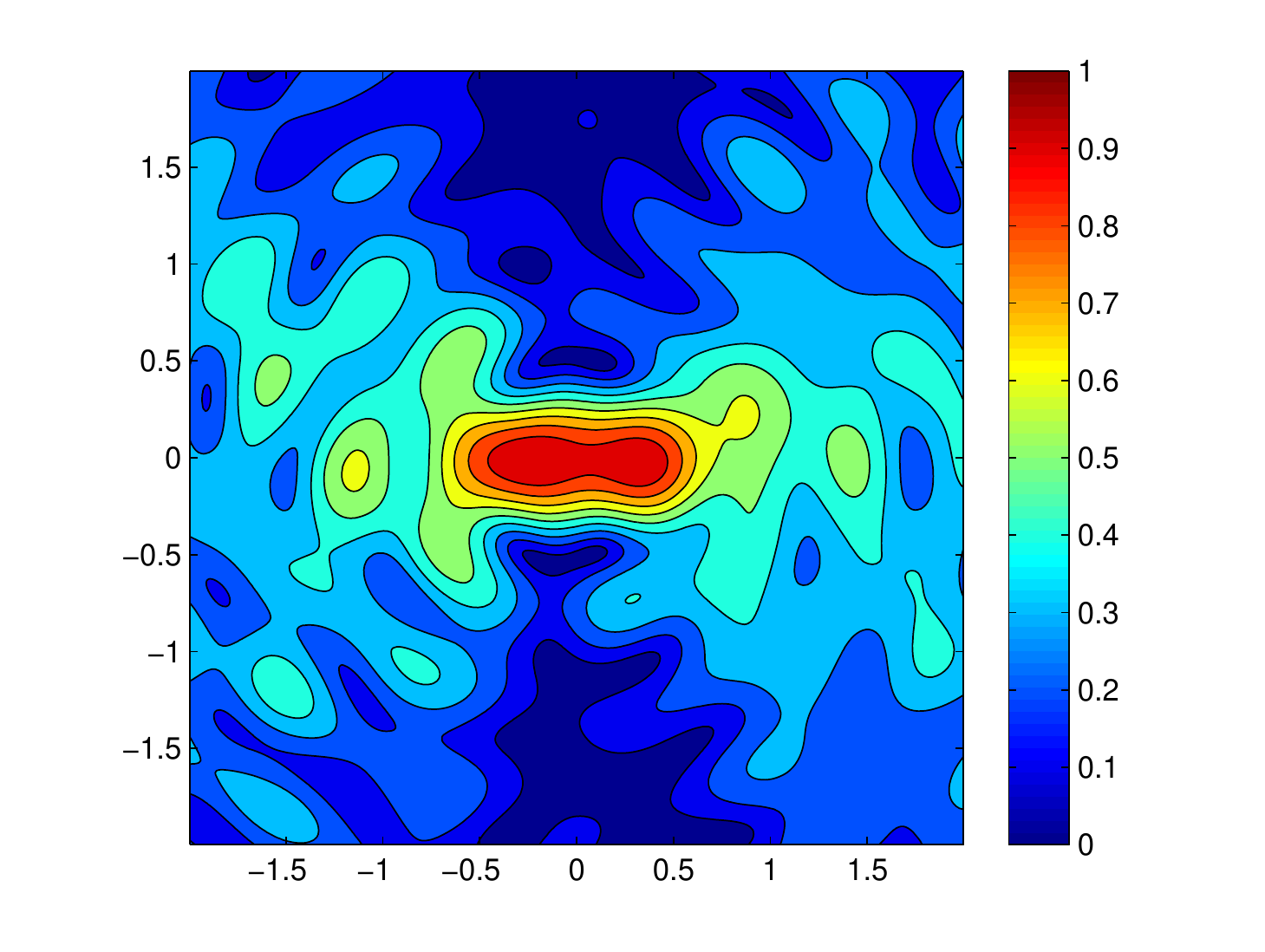}\tabularnewline
(d) & (e)\tabularnewline
\end{tabular}\hfill{}

\caption{\label{fig:ex6-1} Example 6: (a) true crack scatterer; reconstruction
results using (b) exact near-field data, (c) noisy near-field data
with $\epsilon=20\%$, (d) exact far-field data, (e) noisy far-field
data with $\epsilon=20\%$.}
\end{figure}

\smallskip{}

\textbf{Example 7 (An L-shape crack)}. The example considers an L-shape
crack with thickness $0.1\lambda$ and length $2\lambda$. The coefficient
$\eta$ in the inhomogeneous region is 1.

This L-shaped crack is more challenging than the crack in Example
6. We use the incident wave along direction $d_{2}=(1,-1)^{T}/\sqrt{2}$
and the reconstruction is given in Figure\ \ref{fig:ex7-2}. One
can see that an L-shaped dark yellow bar is identified by choosing
the cut-off value about $0.7$ in the noise-free case for both DSM(n)
and DSM(f). But with $20\%$ noise, the DSM(f) provides a much better
profile than the DSM(n) in terms of the length and connectedness of
the crack.

Finally we test with two incident wave directions $d_{1}$ and $d_{2}$,
and report the results in Figure\ \ref{fig:ex7-3}. One can see that
both DSM(n) and DSM(f) can yield significantly enhanced images than
those obtained using a single wave in direction $d_{1}$ or $d_{2}$. Moreover,
the recovered L-shaped geometry degenerates rather slowly as we increase
the noise level from $5\%$ to $20\%$, see Figure\ \ref{fig:ex7-3}.
This example demonstrates the robustness and effectiveness of the DSM as a potential
promising technique for crack detections.

\begin{figure}
\hfill{}\hspace{-0.03\textwidth}%
\begin{tabular}{c}
\includegraphics[clip,width=0.32\textwidth]{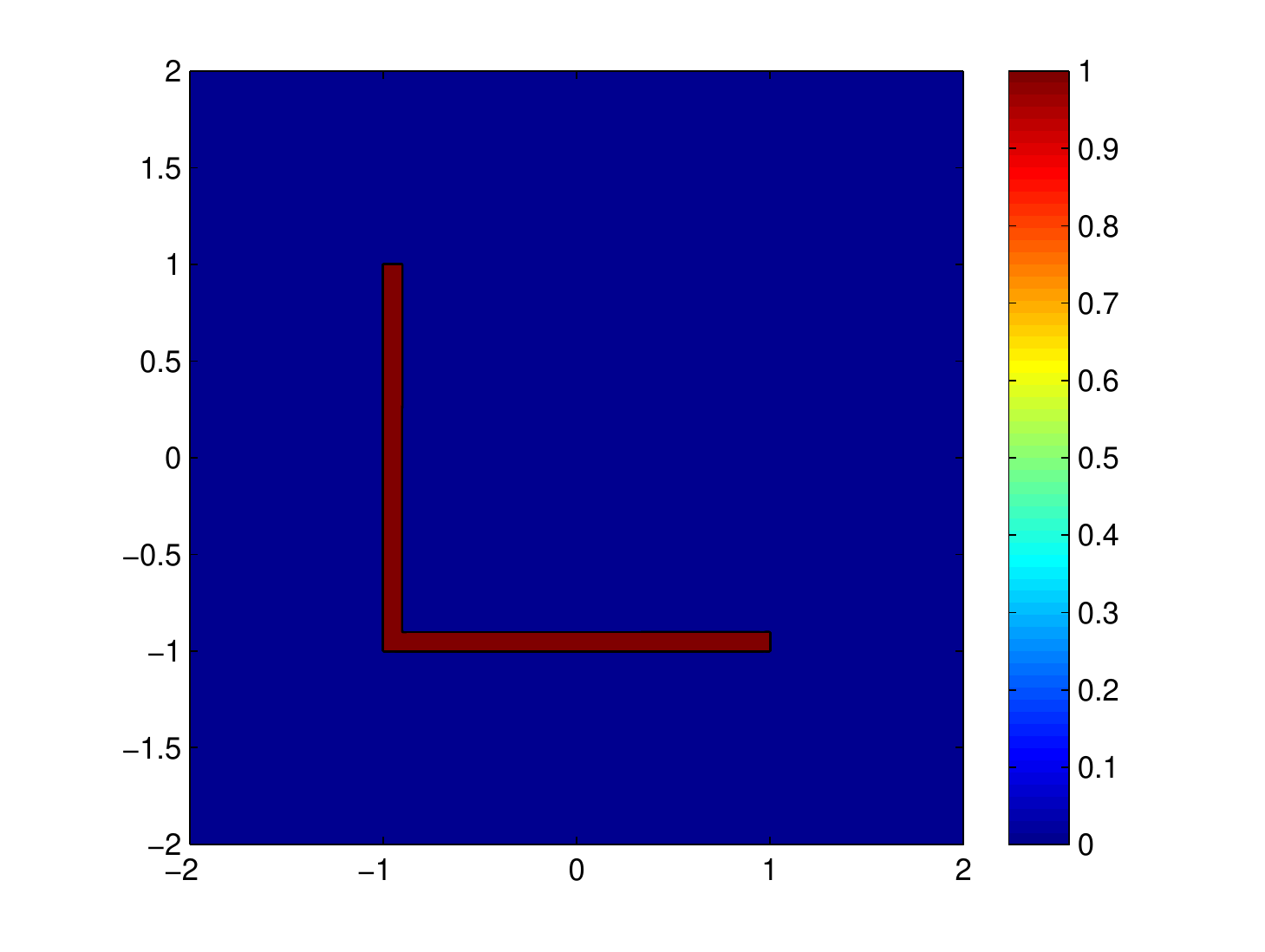}\tabularnewline
(a)\tabularnewline
\end{tabular}\negthinspace{}%
\begin{tabular}{cc}
\includegraphics[width=0.32\textwidth]{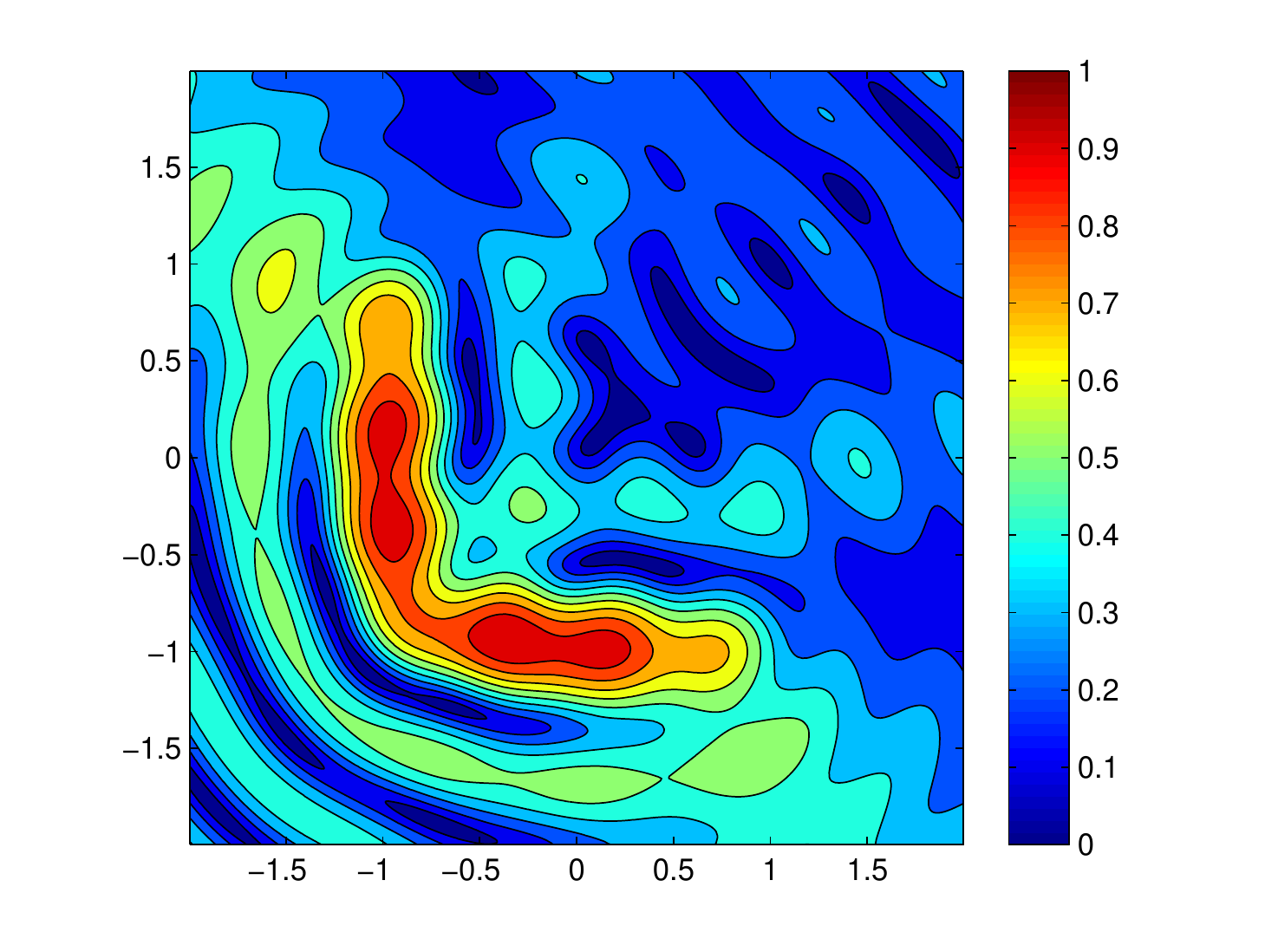} & \includegraphics[width=0.32\textwidth]{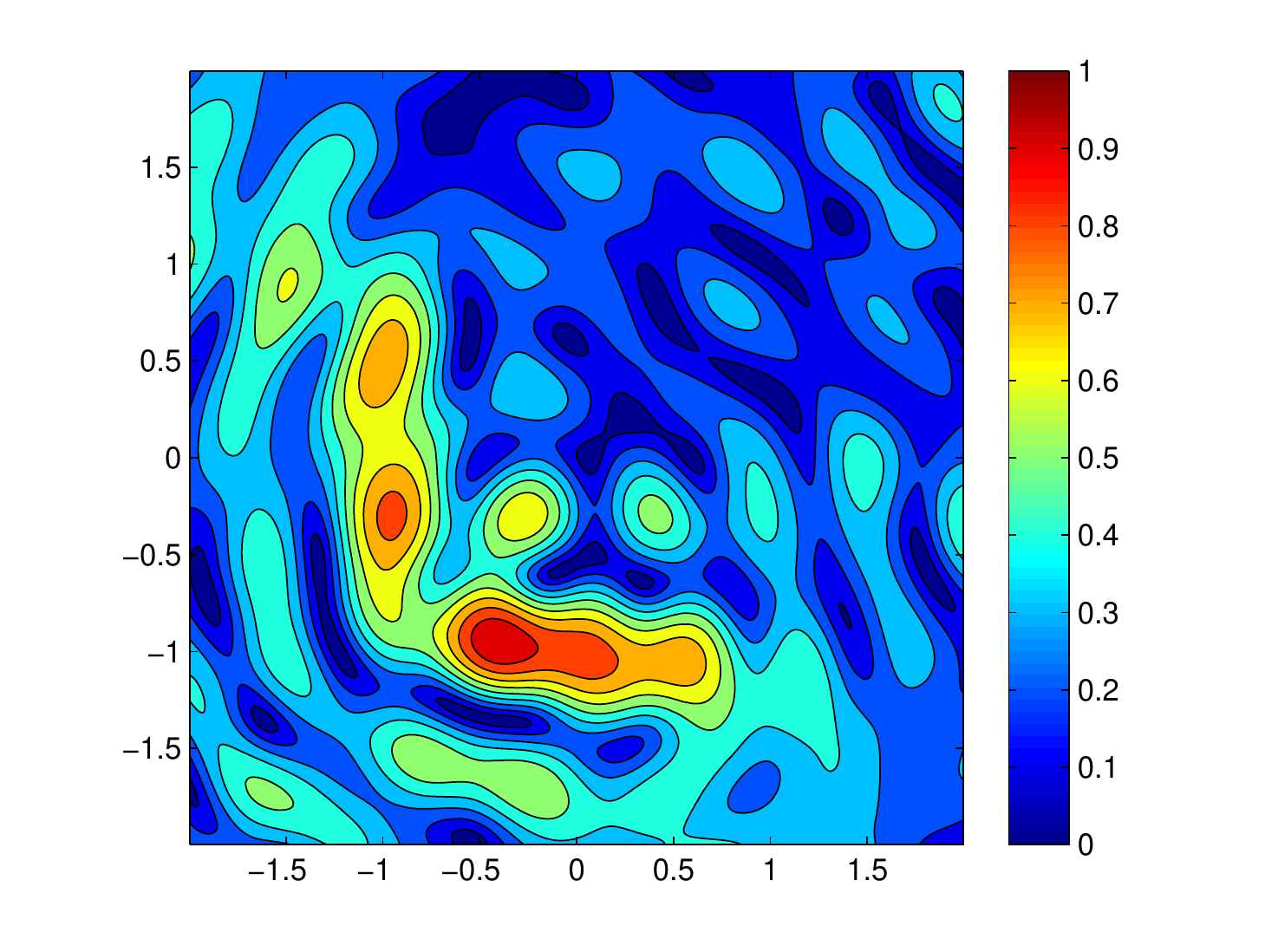}\tabularnewline
(b) & (c)\tabularnewline
\includegraphics[width=0.32\textwidth]{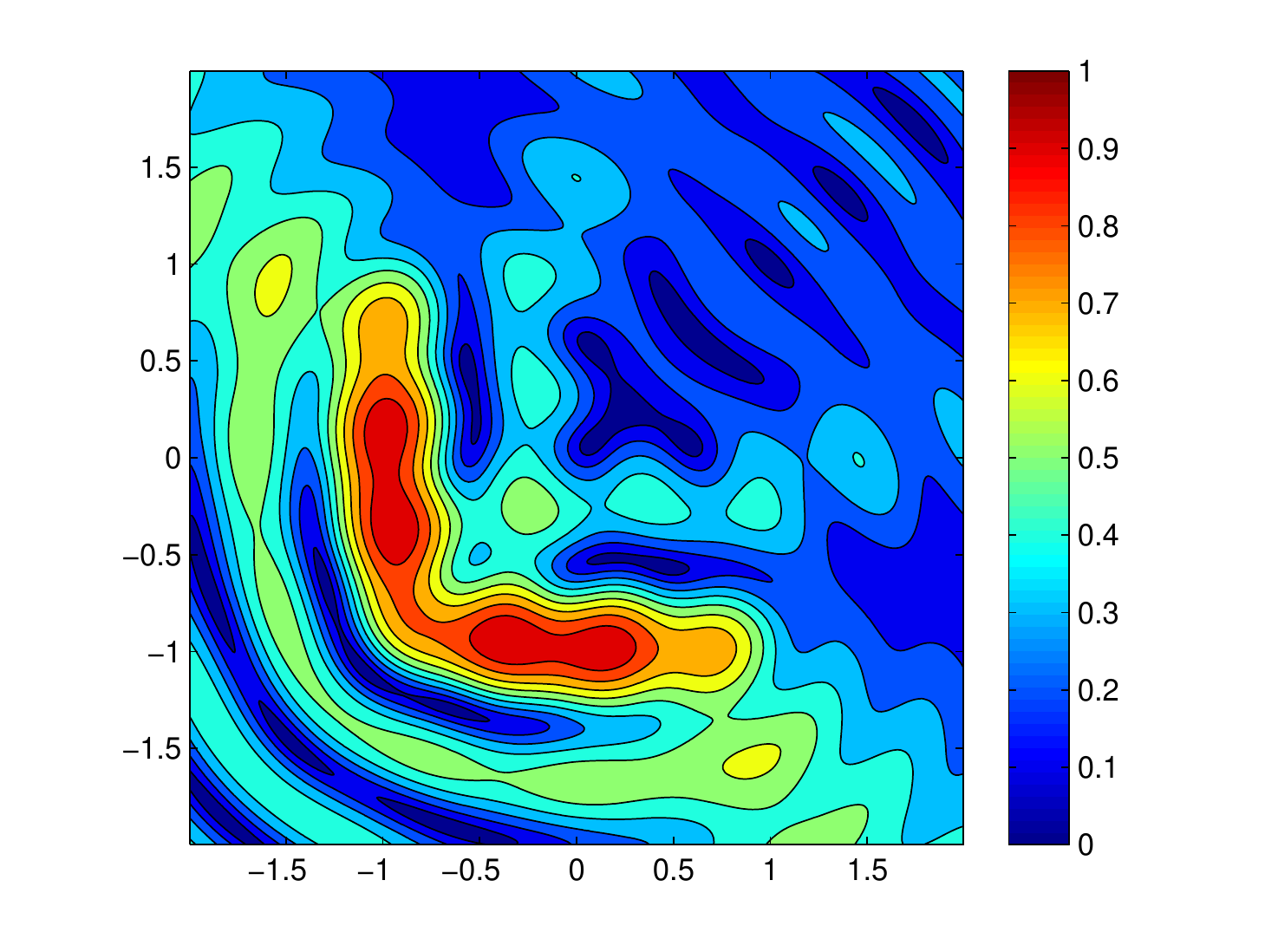} & \includegraphics[width=0.32\textwidth]{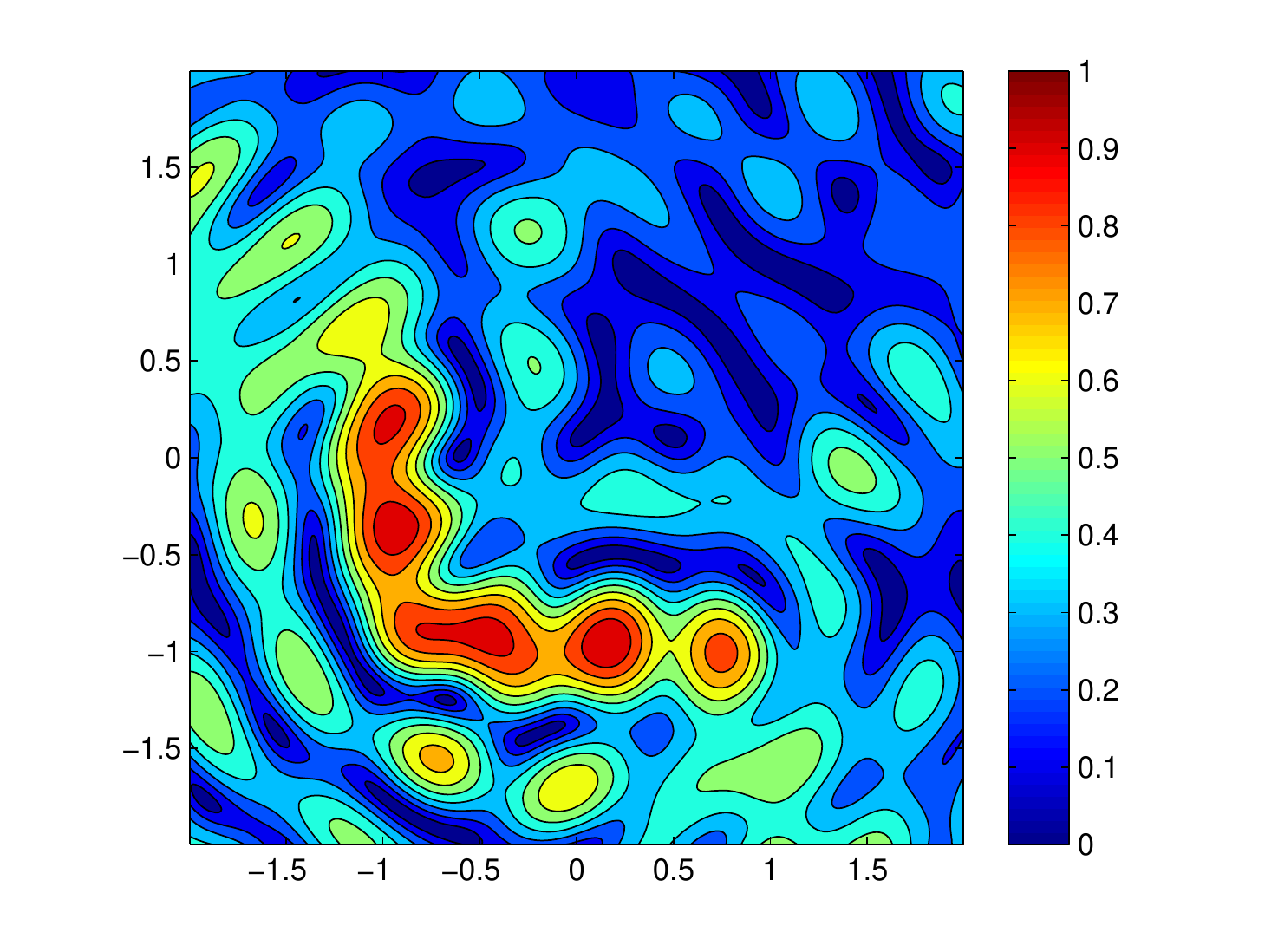}\tabularnewline
(d) & (e)\tabularnewline
\end{tabular}\hfill{}

\caption{\label{fig:ex7-2} Example 7 with incident direction $d_{2}=(1,\,-1)^{T}/\sqrt{2}$
: (a) true L-shape crack scatterer; reconstruction results using (b)
exact near-field data, (c) noisy near-field data with $\epsilon=20\%$,
(d) exact far-field data, (e) noisy far-field data with $\epsilon=20\%$.}
\end{figure}

\begin{figure}
\hfill{}\includegraphics[width=0.24\textwidth]{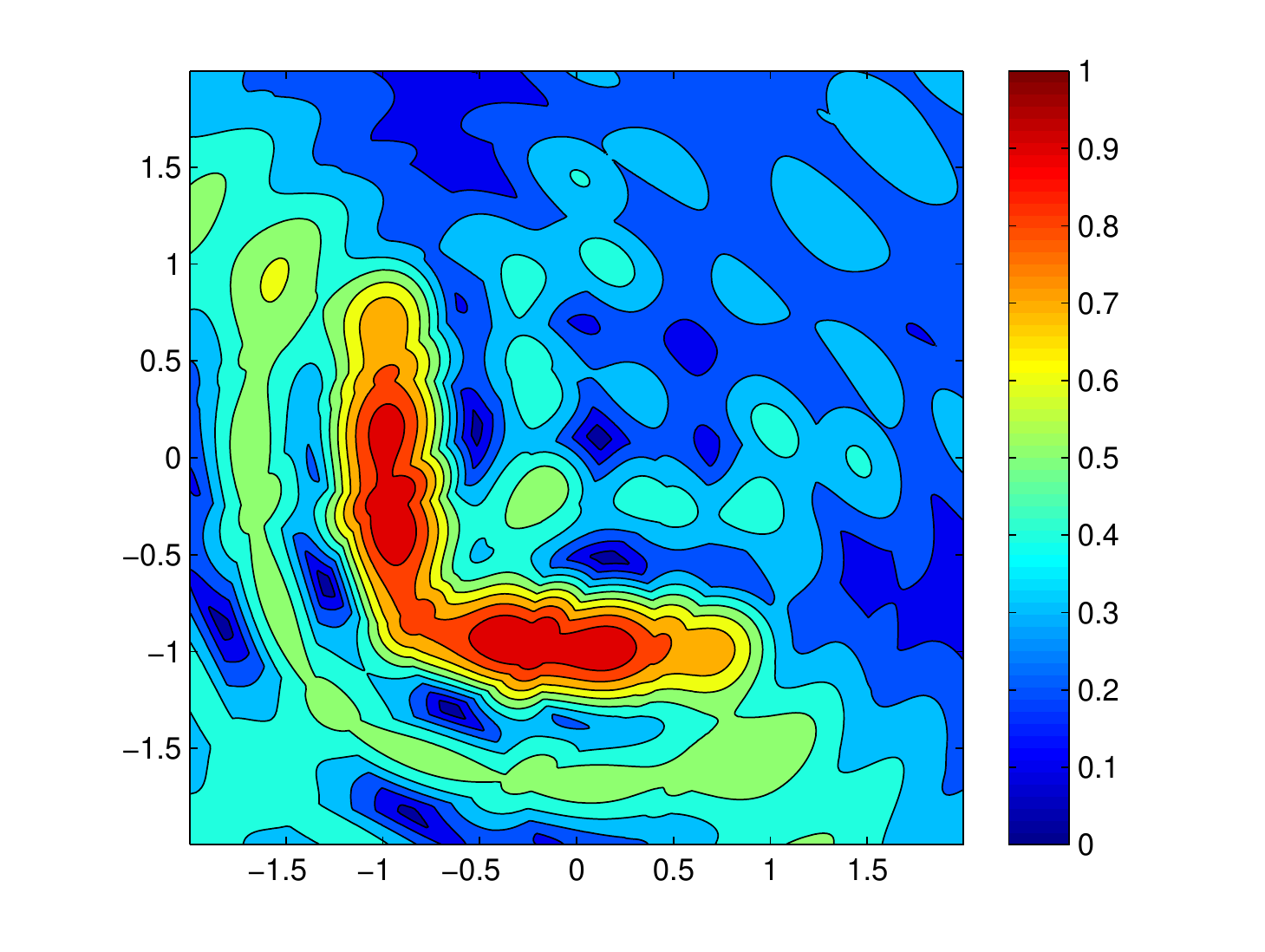}\hfill{}
\includegraphics[width=0.24\textwidth]{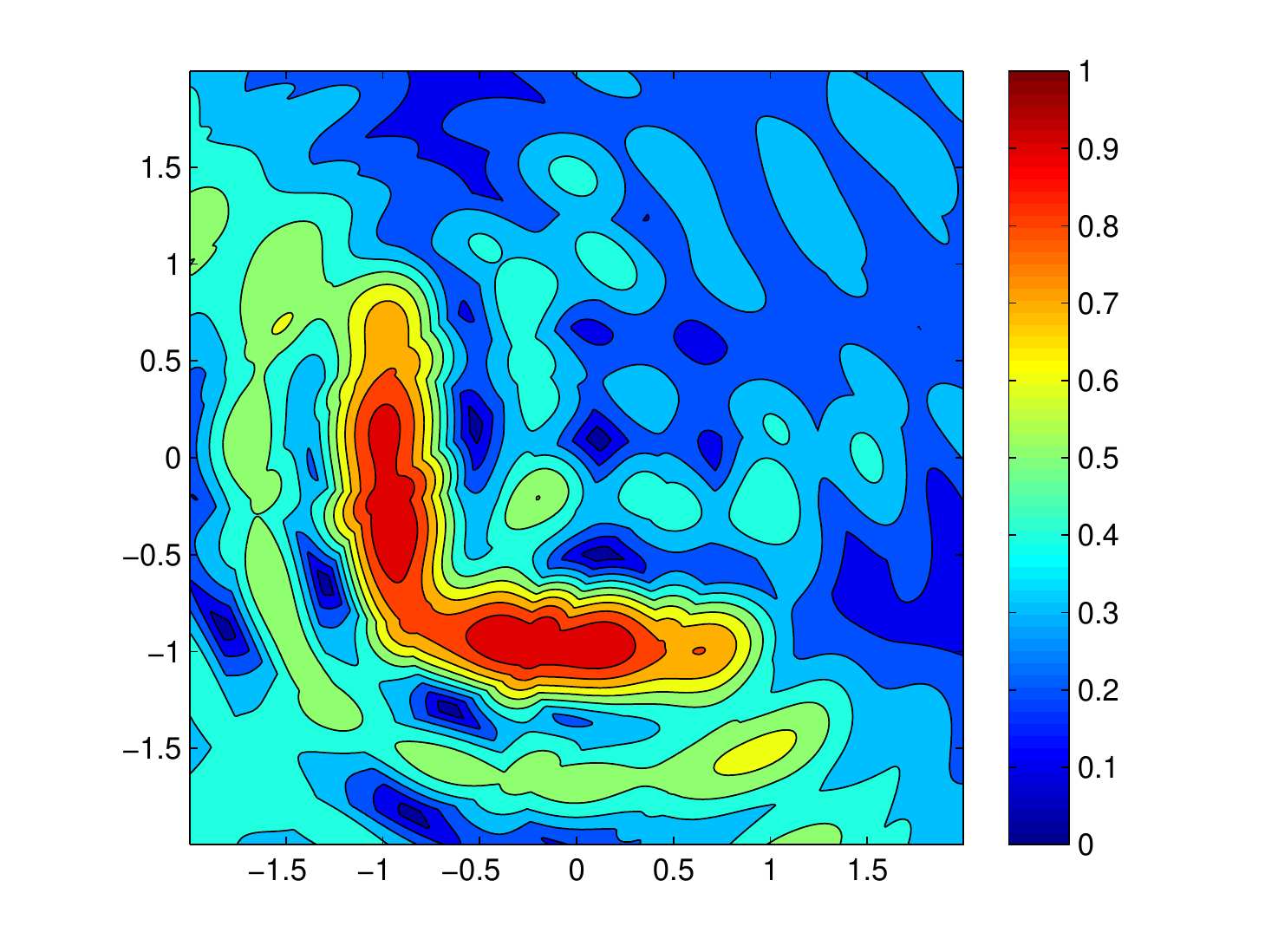}\hfill{}
\includegraphics[width=0.24\textwidth]{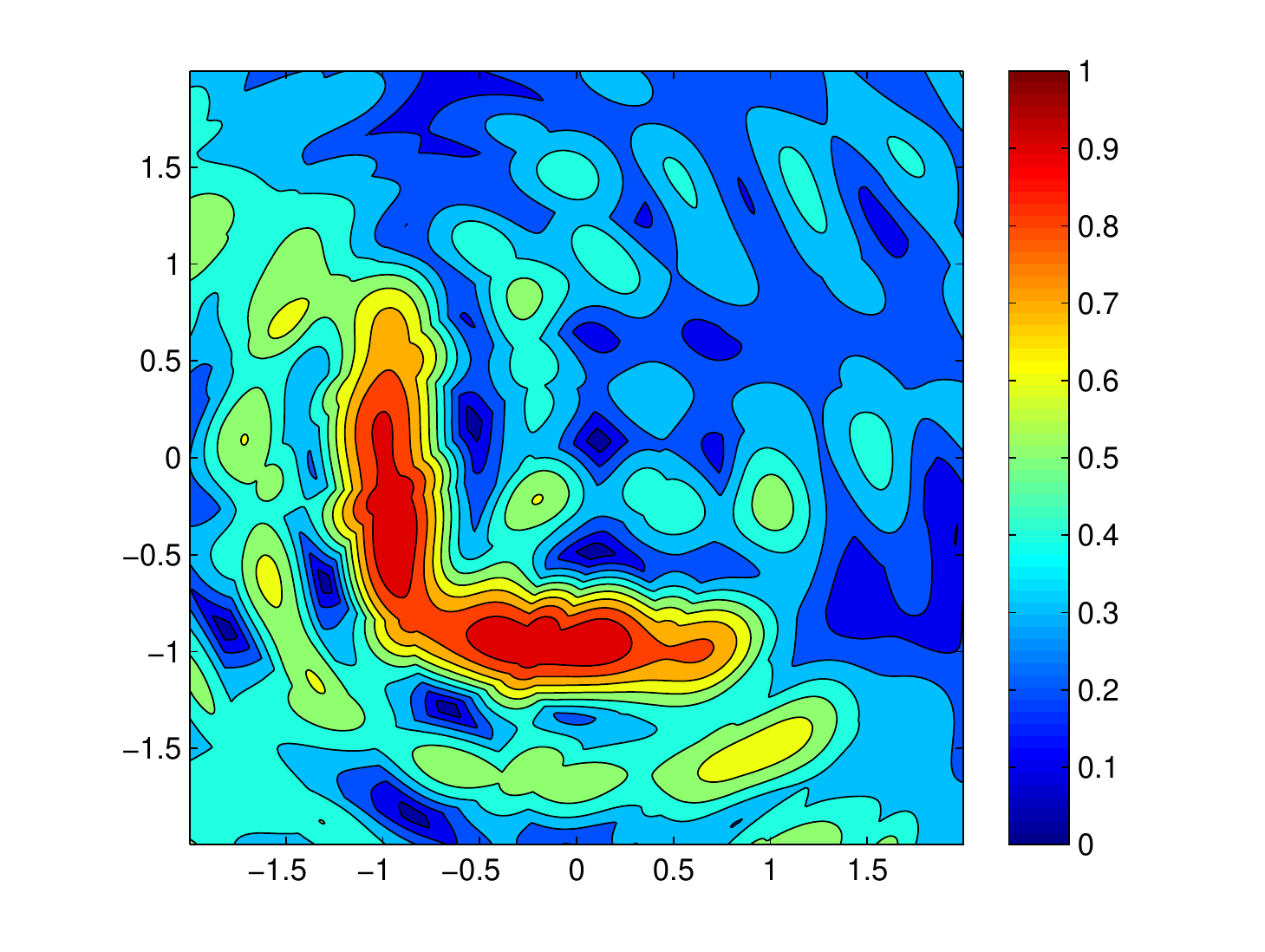}\hfill{}
\includegraphics[width=0.24\textwidth]{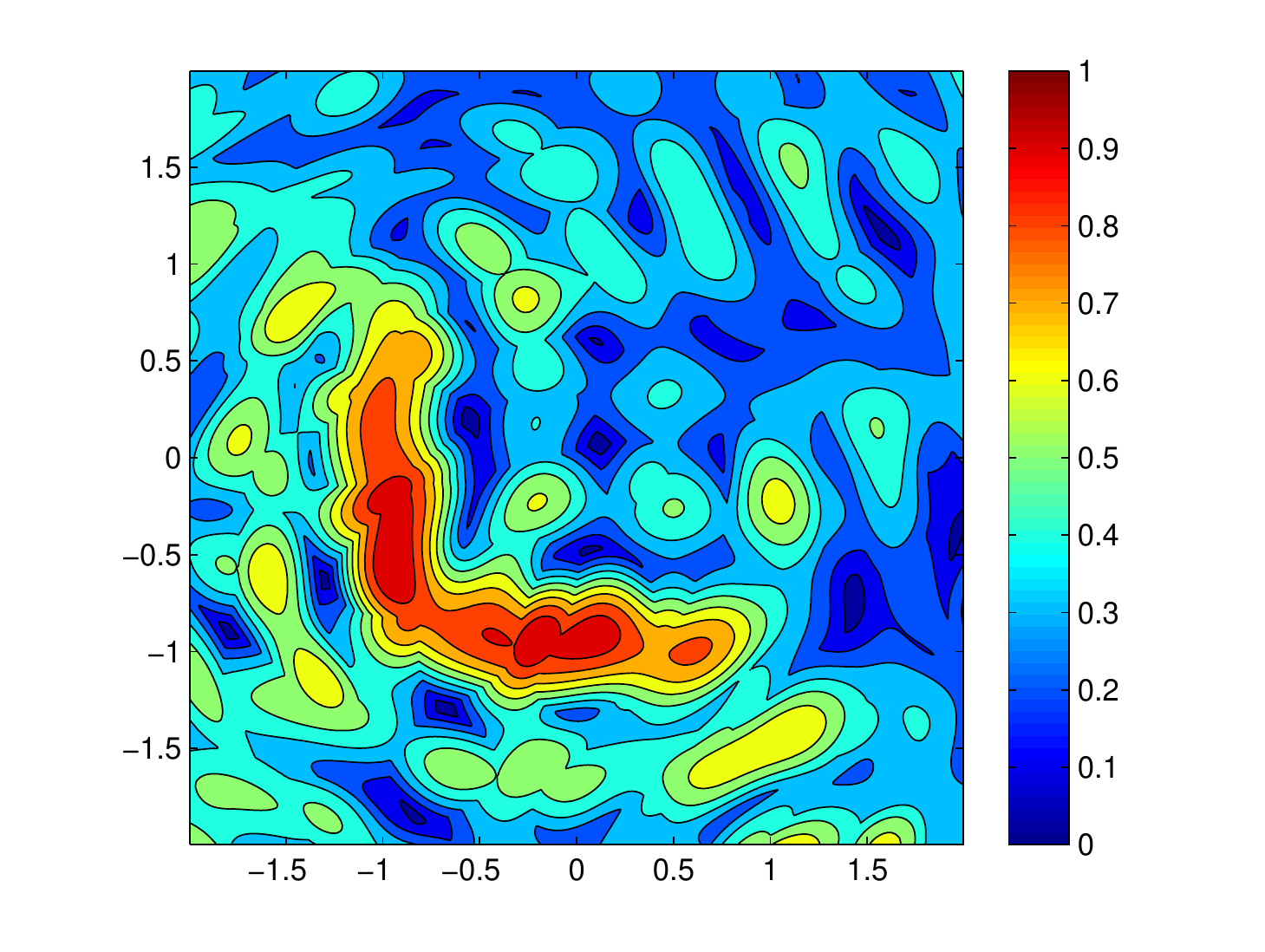}\hfill{}\hfill{}

\hfill{}DSM(n) using two incident waves: from left to right, noise
level is $0$, $5\%$, $10\%$, $20\%$.\hfill{}

\hfill{}\includegraphics[width=0.24\textwidth]{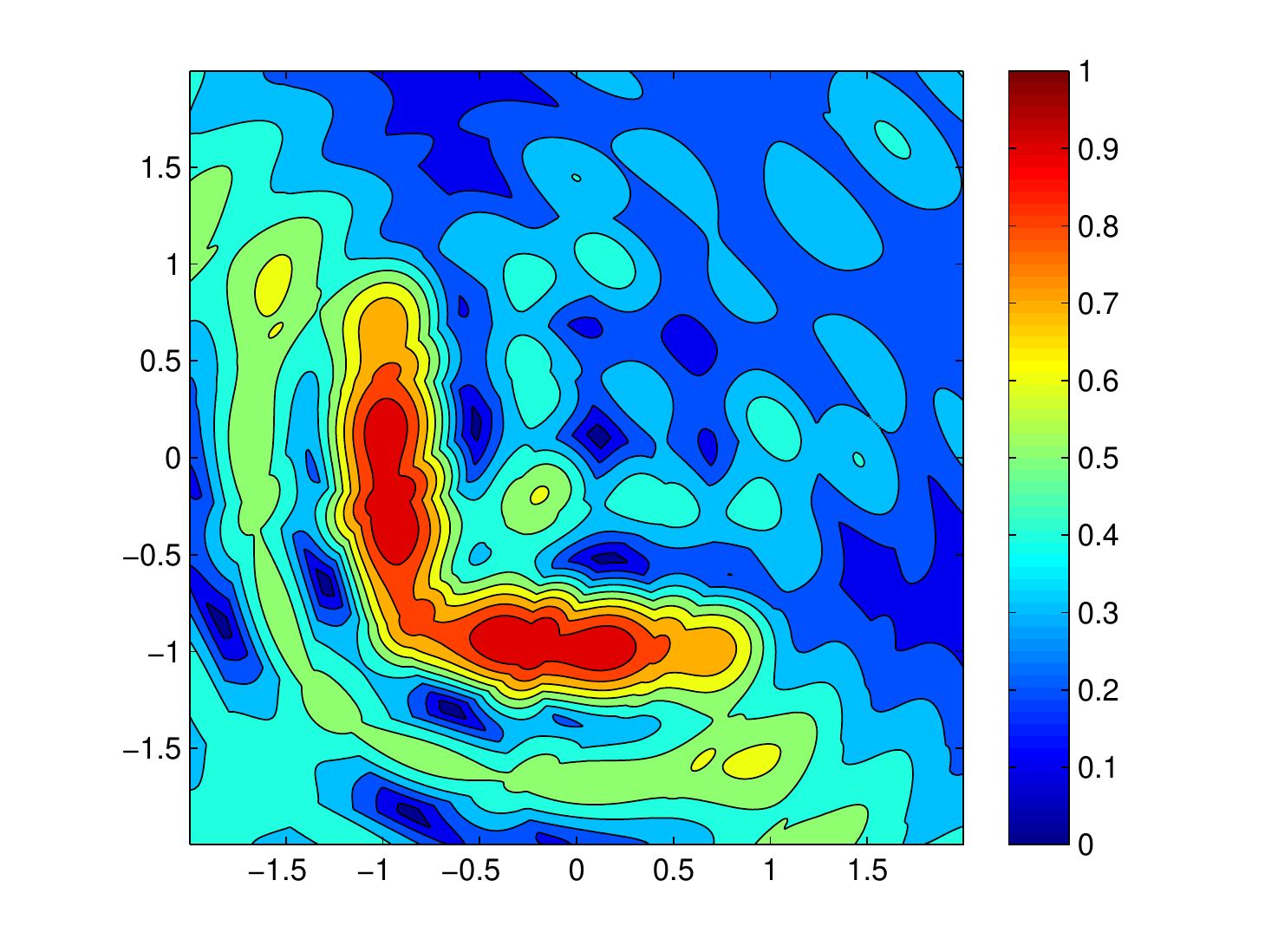}\hfill{}
\includegraphics[width=0.24\textwidth]{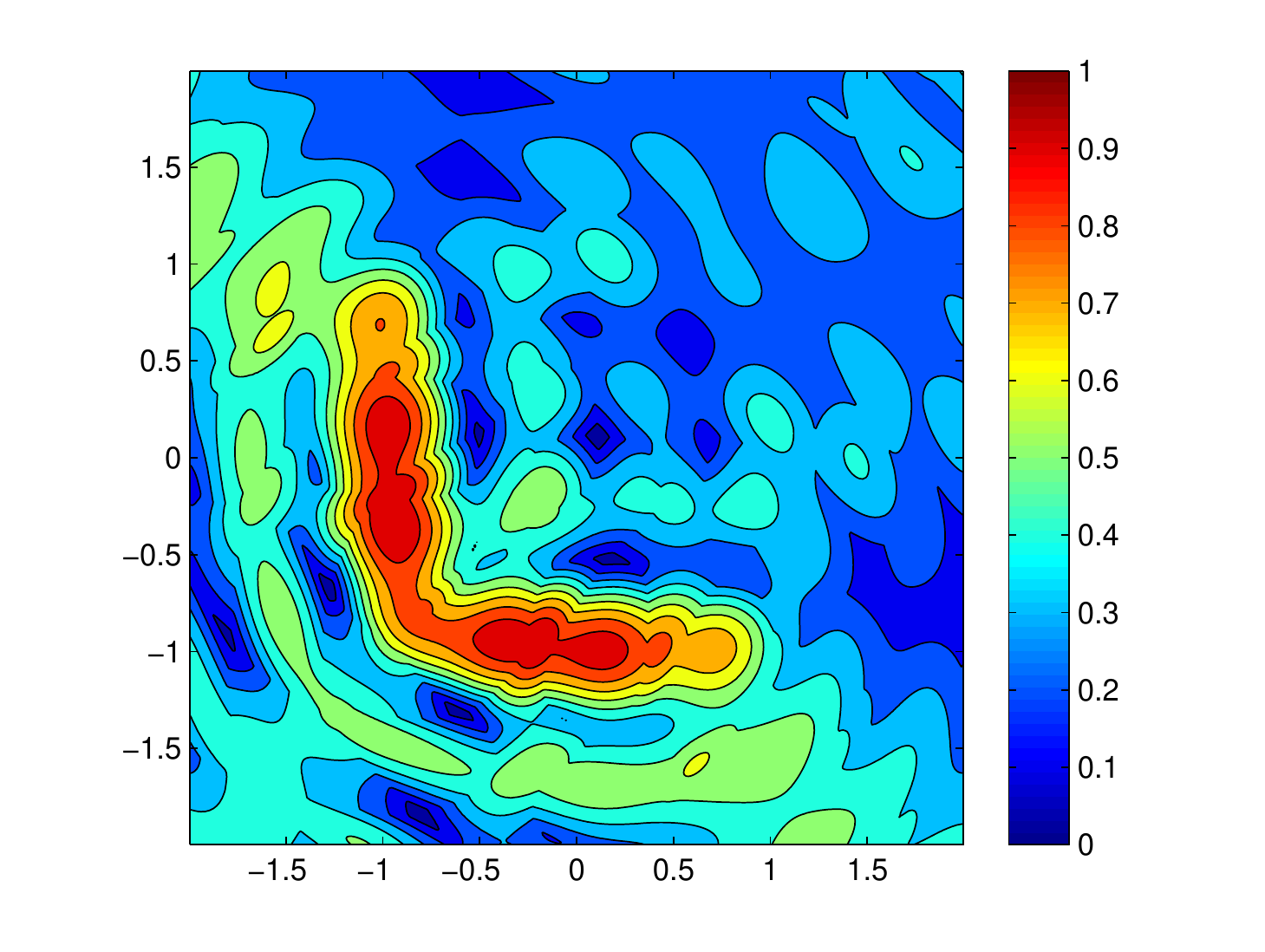}\hfill{}
\includegraphics[width=0.24\textwidth]{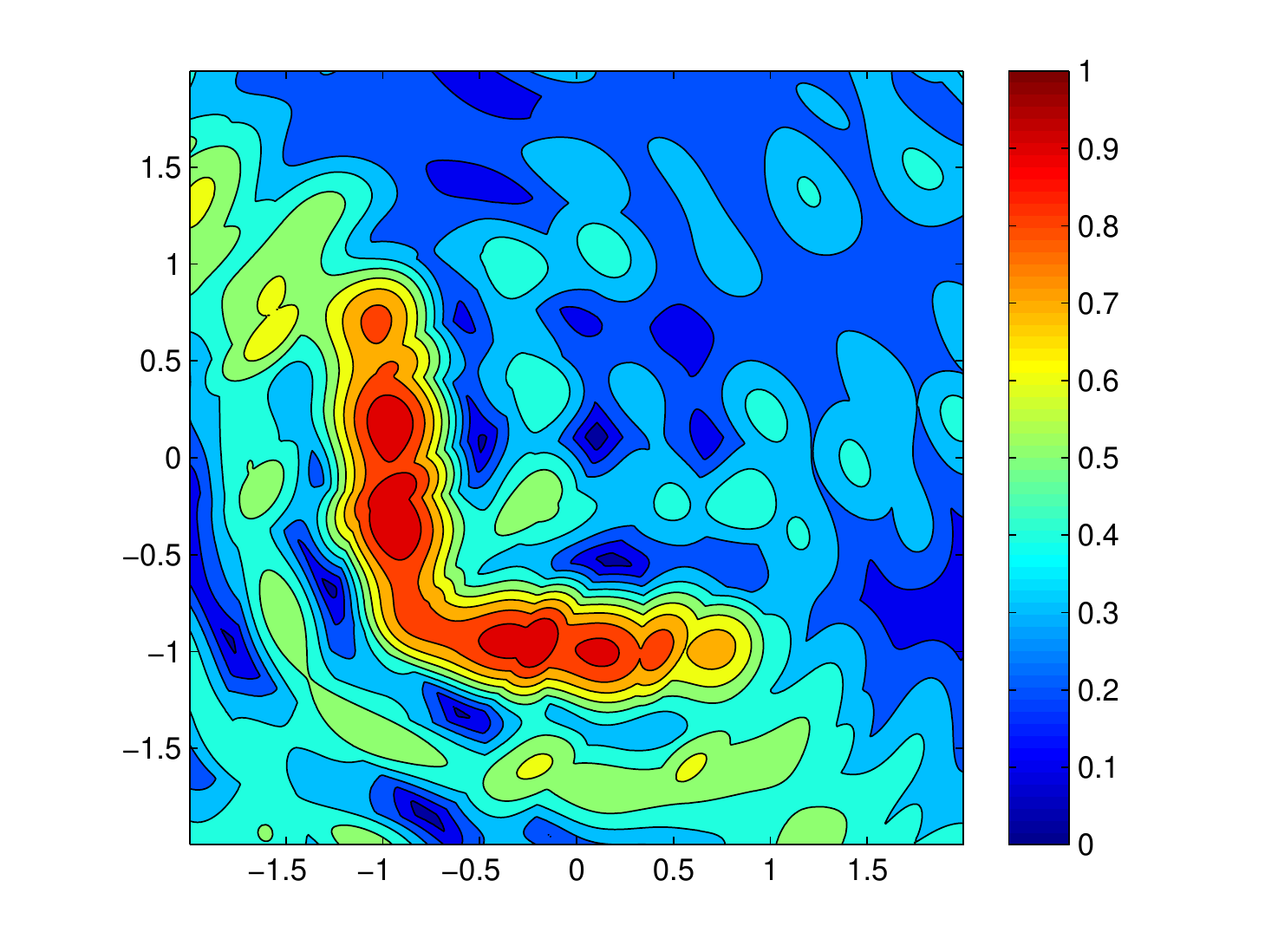}\hfill{}
\includegraphics[width=0.24\textwidth]{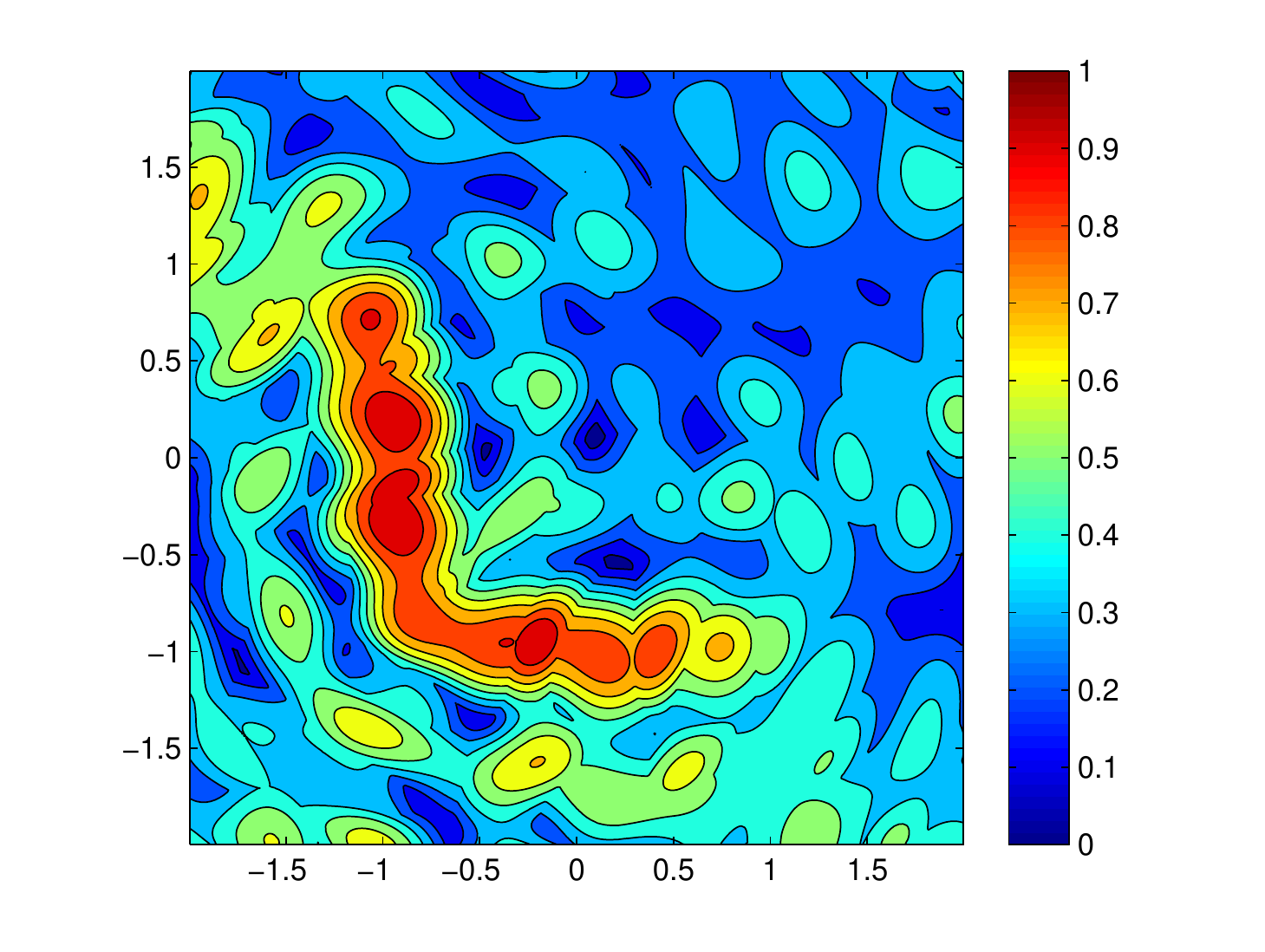}\hfill{}\hfill{}

\hfill{}DSM(f) using two incident waves: from left to right, noise
level is $0$, $5\%$, $10\%$, $20\%$.\hfill{}

\caption{\label{fig:ex7-3}Numerical results of Example 7 using two
incident directions $d_{1}$ and $d_{2}$ .}
\end{figure}

\section{Concluding remarks\label{sec:Conclusion}}

A systematic evaluation of the performance of the DSM using
both far-field data and near-field data has been carried out for some
inverse scattering benchmark problems, such as scatterers of obstacle,
medium and crack types. The numerical simulations have demonstrated
the robustness and effectiveness of the DSM to identify the locations
of obstacles, inhomogeneity media, as well as thin cracks, with just
one or two incident directions. An important advantage of
the DSM is its tolerance against large noise in the observation data.

The DSM method involves only simple operations like inner products,
without any inversion or solution process required, so it is computationally
very cheap, hence not intended for an accurate reconstruction of unknown
scatterers and their physical inhomogeneity.
As it is cheap and works with the data from even one incident
field, it can serve as a fast, simple and effective numerical tool
for locating reliable approximate positions of unknown scatterers.
The reconstructed shapes of scatterers and the final numerical indicator function
values by DSM may then serve as a good initial guess for any other more advanced
method to achieve a more accurate estimate of the scatterer shapes
and the inhomogeneity distribution. In addition, with the reliable location of
each individual scatterer component extracted by the DSM, one may
start with a much smaller sampling region in a more accurate but computationally
much more demanding method (see, e.g., \cite{BaL05,CoK98,Hoh01,IJZ12r}).
Considering the severe ill-posedness of the inverse scattering problems,
the reduction of the sizes of the initial sampling regions for unknown
scatterers may save us an essential fraction of the computational
efforts in the entire reconstruction process.

\section*{Acknowledgements}
The authors would like to thank two anonymous referees for very instructive and helpful comments
and suggestions, and to thank
Dr. Hongyu Liu (University of North Carolina at Charlotte)
and Dr. Hongpeng Sun (AMSS, Chinese Academy of Sciences) for many inspiring and valuable
discussions.

Received September 2012; revised November 2012.\\
 \indent {\it E-mail address: }li.jz@sustc.edu.cn\\
  \indent{\it E-mail address: }zou@math.cuhk.edu.hk \\

\begin{thebibliography}{99}


\bibitem{AS65} \newblock M.~Abramowitz and I.~A. Stegun, eds.,
              \newblock ``Handbook of Mathematical Functions,"
              \newblock Dover, NewYork, 1965.


  
\bibitem{AlR05} \newblock G.~Alessandrini and L.~Rondi,
               \newblock \emph{Determining a sound-soft
polyhedral
  scatterer by a single far-field measurement},
               \newblock Proc. Am. Math., \textbf{6} (2005), pp.~1685--1691.


\bibitem{AmK04} 
\newblock H.~Ammari and H.~Kang,
              \newblock ``Reconstruction of Small Inhomogeneities from
  Boundary Measurements," vol.~1846 of Lecture Notes in Mathematics,
              \newblock  Springer-Verlag, Berlin, 2004.


\bibitem{BaL05} \newblock G.~Bao and P.~Li,
               \newblock \emph{Inverse medium scattering for the
Helmholtz
  equation at fixed frequency},
               \newblock Inverse Problems, \textbf{21} (2005), pp.~1621--1641.


\bibitem{BeK12} \newblock L.~Beilina and M.~V.~Klibanov,
              \newblock ``Approximate Global
Convergence and Adaptivity for Coefficient Inverse Problems,"
              \newblock Springer, New York, 2012.
%

\bibitem{xu_siam} \newblock J. Buchanan, R. Gilbert, A. Wirgin, and Y. Xu,
              \newblock ``Marine Acoustics: Direct and Inverse Scattering  of Waves,"
              \newblock SIAM, Philadelphia,  2004.


\bibitem{CCM11} \newblock F.~Cakoni, D.~Colton, and P.~Monk,
              \newblock ``The Linear Sampling
Method in
  Inverse Electromagnetic Scattering,"
              \newblock Philadelphia: SIAM, 2011.


\bibitem{ChZ09} \newblock X.~Chen and Y.~Zhong,
               \newblock \emph{{MUSIC} electromagnetic imaging
with enhanced resolution for small inclusions},
               \newblock Inverse Problems, \textbf{25} (2009), 015008
(11pp).


\bibitem{Che01} \newblock M.~Cheney,
               \newblock \emph{The linear sampling method and the {MUSIC}
algorithm},
               \newblock Inverse Problems, \textbf{17} (2001), pp.~591--595.


\bibitem{CoK96} \newblock D.~Colton and A.~Kirsch,
               \newblock \emph{A simple method for solving
inverse scattering problems in the resonance region},
               \newblock Inverse Problems, \textbf{12} (1996), pp.~383--393.


\bibitem{CoK83} \newblock D.~Colton and R.~Kress
              \newblock ``Integral Equation Methods in
Scattering
  Theory,"
              \newblock  John Wiley \& Sons, New York, 1983.


  \bibitem{CoK98} \newblock D.~Colton and R.~Kress,
              \newblock ``Inverse Acoustic and Electromagnetic Scattering Theory,"
              \newblock Springer, 1998.


  
\bibitem{CoS83} \newblock D.~Colton and B.~D. Sleeman,
               \newblock \emph{Uniqueness theorems for the
inverse
  problem of acoustic scattering},
               \newblock IMA Journal of Applied Mathematics, \textbf{31}  (1983), pp.~253--259.


\bibitem{Dev99} \newblock A.~Devaney,
               \newblock \emph{Super-resolution processing of multi-static
data using
  time-reversal and {MUSIC}},
               \newblock J. Acoust. Soc. Am.,  accepted,
  http:\/\/www.ece.neu.edu\/faculty\/devaney/preprints\/paper02n\_00.pdf.
  
  
\bibitem{ElY08} \newblock J.~Elschner and M.~Yamamoto,
               \newblock \emph{Uniqueness in determining
polyhedral sound-hard obstacles with a single incoming wave},
               \newblock Inverse Problems, \textbf{24} (2008), 035004 (7pp).


\bibitem{Gri11} \newblock R.~Griesmaier,
               \newblock \emph{Multi-frequency orthogonality sampling for
inverse obstacle scattering problems},
               \newblock Inverse Problems, \textbf{27} (2011), 085005 (23pp).


\bibitem{GrD04} \newblock F.~K. Gruber, E.~A. Marengo, and A.~J. Devaney,
               \newblock \emph{Time-reversal
  imaging with multiple signal classification considering multiple scattering
  between the targets},
               \newblock J. Acoust. Soc. Am., \textbf{115} (2004), 3042 (6pp).
               
  
\bibitem{Han12} \newblock M.~Hanke,
               \newblock \emph{One shot inverse scattering via rational
approximation},
               \newblock SIAM J. Imaging Sciences, \textbf{5} (2012), pp.~465--482.


\bibitem{Hoh01} \newblock T.~Hohage,
               \newblock \emph{On the numerical solution of a
three-dimensional inverse
  medium scattering problem},
               \newblock Inverse Problems, \textbf{17} (2001), pp.~1743--1763.


\bibitem{HSZ06} \newblock S.~Hou, K.~Solna, and H.~Zhao,
               \newblock \emph{A direct imaging algorithm
for
  extended targets},
               \newblock Inverse Problems, \textbf{22} (2006), pp.~1151--1178.
%

\bibitem{IJZ12} \newblock K.~Ito, B.~Jin, and J.~Zou,
               \newblock \emph{A direct sampling method to
an inverse medium scattering problem},
               \newblock Inverse Problems, \textbf{28} (2012), 025003 (10pp).


\bibitem{IJZ12r} \newblock K.~Ito, B.~Jin, and J.~Zou,
               \newblock \emph{A two-stage method
for inverse medium scattering},
               \newblock Technical
Report 2012-09 (398), Department of Mathematics, Chinese University
of Hong Kong, 2012. Also available at
http://arxiv.org/abs/1205.4277; arXiv:1205.4277v1.
%

\bibitem{Kir98} \newblock A.~Kirsch,
               \newblock \emph{Characterization of the shape of a scattering
obstacle using the spectral data of the far field operator},
               \newblock Inverse Problems, \textbf{14} (1998), pp.~1489--1511.


\bibitem{KiG10} \newblock A.~Kirsch and N.~Grinberg,
              \newblock ``The Factorization Method for
Inverse
  Problems,"
              \newblock  New York : Oxford University Press, 2008.


\bibitem{KOVW10} \newblock R.~Kohn, D.~Onofrei, M.~Vogelius, and M.~Weinstein,
               \newblock \emph{Cloaking via
  change of variables for the helmholtz equation},
               \newblock Comm. Pure Appl. Math., \textbf{63} (2010), pp.~973--1016.


\bibitem{LLS12} \newblock J.~Li, H.~Liu, and H.~Sun,
               \newblock \emph{Enhanced approximate cloaking
by SH and FSH lining},
               \newblock Inverse Problems, \textbf{28} (2012), 075011 (21pp).


\bibitem{LS12} \newblock H.~Liu, and H.~Sun,
               \newblock \emph{Enhanced near-cloak by FSH lining},
               \newblock Journal de Mathematiques Pures et Appliquees, to
 appear, arXiv:1110.0752v2.
 
 
  
\bibitem{LYZ07} \newblock H.\,Liu, M.\,Yamamoto and J.\,Zou,
               \newblock \emph{Reflection principle for
the maxwell equations and its application to inverse electromagnetic scattering},
               \newblock Inverse Problems, \textbf{23} (2007), pp.~2357--2366.

  
  \bibitem{LZZ09} \newblock H.~Liu, H.~Zhang, and J.~Zou,
               \newblock \emph{Recovery of polyhedral
scatterers by a single electromagnetic far-field measurement},
               \newblock J. Math. Phy., \textbf{50} (2009), 123506 (10pp).

  
    \bibitem{LiZ06} \newblock H.~Liu and J.~Zou,
               \newblock \emph{Uniqueness in an inverse acoustic
obstacle
  scattering problem for both sound-hard and sound-soft polyhedral scatterers},
               \newblock   Inverse Problems, \textbf{22} (2006), pp.~515--524.


\bibitem{Mul97} \newblock C.~M{\" u}ller,
              \newblock ``Analysis of spherial symmetries in Eulidean spaes,"
              \newblock  Springer, New York, 1997.


\bibitem{Pot06} \newblock R.~Potthast,
               \newblock \emph{A survey on sampling and probe methods for
inverse
  problems},
               \newblock Inverse Problems, \textbf{22} (2006), pp.~R1--R47.
               

\bibitem{Pot10} \newblock R.~Potthast,
               \newblock \emph{A study on orthogonality sampling},
               \newblock Inverse Problems, \textbf{26} (2010), 074015 (17pp).
               

\bibitem{Sch86} \newblock R.~Schmidt,
               \newblock \emph{Multiple emitter location and signal parameter  estimation},
               \newblock IEEE Trans. Antennas Propag., \textbf{34} (1986), pp.~276--280.


\bibitem{Som06} \newblock C.~G.~Someda,
              \newblock ``Electromagnetic Waves,"
              \newblock  Boca Raton, FL: CRC Press,  2~ed., 2006.


\bibitem{BBA99} \newblock P.~M. van~den Berg, A.~L. van Broekhoven, and A.~Abubakar,
               \newblock \emph{Extended contrast source inversion},
               \newblock Inverse Problems, \textbf{15} (1999), pp.~1325--1343.

\bibitem{Wat44} \newblock G.~N.~Watson,
              \newblock ``A Treatise on the Theory of Bessel
Functions,"
              \newblock  CambridgeUniversityPress, Cambridge, 2~ed., 1944.

\end{thebibliography}
\end{document}